\newtheorem{proposition}{Proposition}[section]
\DeclareMathAlphabet{\mathcalligra}{T1}{calligra}{m}{n}
\newcommand*{\defeq}{\stackrel{\text{def}}{=}}
\begin{document}
	\title{Unified gas-kinetic wave-particle methods V: diatomic molecular flow }
	\author[ad1]{Xiaocong Xu}
	\ead{xxuay@connect.ust.hk}
	\author[ad1]{Yipei Chen}
	\ead{ychendh@connect.ust.hk}
	\author[ad1]{Chang Liu}
	\ead{cliuaa@connect.ust.hk}
	\author[ad3]{Zhihui Li}
	\ead{zhli0097@x263.net}
	\author[ad1,ad2]{Kun Xu\corref{cor1}}
	\ead{makxu@ust.hk}
	\address[ad1]{Department of Mathematics, Hong Kong University of Science and Technology, Hong Kong}
\address[ad3]{National Laboratory for Computational Fluid Dynamics, Beijing University of Aeronautics and Astronautics, Beijing 100191, China}
\address[ad2]{Shenzhen Research Institute, Hong Kong University of Science and Technology, Shenzhen, China}
	\cortext[cor1]{Corresponding author}
	
	\begin{abstract}
In this paper, the unified gas-kinetic wave-particle (UGKWP) method is further developed for diatomic gas with the energy exchange between  translational and rotational modes for flow study in all regimes.
The multiscale transport mechanism in UGKWP is coming from the direct modeling in a discretized space, where the cell's Knudsen number,  defined by the ratio of particle mean free path over the numerical cell size, determines the flow physics simulated by the wave particle formulation.
 The non-equilibrium distribution function in UGKWP is tracked by the discrete particle and analytical wave. The weights of distributed particle and wave in different regimes are controlled by the accumulating evolution solution of particle transport and collision within a time step,
 where distinguishable macroscopic flow variables of  particle and wave are updated inside each control volume.
 With the variation of local cell's Knudsen number, the UGKWP becomes a particle method in the highly rarefied flow regime and converges to the gas-kinetic scheme (GKS) for the Navier-Stokes solution in the continuum flow regime without particles.
Even targeting on the same solution as the discrete velocity method (DVM)-based unified gas-kinetic scheme (UGKS), the computational cost and memory requirement in UGKWP could be reduced by several orders of magnitude for the high speed and high temperature flow simulation, where the translational and rotational non-equilibrium becomes important in the transition and rarefied regime.
As a result,  3D hypersonic computations around a flying vehicle in all regimes can be conducted using a personal computer. The UGKWP method for diatomic gas will be validated in various cases from one dimensional shock structure to three dimensional flow over a sphere, and the numerical solutions will be compared with the reference DSMC results and experimental measurements.
	\end{abstract}
	
	\begin{keyword}
	Unified Gas-kinetic Wave-particle Method, Multiscale Transport, Diatomic Gas, Hypersonic Non-equilibrium Flow
	\end{keyword}
	\maketitle

	\section{Introduction}
Kinetic equations describe the gas flow by modeling the evolution of the gas distribution function, or called probability density function (PDF).  The Boltzmann equation is constructed on the particle mean free path and mean collision time scale. In order to simplify the collisional operator of the Boltzmann equation, many relaxation models, such as the Bhatnagar-Gross-Krook (BGK) model \cite{BGK1954}, the ellipsoidal statistical BGK (ES-BGK) model \cite{holway1966new}, and the Shakhov BGK (S-BGK) model \cite{shakhov1968generalization}, have been developed and used in theoretical analysis and engineering applications. Theoretically, these kinetic equations are valid in all flow regimes, but with a resolved dynamics on the kinetic scale.
The numerical methods for solving kinetic equations can be classified into the stochastic methods and the deterministic methods.

	For the stochastic methods, the discrete particles are employed to simulate the evolution of the PDF. This kind of Lagrangian-type scheme can easily keep the positivity and conservation properties with super stability.
The direct simulation Monte Carlo (DSMC) method \cite{bird1963approach} is one of the most representative stochastic methods. By modelling the particle transport and collision separately, DSMC achieves great success in the simulation of high speed and rarefied non-equilibrium gas flow. However, the DSMC method will suffer from the statistical noise in the low speed flow simulation.   Meanwhile, due to the splitting treatment of particle transport and collision, the cell size and time step are restricted to be less than the particle mean free path and collision time.
Under this circumstance, the computational cost for the near continuum flow will increase rapidly.
	For the deterministic approaches, the most popular methods are the so-called  discrete velocity method (DVM) for the Boltzmann and kinetic equations \cite{chu1965,JCHuang1995,Mieussens2000,tcheremissine2005direct,Kolobov2007,LiZhiHui2009,ugks2010,wu2015fast,aristov2012direct}. When the Knudsen number is small, the collision operator in the kinetic equation becomes stiff, which will strongly restrict the time step. To improve the computational efficiency, the asymptotic preserving (AP) schemes  have been proposed and developed \cite{jin1999efficient}. Based on the DVM framework, many kinetic solvers have been developed for diatomic gas as well \cite{dubroca2001conservative,chen2008computation}.
	
	Following the direct modelling methodology \cite{xu-book}, an effective multiscale unified gas-kinetic scheme (UGKS) has been
	proposed for both monatomic and diatomic gases for the flow study in all regimes  \cite{ugks2010,huang2012,liu2016,wang2017unified,liu2014unified}.
Under the UGKS framework, the numerical flux is constructed from the integral solution of the kinetic model, where the effect of particle transport and collision is accumulated in a time step to identify the flow regime.
The UGKS has an asymptotic limit to the Navier-Stokes (NS) equations in the continuum flow regime
without kinetic scale restriction on time step and cell size, which has the so-called unified preserving (UP) property \cite{guo}.
Moreover, the implicit and multi-grid techniques have also been incorporated into the UGKS \cite{zhu2016implicit,zhu2017implicit,zhu2018implicit} to improve the computational efficiency. Recently, the unified gas-kinetic wave-particle (UGKWP) method  \cite{liu2020unified, zhu2019unified} based  on the BGK model has been developed for monatomic gas. The essential idea of UGKWP is to use stochastic particles to replace the discretization of
particle velocity space. In UGKWP, the gas particles are divided into hydro-particle, collisional particle, and collisionless particle.
	The hydro-particle is described by the analytic PDF, while the collisional and collisionless particles are described by the simulation particles.
In UGKWP, the macroscopic flow variables will be updated under the finite volume framework, where both analytical PDF and simulating particles
will contribute to the cell interface flux.
	One of the distinguishable features of UGKWP is that the dynamic evolution of hydro-particle can be described analytically and the computational cost of the flux from the hydro-particles is comparable to the hydrodynamic solver. The proportion of three kinds of particles varies dynamically in different flow regimes. Physically, the collisionless particles are mainly used for the description of non-equilibrium transport and the hydro-particles for the equilibrium one. The dynamic evolution among three kinds of particles is coupled with the variation of flow regime.
In the continuum flow regime, the number of collisional and collisionless particles will be greatly reduced and the UGKWP method automatically converges to the gas-kinetic scheme (GKS) for the Navier-Stokes solutions \cite{xu2001}, which has the similar efficiency as a conventional NS solver. For the simulation of hypersonic flow, the UGKWP method will be much more efficient than the original DVM-based UGKS due to the use of simulation particles instead of discretizing the particle velocity space. In summary, the computational cost of the UGKWP method is similar to the particle methods in rarefied regime and becomes the hydrodynamic flow solver in continuum regime.
The UGKWP is also extended to other multiscale transport, such as radiation and plasma \cite{li2020unified,liu2020plasma}.

This paper is about to construct the UGKWP method for diatomic gas. The Rykov kinetic model will be used in the construction of the evolution solution of the gas distribution function, which controls the distribution of particle and wave and the rate of energy exchange between translational and rotational degrees of freedom. A simple and efficient way to set up the correct transport coefficients is presented in this paper.
A weighted method is applied to sample the particles from a modified distribution function of the Rykov model. The overall UGKWP method for diatomic gas is very efficient and has excellent performance for high speed flow simulation with the translational and rotational non-equilibrium.
	
	The rest of the paper is organised as follows.
	In Section \ref{method}, the unified gas-kinetic particle method (UGKP) for diatomic gas will be introduced first. Then, based on the analytical formulation of the hydro-particle, the unified gas-kinetic wave-particle method will be presented, which is an improved version of UGKP.
	The asymptotic preserving property of the UGKWP method for diatomic gas in the continuum regime will be introduced in section \ref{discussion}.
	 Section \ref{numericaltest} includes various numerical tests to validate the new scheme.
	Section \ref{conclusion} is the conclusion.
	
	\section{Unified gas-kinetic wave-particle method for diatomic gas}\label{method}
	\subsection{The Rykov kinetic model for diatomic gas}
	Diatomic gas is associated with translational, rotational, and vibrational modes. For relatively low temperature flow, there are two rotational degrees of freedom while the vibrational degrees of freedom get basically frozen. In this paper, the diatomic gas with translational and rotational degrees of freedom will be considered. For nitrogen gas, the molecule has three translational degrees of freedom and two rotational ones.
The state of the gas can be described by the particle velocity distribution function $f(\vec{x},t,\vec{v}, \vec{\xi})$, where $\vec{x}$ is the spatial coordinate, $t$ is the time, $\vec{v}$ is the molecular translational velocity, and $\vec{\xi}$ is the rotational variable. The relations between distribution function and macro-variables are defined as
	\begin{equation*}
		\vec{W} = \int \vec{\psi} f d\Xi,
	\end{equation*}
	where $\vec{\psi}=\left(1,\vec{v},\frac12(\vec{v}^2+\vec{\xi}^2),\frac12\vec{\xi}^2 \right)$ is the vector for the moments of distribution function and  $\vec{W}=(\rho,\rho\vec{U},\rho E, \rho E_{rot})$ is the macroscopic variables with $\rho E_{rot}$ as the rotational energy density. The stress tensor $\mathbf{P}$ and the heat fluxes $\vec{q}_t$ and $\vec{q}_r$ produced by the transfer of translational and rotational energies, can be calculated by $f$ as,
	\begin{align*}
		\mathbf{P} &= \int \vec{c} \vec{c} f d\Xi , \\
		\vec{q}_t &= \frac{1}{2} \int \vec{c} \vec{c}^2 f d \Xi , \\
		\vec{q}_r &= \frac{1}{2} \int \vec{c} \vec{\xi}^2 f d \Xi ,
	\end{align*}
	where $\vec{c} = \vec{v} - \vec{U}$ is the peculiar velocity.
The total heat flux $\vec{q}$ is the sum of  $\vec{q}_t$ and $\vec{q}_r$, such as
	\begin{equation*}
		\vec{q} = \frac{1}{2} \int \vec{c} \left( \vec{c}^2 + \vec{\xi}^2 \right) f d \Xi = \vec{q}_t + \vec{q}_r .
	\end{equation*}
	The evolution of the diatomic gas distribution function $f$ is governed by the Rykov kinetic model equation,
	\begin{equation}\label{Rykov}
		\frac{\partial f}{\partial t}+ \vec{v}\cdot\nabla_{\vec{x}} f = \frac{\tilde{M}_t-f}{\tau} +\frac{\tilde{M}_{eq}-\tilde{M}_t}{Z_{rot}\tau},
	\end{equation}
	where the collision operator on the right-hand side describes the elastic and inelastic collisions. The elastic collision conserves the translational energy, while the inelastic collision exchanges the translational and rotational energy. $\tilde{M}_t$ is the modified equilibrium distribution function for the elastic collision while $\tilde{M}_{eq}$ is the modified equilibrium state for inelastic collision, which are expressed as,
	\begin{align*}
		\begin{aligned}
			\tilde{M}_t &= M_t + M_t^{+} , \\
			M_t &= \rho \left( \frac{\lambda_t}{\pi} \right)^{\frac{3}{2}}e^{-\lambda_t\vec{c}^2} \frac{\lambda_r}{\pi} e^{-\lambda_r \vec{\xi}^2} , \\
			M_t^{+} &= M_t \left(\frac{4\lambda_t^2\vec{q}_t \cdot \vec{c}}{15\rho} \left( 2\lambda_t \vec{c}^2 - 5\right) + \frac{4(1 - \sigma)\lambda_t \lambda_r \vec{q}_r \cdot \vec{c}}{\rho} \left( \lambda_r \vec{\xi}^2 - 1\right)    \right),
		\end{aligned}
	\end{align*}
	and
	\begin{align*}
		\begin{aligned}
			\tilde{M}_{eq} &= M_{eq} + M_{eq}^{+} ,\\
			M_{eq} &= \rho \left( \frac{\lambda_{eq}}{\pi} \right)^{\frac{3}{2}}e^{-\lambda_{eq}\vec{c}^2} \frac{\lambda_{eq}}{\pi} 	e^{-\lambda_{eq} \vec{\xi}^2} ,\\
			M_{eq}^{+} &= M_{eq} \left(\omega_0\frac{4\lambda_{eq}^2\vec{q}_t \cdot \vec{c}}{15\rho} \left( 2\lambda_{eq} \vec{c}^2 - 5\right) + \omega_1\frac{4(1 - \sigma)\lambda_{eq}^2 \vec{q}_r \cdot \vec{c}}{\rho} \left( \lambda_{eq} \vec{\xi}^2 - 1\right)    \right) ,
		\end{aligned}
	\end{align*}
	where $\lambda_{t,r,eq} = 1/ (2RT_{t,r,eq})$, $\tau = \mu(T_t) / (\rho RT_t)$ is the translational collision time with $\mu(T_t)$ as the dynamic viscosity coefficient and 
the subscript $t,r,qe$ in $T_{t,r,eq}$ represent translational, rotational and equilibrium temperature, respectively.
 $Z_{rot}$ is the rotational relaxation collision number which is related to the ratio of elastic collision frequency to inelastic collision frequency. The parameter $\sigma$ depends on the molecular potential, and $\omega_0$ and $\omega_1$ are set to have proper relaxation of heat flux. In this paper, these coefficients adopt the values $\sigma = 1 / 1.55$, $\omega_0 = 0.2354$, $\omega_1 = 0.3049$ for nitrogen\cite{liu2014unified}.
	
	\subsection{Unified gas-kinetic particle method}
	\subsubsection{General framework of the unified gas-kinetic particle method}\label{section210}
	Re-write the Rykov kinetic model equation in a more convenient form,
	\begin{equation}\label{simplefiedRykov}
		\frac{\partial f}{\partial t}+ \vec{v}\cdot\nabla_{\vec{x}} f = \frac{M^{*}-f}{\tau},
	\end{equation}
	where $M^*$ is defined as,
	\begin{equation}\label{newdist}
		M^{*} = \tilde{M}_t + \frac{\tilde{M}_{eq} - \tilde{M}_t}{Z_{rot}}
	\end{equation}
	Now the Rykov kinetic model has the BGK-type form with a different equilibrium distribution function. With a local constant collision time $\tau$, the integral solution of Eq.\eqref{simplefiedRykov} can be written as,
	\begin{equation}\label{integral-solution}
		f(\vec{x},t,\vec{v},\vec{\xi})=\frac{1}{\tau}\int_{0}^t e^{-(t-t')/\tau} M^{*}(\vec{x}',t',\vec{v},\vec{\xi}) dt'+e^{-t/\tau}f_0(\vec{x}-\vec{v}t),
	\end{equation}
	where $f_0$ is the initial distribution function at $t=0$ and $M^*$ is denfined as the convex combination of two modified equilibrium distribution function in Eq.\eqref{newdist}.
	The equilibrium distribution is integrated along the characteristics $\vec{x}'=\vec{x}+\vec{v}(t'-t)$.
	
	The UGKP is constructed on a discretized physical space $\sum_i \Omega_i\subset\mathcal{R}^3$ and discretized time $t^n\in\mathcal{R}^+$.
	The cell averaged conservative variables $\vec{W}_i=(\rho_i,\rho_i\vec{U}_i,\rho_iE_i,\rho_i{E_{rot,}}_i)$ on a physical cell $\Omega_i$ are defined as
	\begin{equation*}
		\vec{W}_i=\frac{1}{|\Omega_i|}\int_{\Omega_i}\vec{W}(\vec{x}) d\vec{x}.
	\end{equation*}
	The cell averaged macroscopic variables $\vec{W}_i$ are evolved by the macroscopic governing equations which can be obtained by taking moments of Eq.\eqref{simplefiedRykov}
	\begin{equation}\label{update-w}
		\vec{W}_i^{n+1}=\vec{W}_i^{n}-\frac{\Delta t}{|\Omega_i|} \sum_{l_s\in\partial\Omega_i} |l_s|\vec{F}_{s} + \vec{S}_i,
	\end{equation}
	where $l_s \in\partial\Omega_i$ is the cell interface with center $\vec{x}_s$ and outer unit normal vector $\vec{n}_s$. The flux function for the macroscopic variables at the cell interfaces are constructed by Eq. (\ref{integral-solution}),
	\begin{equation}\label{flux-w-ugks}
		\begin{aligned}
			\vec{F}_{s} &= \frac{1}{\Delta t}\int_{0}^{\Delta t}\int f(\vec{x}_s,t,\vec{v},\vec{\xi})\vec{v}\cdot \vec{n}_s \vec{\psi} d\Xi dt \\
			&=\frac{1}{\Delta t}\int_{0}^{\Delta t}\int\bigg[\frac1\tau\int_0^{t} e^{(t'-t)/\tau}M^*(\vec{x}'_s,t',\vec{v},\vec{\xi})dt'+
			e^{-t/\tau}f_0(\vec{x}_s-\vec{v}t)\bigg] \vec{v}\cdot \vec{n}_s \vec{\psi} d\Xi dt,
		\end{aligned}
	\end{equation}
	with the characteristics $\vec{x}'_s=\vec{x}_s+\vec{v}(t'-t)$.
	The equilibrium flux terms related to the Maxwellian distribution are denoted as $\vec{F}_{eq,s}$,
	\begin{equation}\label{Fg}
		\vec{F}_{eq,s}\defeq\frac{1}{\Delta t}\int_{0}^{\Delta t}\int\frac1\tau\int_0^{t} e^{(t'-t)/\tau}M^*(\vec{x}'_s,t',\vec{v}, \vec{\xi})dt' \vec{v}\cdot\vec{n}_s\vec{\psi} d\Xi dt,
	\end{equation}
	and the flux terms related to the initial distribution are $\vec{F}_{fr,s}$,
	\begin{equation}\label{Ff}
		\vec{F}_{fr,s}\defeq\frac{1}{\Delta t}\int_{0}^{\Delta t}\int e^{-t/\tau} f_0(\vec{x}_s-\vec{v}t) \vec{v}\cdot \vec{n}_s\vec{\psi} d\Xi dt.
	\end{equation}
	The source term $\vec{S}$ is
	\begin{equation*}
		\vec{S} = \int_{0}^{\Delta t } \int \frac{M^* - f}{\tau} \vec{\psi} d\Xi dt =  \int_{t^n}^{t^{n+1}} \vec{s} dt,
	\end{equation*}
	where $\vec{s}$ can be expressed as
	\begin{align*}
		\vec{s}=\left(0,0,0,0,\frac{\rho E_{rot}^{eq} - \rho E_{rot}}{Z_{rot}\tau}\right )^T.
	\end{align*}
	The equilibrium rotational energy $\rho E_{rot}^{eq}$ is determined under the assumption $T_r = T_t = T_{eq}$ such that
	\begin{equation}\label{equilibriumEr}
		\rho E_{rot}^{eq}  = \frac{\rho}{2\lambda_{eq}} \quad \text{and} \quad \lambda_{eq} = \frac{K_r + 3}{4} \frac{\rho}{\rho E - \frac12\rho(U^2 + V^2 + W^2)}.
	\end{equation}
	Here $K_r$ is the rotational degrees of freedom.
	
	In the UGKP method, the equilibrium flux term $\vec{F}_{eq,s}$ can be calculated analytically and the free streaming flux term  $\vec{F}_{fr,s}$ is calculated by the simulating particles. Specifically, the updates of macroscopic variables will become,
	\begin{equation*}
		\vec{W}^{n+1}_i=\vec{W}^n_i - \frac{\Delta t}{|\Omega_i|} \sum_{l_s\in\partial \Omega_i}|l_s| \vec{F}_{eq,s} +  \frac{\Delta t}{|\Omega_i|}\vec{W}_{fr,i} + \vec{S}_i.
	\end{equation*}
	where $\vec{W}_{fr,i}$	is the net free streaming flow of cell $i$ calculated by counting the particles passing through the cell interface during a time step. The detailed calculation method for $\vec{F}_{eq,s}$, $\vec{W}_{fr,i}$ and the update with source term $\vec{S}_i$ will be given in section \ref{section222}, section \ref{section223} and section \ref{update}, respectively.
	
	The particle dynamics in the UGKP method is constructed based on the Rykov kinetic model equation.
	The main idea of the UGKP method is to track particle trajectory until the collision happens.
	Once the particle collides with other particles, it will be merged into the macroscopic flow quantities,
	and get re-sampled from the updated macroscopic flow variables at the beginning of the next time step.
	
	\subsubsection{The construction of equilibrium flux}\label{section222}
	In this subsection, the construction of the equilibrium flux $\vec{F}_{eq,s}$ will be presented.
Recall that
	\begin{equation}\label{recallM}
		\begin{aligned}
			M^{*} &= \tilde{M}_t + \frac{\tilde{M}_{eq} - \tilde{M}_t}{Z_{rot}} \\
			&= M_t + \frac{{M}_{eq} - {M}_t}{Z_{rot}} + M_q,
		\end{aligned}
	\end{equation}
and
	\begin{equation*}
		M_q =  \frac{Z_{rot} - 1}{Z_{rot}}M_t^{+}  + \frac{1}{Z_{rot}}M_{eq}^+
	\end{equation*}
	as  the correction term for the heat flux.
	The leading order approximation \cite{xu2008multiple} gives,
	\begin{equation}\label{TdifferenceOrder}
		T_t - T_r = -\frac{2}{3} Z_{rot} \tau T_{eq} \nabla_{\vec{x}} \cdot \vec{U},
	\end{equation}
	from which $|T_{eq} - T_t| \sim O(\tau)$ can be estimated.
	The linearized Maxwell distribution function $M_t$ around the equilibrium temperature $T_{eq}$ is
	\begin{equation}
		M_t = M_{eq} + \frac{T_t - T_{eq}}{T_{eq}}M_{eq}\left[ \left(\frac{\vec{c}^2}{2RT_{eq}} - \frac{3}{2} \right) - \frac{3}{K_r} \left( \frac{\vec{\xi}^2}{2RT_{eq}} - \frac{K_r}{2} \right)  \right] + O(|T_{eq} - T_t|^2).
	\end{equation}
 	Therefore, the second term $({M}_{eq} - {M}_t)/Z_{rot}$ in Eq. (\ref{recallM}) is of order $\tau$. Since $M_q$ is also a high order term, only the first term of $M^*$, i.e., $M_t$, is expanded in the calculation of the equilibrium flux. The Maxwellian distribution $M_t$ is expanded around $\vec{x}_0$ as
	\begin{equation}\label{g-expansion1}
		\begin{aligned}
			M_t(\vec{x},t,\vec{v},\vec{\xi})=&M_t(\vec{x}_s,t^n,\vec{v},\vec{\xi})
			+(1-H[\bar{x}])\frac{\partial^l}{\partial x} M_t(\vec{x}_s,t^n,\vec{v},\vec{\xi})\bar{x}
			+H[\bar{x}]\frac{\partial^r}{\partial x}  M_t(\vec{x}_s,t^n,\vec{v},\vec{\xi})\bar{x}\\
			&+\frac{\partial}{\partial y}M_t(\vec{x}_s,t^n,\vec{v},\vec{\xi})\bar{y}
			+\frac{\partial}{\partial z}M_t(\vec{x}_s,t^n,\vec{v},\vec{\xi})\bar{z}
			+\frac{\partial}{\partial t}M_t(\vec{x}_s,t^n,\vec{v},\vec{\xi})(t-t^n) \\
			=&M_t(\vec{x}_s,t^n,\vec{v},\vec{\xi})\left[1+(1-H[\bar{x}])a^l\bar{x}+H[\bar{x}]a^r\bar{x}+b\bar{y}+c\bar{z}+A(t-t^n)\right],
		\end{aligned}
	\end{equation}
	where  $\bar{x}=x - x_s$, $\bar{y}=y - y_s$, and $\bar{z}=z - z_s$.
	The derivative functions of $M_t$, denoted as $a^l$, $a^r$, $b$, $c$, and $A$ have the following form
	\begin{align*}
		a^l&=a^l_1+a^l_2u+a^l_3v+a^l_4w+\frac12a^l_5\vec{v}^2 +\frac12 a^l_6\vec{\xi}^2 ,\\
		a^r&=a^r_1+a^r_2u+a^r_3v+a^r_4w+\frac12a^r_5\vec{v}^2 + \frac12a^r_6\vec{\xi}^2,\\
		&\qquad\qquad\qquad...\\
		A&=A_1+A_2u+A_3v+A_4w+\frac12A_5\vec{v}^2 + \frac12 A_6\vec{\xi}^2.
	\end{align*}
	The Heaviside function $H[x]$ is defined by
	\begin{equation}\nonumber
		H[x]=\left\{
		\begin{aligned}
		1 \quad x>0,\\
		0 \quad x\le0.
		\end{aligned}\right.
	\end{equation}
	The Maxwellian at $\vec{x}_0$ and its derivative functions can be obtained from the reconstructed macroscopic variables.
	In this paper, the van Leer limiter is used for reconstruction,
	\begin{equation}
		s=(\text{sign}(s_l)+\text{sign}(s_r))\frac{|s_l||s_r|}{|s_l|+|s_r|},
	\end{equation}
	where $s$, $s_l$, and $s_r$ are the slopes of macroscopic variables.
	The Maxwellian distribution $M_t$ at cell interface can be obtained from the macroscopic flow variables, which are evaluated by
	\begin{equation}
		\vec{W}_{s}=\int \vec{\psi} \left( M_t ^l H[\bar{u}]+M_t ^r(1-H[\bar{u}])\right)d\Xi,
	\end{equation}
	where $\bar{u}=\vec{u}\cdot\vec{n}_s$.
	
	The derivative functions $a^l,a^r,b,c,A$ are calculated from the spatial and time derivatives of $M_t$. Taking $a$ as an example,
	\begin{align*}
		a&=\frac{1}{M_t}\bigg(\frac{\partial M_t}{\partial x}\bigg),
	\end{align*}
	and
	\begin{align*}
		&a_6=4\frac{\lambda_r^2}{K_r}\left(2\frac{\partial \rho E_{rot}}{\partial x} - \frac12 \frac{K_r}{\lambda_r}\frac{\partial \rho}{\partial x} \right),\\
		&a_5 = \frac{4\lambda_t^2}{3}\left( B-2{U}R_1-2VR_2 -2WR_3\right),\\
		&{a}_{4} = 2\lambda_t R_3 - a_5 W,\\
		&{a}_{3} = 2\lambda_t R_2 - a_5 V,\\
		&{a}_{2} = 2\lambda_t R_1 - a_5 U,\\
		&a_1= \frac{\partial \rho}{\partial x} - {a}_2{U}- {a}_3{V}- {a}_4{W} - \frac12 a_5 (\vec{U}^2 + \frac{3}{2\lambda_t}) - \frac12 a_6\frac{K_r}{2\lambda_r},
	\end{align*}
	with the defined variables
	\begin{align*}
	\begin{aligned}
		B &= 2 \frac{\partial (\rho E - \rho E_{rot})}{\partial x} - (\vec{U}^2 + \frac{3}{2\lambda_t})\frac{\partial \rho}{\partial x} ,\\
		R_1 &= \frac{\partial \rho U}{\partial x}  - U \frac{\partial \rho}{\partial x} ,\\
		R_2 &= \frac{\partial \rho V}{\partial x} - V \frac{\partial \rho}{\partial x} ,\\
		R_3 &= \frac{\partial \rho W}{\partial x} -  W \frac{\partial \rho}{\partial x} ,\\
		\end{aligned}
	\end{align*}
	where the derivatives of macroscopic quantities are evaluated at $(\vec{x}_s,t^n)$.
The time derivatives of macroscopic variables are determined by the conservative requirements on the first order Chapman-Enskog expansion \cite{chapman1990mathematical},
	\begin{equation*}
		\left(\frac{\partial\vec{W}_s}{\partial t}\right)=-\int \left(a^l\bar{u}H[\bar{u}]+a^r\bar{u}(1-H[\bar{u}])+b\bar{v}+c\bar{w}\right)M_t \vec{\psi} d\Xi.
	\end{equation*}
	Once the Maxwellian distribution at cell interface and its derivative functions are determined, the equilibrium flux function Eq.\eqref{Fg} can be obtained using the expansion Eq.\eqref{g-expansion1} for the interface distribution function, which gives
	\begin{equation}\label{Fg1}
		\begin{aligned}
			\vec{F}_{eq,s}=\int \vec{v}\cdot \vec{n}_s \vec{\psi}\bigg\{ &C_1M^*(\vec{x}_s,t^n,\vec{v},\vec{\xi})
			+C_2\left[a^lH[\bar{u}]+a^r(1-H[\bar{u}])\right]\bar{u}  M_t(\vec{x}_s,t^n,\vec{v},\vec{\xi})\\
			+&C_2(b\bar{v}+c\bar{w})M_t(\vec{x}_s,t^n,\vec{v},\vec{\xi})
			+C_3AM_t(\vec{x}_s,t^n,\vec{v},\vec{\xi})\bigg\} d\Xi,
		\end{aligned}
	\end{equation}
	where the time integration related coefficients are
	\begin{equation*}
		\begin{aligned}
			C_1 &= 1 - \frac{\tau}{\Delta t} \left( 1 - e^{-\Delta t / \tau} \right) , \\
			C_2 &= -\tau + \frac{2\tau^2}{\Delta t} - e^{-\Delta t / \tau} \left( \frac{2\tau^2}{\Delta t} + \tau\right) ,\\
			C_3 &=  \frac12 \Delta t - \tau + \frac{\tau^2}{\Delta t} \left( 1 - e^{-\Delta t / \tau} \right) .
		\end{aligned}
	\end{equation*}
	Theoretically, the modified term $M_q$ in $M^*$ only contributes to the heat conduction coefficient in the energy flux. By following the treatment in \cite{may2007improved}, the calculation of equilibrium flux can be simplified. We can ignore the modified terms and correct the heat flux by modifying the derivatives of temperature.
	Based on the Chapman-Enskog expansion and the linearisation around translation temperature $T_t$ \cite{rykov1978macroscopic},  the heat fluxes $\vec{q}_t$ and $\vec{q}_r$ become
	\begin{equation}\label{heatflux}
		\begin{aligned}
			\vec{q}_t &= -\frac{15R}{4} \mu(T_t)(1 + 0.5(1- \omega _0) Z_{rot}^{-1})^{-1} \nabla_{\vec{x}}  T_t ,\\
			\vec{q}_r &= -R \mu(T_t)(\sigma + (1- \sigma )(1 - \omega _1) Z_{rot}^{-1})^{-1} \nabla_{\vec{x}} T_r ,
		\end{aligned}
	\end{equation}
	where $\mu(T_t) = \tau p_t$ and the pressure $p_t$ is related to the translational temperature only through $p_t = \rho R T_t$.
We can modify the computed coefficients in the expansion of Maxwellian to get the above heat fluxes by re-scaling the translational and rotational temperature gradients, such as changing $a_5 = 2\partial _x \lambda_t$ and $a_6 = 2 \partial_x \lambda_r$ to
	\begin{align*}
		\tilde{a}_6 &=  \frac{3}{2}(\sigma + (1- \sigma )(1 - \omega _1) Z_{rot}^{-1})^{-1}a_6 \\
		\tilde{a}_5 &=  (1 + 0.5(1- \omega _0) Z_{rot}^{-1})^{-1}a_5.
	\end{align*}
	Thus, only few additional floating point operations are needed for each spatial slope reconstruction to correct the heat flux, and the final form of $\vec{F}_{eq,s}$ becomes
	\begin{equation}\label{eqflux}
		\begin{aligned}
			\vec{F}_{eq,s}=\int \vec{v}\cdot \vec{n}_s \vec{\psi}\bigg\{ &C_1\left( M_t(\vec{x}_s,t^n,\vec{v},\vec{\xi}) + \frac{M_{eq}(\vec{x}_s,t^n,\vec{v},\vec{\xi}) - M_t(\vec{x}_s,t^n,\vec{v},\vec{\xi})}{Z_{rot}} \right) \\
			+&C_2\left[a^lH[\bar{u}]+a^r(1-H[\bar{u}])\right]\bar{u}  M_t(\vec{x}_s,t^n,\vec{v},\vec{\xi})\\
			+&C_2(b\bar{v}+c\bar{w})M_t(\vec{x}_s,t^n,\vec{v},\vec{\xi})
			+C_3AM_t(\vec{x}_s,t^n,\vec{v},\vec{\xi})\bigg\} d\Xi,
		\end{aligned}
	\end{equation}
with the above scaled coefficients.

	\subsubsection{The evolution of particles}\label{section223}
	The simulation particle $P_k(m_k,\vec{x}_k,\vec{v}_k, e_k, t_{f,k}, \omega_k, \kappa_k)$ is represented by its mass $m_k$,
	position coordinate $\vec{x}_k$, velocity coordinate $\vec{v}_k$,  free streaming time $t_{f,k}$ and internal energy $e_k$. $\omega_k$ and $\kappa_k$ are the weights coming from the Rykov kinetic model.
	Recall that the evolution of particles follows the integral form of the Rykov model,
	\begin{equation}\label{integral-solution1}
		f(\vec{x},t,\vec{v},\vec{\xi})=\frac{1}{\tau}\int_{0}^t e^{-(t-t')/\tau} M^{*}(\vec{x}',t',\vec{v},\vec{\xi}) dt'+e^{-t/\tau}f_0(\vec{x}-\vec{v}t).
	\end{equation}
	The Maxwellian distribution function  $M^*$ around the interface can be expanded as
	\begin{equation*}
		M^*(\vec{x}',t',\vec{v},\vec{\xi})=M^*(\vec{x},t,\vec{v},\vec{\xi})+\nabla_{\vec{x}} M^*(\vec{x},t,\vec{v},\vec{\xi})\cdot(\vec{x}'-\vec{x})+\partial_t M^*(\vec{x},t,\vec{v},\vec{\xi})t'+O((\vec{x}'-\vec{x})^2,t'^2).
	\end{equation*}
	The integral solution becomes
	\begin{equation}\label{particle}
		f(\vec{x},t,\vec{v},\vec{\xi})=(1-e^{-t/\tau})M^+(\vec{x},t,\vec{v},\vec{\xi})+e^{-t/\tau}f_0(\vec{x}-\vec{v}t).
	\end{equation}
	A first order approximation of $M^+$ can be expressed as
	\begin{equation}\label{1st-particle}
		M^+(\vec{x},t,\vec{v},\vec{\xi})={M}^*(\vec{x},t,\vec{v},\vec{\xi}),
	\end{equation}
	and the second order expansion gives
	\begin{equation}\label{2nd-particle}
		M^+(\vec{x},t,\vec{v},\vec{\xi})=M^*(\vec{x},t,\vec{v},\vec{\xi})		+\frac{e^{-t/\tau}(t+\tau)-\tau}{1-e^{-t/\tau}}(\partial_tM^*(\vec{x},t,\vec{v},\vec{\xi})+\vec{v}\cdot\nabla_{\vec{x}}M^*(\vec{x},t,\vec{v},\vec{\xi})).
	\end{equation}
	Above $M^+$ is named as the hydrodynamic distribution function with analytical formulation.
	For UGKP method, the approximation (\ref{1st-particle}) for $M^+$ is used for a simple particle-sampling algorithm \cite{bird1994molecular}.
	The particle evolution equation Eq.\eqref{particle} means that the simulation particle has a probability of $e^{-t/\tau}$ to free stream, and has a probability of $(1-e^{-t/\tau})$ to collide with other particle and the post-collided velocity follows the velocity distribution $M^+(\vec{x},t,\vec{v},\vec{\xi})$.
	The time for the free streaming to stop and follow the distribution $M^+$ is called the first collision time $t_c$.
	The cumulative distribution function of the first collision time is
	\begin{equation}\label{tc-distribution}
		F(t_c<t)=1-\exp(-t/\tau),
	\end{equation}
	from which $t_c$ can be sampled as $t_c=-\tau\ln(\eta)$ with $\eta$ generated from a uniform distribution $U(0,1)$. For a particle $P_k$, the free streaming time can be given as,
	\begin{equation}\label{freetime}
		t_{f,k} =
		\begin{cases}
			-\tau \ln (\eta)&\mbox{if $	-\tau \ln (\eta) < \Delta t$} ,\\
			\Delta t &\mbox{if $	-\tau \ln (\eta) > \Delta t$, }
		\end{cases}
	\end{equation}
	where $\Delta t$ is the time step. In a numerical time step from $t^{n}$ to $t^{n+1}$, all simulating particles in UGKP method can be categorized into two groups: the \textbf{collisionless particle} \index{collisionless particle} $P^f$ and the \textbf{collisional particle} \index{collisional particle} $P^c$.
	The categorization is based on the relation between the free streaming time $t_{f}$ and the time step $\Delta t$.
	More specifically, the collisionless particle is defined as the particle whose free streaming time $t_f$ greater than or equal to the time step $\Delta t$,
	and the collisional particle is defined as the particle whose free streaming time $t_f$ smaller than $\Delta t$.
	For the collisionless particle, its trajectory is fully tracked during the whole time step.
	For collisional particle, the particle trajectory is tracked till $t_f$.
	Then the particle's mass, momentum, and energy are merged into the macroscopic quantities in that cell and the simulation particle gets eliminated.
	Those eliminated particles will get re-sampled once the updated macroscopic quantities $\vec{W}^{n+1}$ are obtained.
	As shown in Eq.\eqref{particle}, the re-sampled particles follow the hydrodynamic distribution $M^+$ and therefore they are defined as \textbf{hydro-particle} \index{hydro-particle} $P^h$. The macroscopic quantities corresponding to the hydro-particles are defined as \textbf{hydro-quantities} \index{hydro-quantities} $\vec{W}^h$.
	The hydro-particles will be sampled at the beginning of each time step and become the candidates for collisionless/collisional particles
	again in the next time step evolution according to their newly-sampled $t_f$.
		
	At the beginning of each step, we need to sample particles from $M^*$ defined in Eq.(\ref{newdist}).
	For cell $\Omega_i$ with hydro quantities $\vec{W}^h_{i}=\left(\rho_{i}^h,\rho_{i}^h U_{i}^h,\rho_{i}^h V_{i}^h,\rho_{i}^h W_{i}^h,\rho_{i}^h E_{i}^h, \rho_{i}^h E_{rot,i}^h\right)^T$, using the stratification for variance reduction,
	hydro-particles can be sampled from the modified Maxwellian distribution $\tilde{M}_t$ with a total mass of $({(Z_{rot} - 1)}/(Z_{rot}))\rho_i^h |\Omega_i| $ and the modified Maxwellian distribution $\tilde{M}_{eq}$ with a total mass of $\rho_i^h |\Omega_i| /Z_{rot}$, respectively.
	
	Taking $\tilde{M}_t$ as an example, the reduced distribution function for rotational variable $\vec{\xi}$ can be written as
	\begin{equation*}
		\begin{aligned}
			G_t &= \int \tilde{M}_t d\vec{\xi} = G_m(\lambda_t)\left[  1 +\frac{4\lambda_t^2\vec{q}_t \cdot \vec{c}}{15\rho} \left( 2\lambda_t \vec{c}^2 - 5 \right) \right] ,\\
			R_t &= \int  \vec{\xi}^2 \tilde{M}_t d\vec{\xi} = \frac{K_r}{2\lambda_r}G_m(\lambda_t) \left[ 1 + \frac{4\lambda_t^2\vec{q}_t \cdot \vec{c}}{15\rho} \left( 2\lambda_t \vec{c}^2 - 5\right) + \frac{8(1 - \sigma)\lambda_t \lambda_r \vec{q}_r \cdot \vec{c}}{K_r \rho}\right]
		\end{aligned}
	\end{equation*}
	with
	\begin{equation*}
		G_m(\lambda) = \rho\left( \frac{\lambda}{\pi} \right)^{\frac{3}{2}}e^{-\lambda \vec{c}^2} .
	\end{equation*}
Following the idea of importance sampling, the hydro quantities $\vec{W}^h_i$ from the moments of the above specified distributions can be rewritten as,
	
	\begin{equation}\label{importance}
		\begin{aligned}
			\vec{W}^h_i&=\int\left(\begin{array}{c}
			G_t \\
			\vec{v} G_t \\
			\frac{\vec{v}^2}{2} G_t+	\frac{1}{2}R_t \\
			\frac12 R_t
			\end{array}\right)_i d\vec{v} = \int\left(\begin{array}{c}
			\frac{G_t}{G_m} G_m \\
			\vec{v} \frac{G_t}{G_m} G_m \\
			\left(\frac{\vec{v}^2}{2}\frac{G_t}{G_m}+	\frac{1}{2}\frac{R_t}{G_m} \right)G_m \\
			\frac12 \frac{R_t}{G_m} G_m
			\end{array}\right)_i d\vec{v} \\
			&\approx \sum \left(\begin{array}{c}
			\omega_k \frac{m_p}{|\Omega|_i}  \\
			\omega_k \frac{m_p \vec{v}_k}{|\Omega|_i} \\
			\omega_k\frac{m_p\vec{v}_k^2}{2|\Omega|_i}+	\kappa_k\frac{m_p e_k}{2|\Omega|_i} \\
			\kappa_k\frac{m_p e_k}{2|\Omega|_i}
			\end{array}\right) .
		\end{aligned}
	\end{equation}

	In order to recover the gas distribution function on the microscopic level, the sampled particles $P_k$, $k=1,...,N_i$, follows
	\begin{equation}\label{sample1}
	\begin{aligned}
	&m_p=\frac{Z_{rot} - 1}{Z_{rot}}\frac{\rho_{i}^h |\Omega|_i}{N_i}, \quad \vec{x}_{k}\sim \text{U}(\Omega_i), \quad e_k	=\frac{K_r }{2{\lambda_{r,i}}}, \\
	&\vec{v}_k=(-\ln (\vec{\eta_1})/{\lambda_{t,i}})^{1/2}\cos(2\pi\vec{\eta}_2) +\vec{U}_{i}, \quad \vec{\eta}_{1,2}\sim \text{U}(0,1)^3,
	\end{aligned}
	\end{equation}
	where $\text{U}(\Omega_i)$ is the uniform distribution on $\Omega_i$ and $\text{U}(0,1)^3$ is the uniform distribution on $(0,1)^3$.
	The addition weights $\omega_k$ and $\kappa_k$ are required in order to make the macroscopic quantities of the sampled particles $P_k$ consistent with the macro quantities $\vec{W}_{i}^h$. As shown in Eq. (\ref{importance}), $\omega_k$ and $\kappa_k$ are determined by the coefficients $G_t/G_m$ and $R_t/G_m$, respectively, i.e.
	\begin{equation}\label{sample2}
	\begin{aligned}
	\omega_k &= 1 +\frac{4\lambda_t^2\vec{q}_t \cdot \vec{c}_k}{15\rho} \left( 2\lambda_t \vec{c}_k^2 - 5 \right),\\
	\kappa_k &= 1 + \frac{4\lambda_t^2\vec{q}_t \cdot \vec{c}_k}{15\rho} \left( 2\lambda_t \vec{c}_k^2 - 5\right) + \frac{8(1 - \sigma)\lambda_t \lambda_r \vec{q}_r \cdot \vec{c}_k}{K_r\rho}.
	\end{aligned}
	\end{equation}

	Now all the quantities of $P_k$ are determined. The particle will take free streaming for a period of ${t_f}_k$,
	\begin{equation}\label{stream}
		\vec{x}_{k}^{n+1} = \vec{x}_{k}^{n} + \vec{v}_k t_{f,k}.
	\end{equation}
	The net free streaming flow of cell $i$ at the next step can be calculated by counting the particles passing through the cell interface, which can be written as,
	\begin{equation}\label{particleevo}
		\vec{W}_{fr,i}=\frac{1}{\Delta t}\left(  \sum_{k\in P_{\partial \Omega_i^{+}}} \vec{W}_{P_k} - \sum_{k\in P_{\partial \Omega_i^{-}}} \vec{W}_{P_k}\right),
	\end{equation}
	where $\vec{W}_{P_k}=\left( \omega_km_k,\omega_km_k \vec{v}_k,\frac12 m_k \left(\omega_k\vec{v}_k^2+\kappa _ke_k\right),\frac12 m_k\kappa _ke_k \right)^T$, $P_{\partial \Omega_i^{+}}$ is the index set of the particles streaming into cell $\Omega_i$ during a time step, and $P_{\partial \Omega_i^{-}}$ is the index set of the particles streaming out of cell $\Omega_i$.

	\subsubsection{The update of macroscopic flow variables with source term}\label{update}
	Since the source term in the rotational energy, $\rho E_{rot}$ can be updated using a semi-implicit scheme.
    Based on $\vec{W}^*$ as defined by
	\begin{equation}\label{updateW}
		\vec{W}^*_i=\vec{W}^n_i+\frac{\Delta t}{|\Omega_i|} \left(\sum_{l_s\in\partial \Omega_i}|l_s| \vec{F}_{eq,s} + \vec{W}_{fr,i}\right),
	\end{equation}
	 $\rho ^{n+1}, (\rho \vec{U} )^{n+1}$ and $(\rho E)^{n+1}$ can be updated. Then, from the Eq.(\ref{equilibriumEr}), the equilibrium rotational energy $(\rho E_{rot}^{eq})^{n+1}$ can be obtained as well, from which the source term for rotational energy can be approximated as,
	\begin{equation}\label{source1}
		S = \frac{\Delta t}{2}\left( \frac{2(\rho E_{rot}^{eq})^{n+1} - (\rho E_{rot})^* - (\rho E_{rot})^{n+1}}{(Z_{rot}\tau)^*} \right),
	\end{equation}
	thus
	\begin{equation}\label{source2}
		(\rho E_{rot})^{n+1} =\left( 1 + \frac{\Delta t}{2(Z_{rot}\tau)^*} \right)^{-1} \left( (\rho E_{rot})^* + \frac{\Delta t}{2}\left( \frac{2(\rho E_{rot}^{eq})^{n+1} - (\rho E_{rot})^*}{(Z_{rot}\tau)^*} \right) \right) .
	\end{equation}

	In UGKP, the evolution of microscopic particle is coupled with the evolution of macroscopic flow variables. The composition of the particles during time evolution in the
	UGKP method is illustrated in Fig.\ref{ugkp} \cite{zhu2019unified}. The algorithm of UGKP method for diatomic gas can be summarized as following:
	\begin{enumerate}
		\item
		Sample the particle quantities $(m_k,\vec{x}_k,\vec{v}_k, e_k, \omega_k, \kappa_k)$ by Eq.\eqref{sample1} and Eq.\eqref{sample2} for each newly added particle $P_k$ from the hydro-quantities $\vec{W}^h$. For the first step, $\vec{W}^h = \vec{W}^{n = 0}$ as shown in Fig. \ref{ugkp1}.
		
		\item
		Sample free streaming time $t_{f,k}$ by Eq.(\ref{freetime}) for each particle $P_k$ and they will be classified into collisionless particles (white circles in Fig. \ref{ugkp2}) and collisional particles (solid circles in Fig. \ref{ugkp2}). Then, stream the particles by Eq.\eqref{stream}.
		
		\item
		Calculate the net free streaming flow $\vec{W}_{fr}$ by Eq.\eqref{particleevo}, and evaluate the equilibrium flux $\vec{F}^{eq}$ by Eq.\eqref{eqflux}.
		
		\item
		Update total macroscopic flow variables $\vec{W}$ by Eq.\eqref{updateW} and Eq.\eqref{source2}. Calculate the macro-quantities of collisionless particles $\vec{W}^p$ by collecting the macro-quantities of collisionless particles, from which the hydro-quantities $\vec{W}^h$ are obtained by $\vec{W}^h=\vec{W}-\vec{W}^p$. [Shown in Fig. \ref{ugkp3}]
		
		\item
		Keep collisionless particles and remove collisional particles and then go to step 1. Fig. \ref{ugkp4} is the initial state in Fig. \ref{ugkp1} for the next time step.
	\end{enumerate}

	\begin{figure}[h!]
		\centering
		\begin{subfigure}[b]{0.24\textwidth}
			\includegraphics[width=\textwidth]{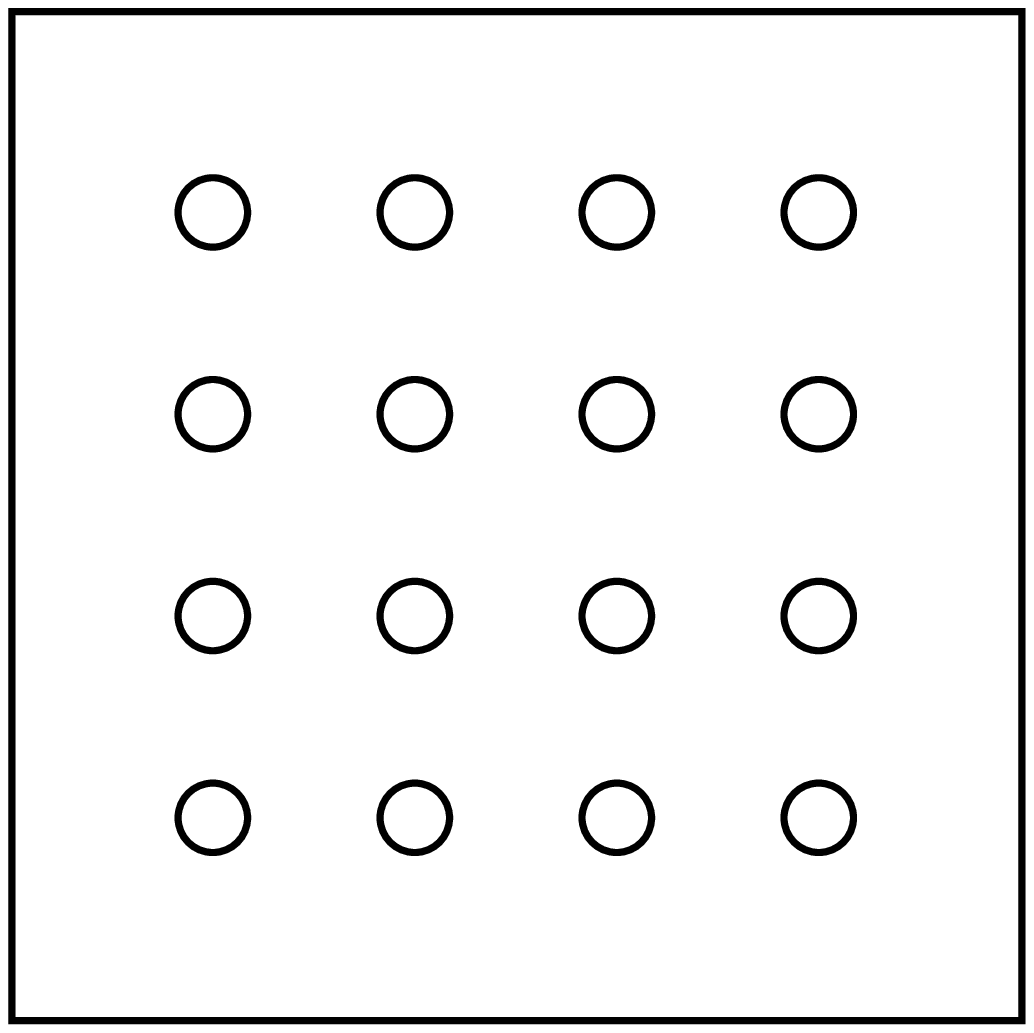}
			\caption{}
			\label{ugkp1}
		\end{subfigure}
		\begin{subfigure}[b]{0.24\textwidth}
			\includegraphics[width=\textwidth]{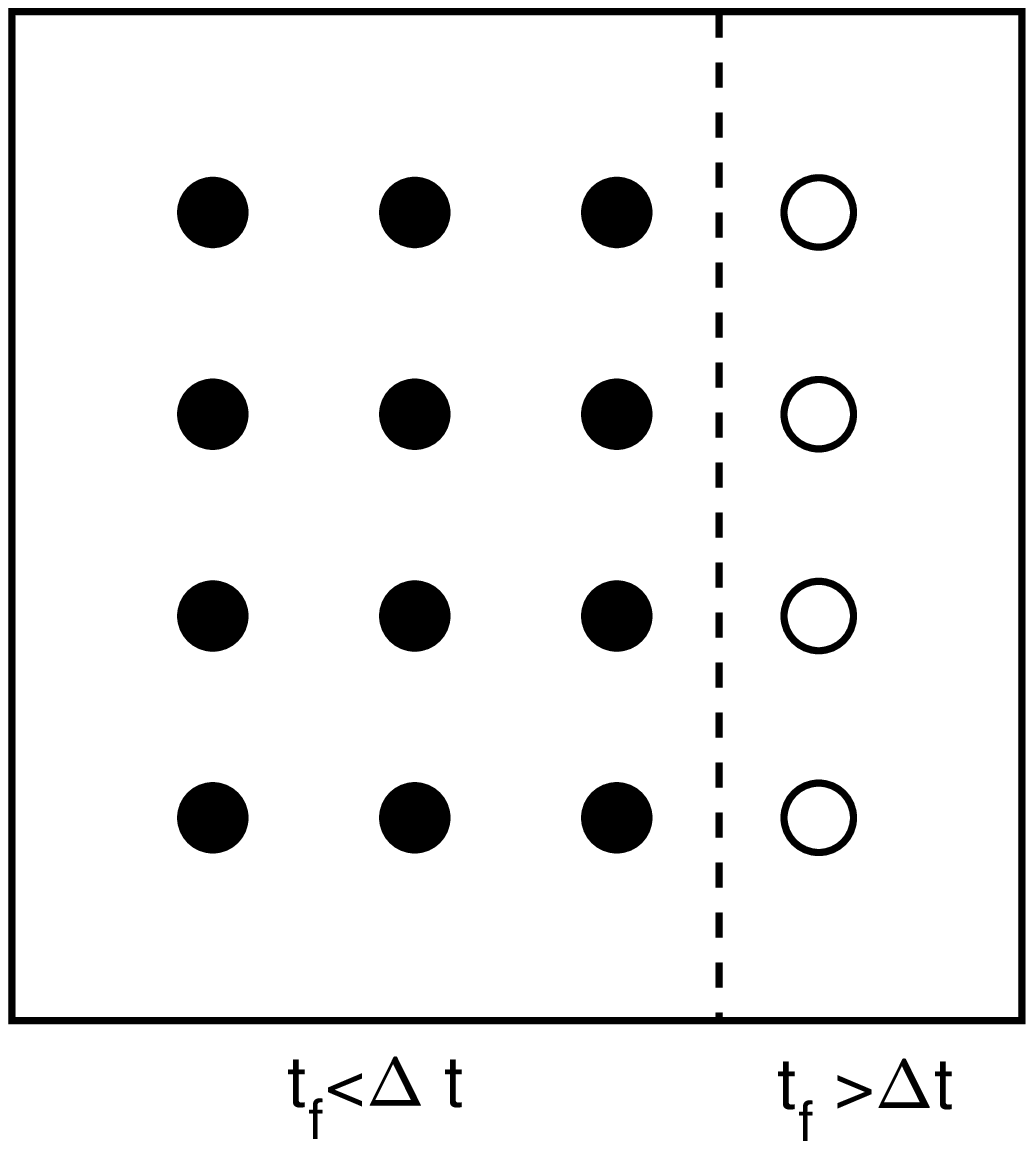}
			\caption{}
			\label{ugkp2}
		\end{subfigure}
		\begin{subfigure}[b]{0.24\textwidth}
			\includegraphics[width=\textwidth]{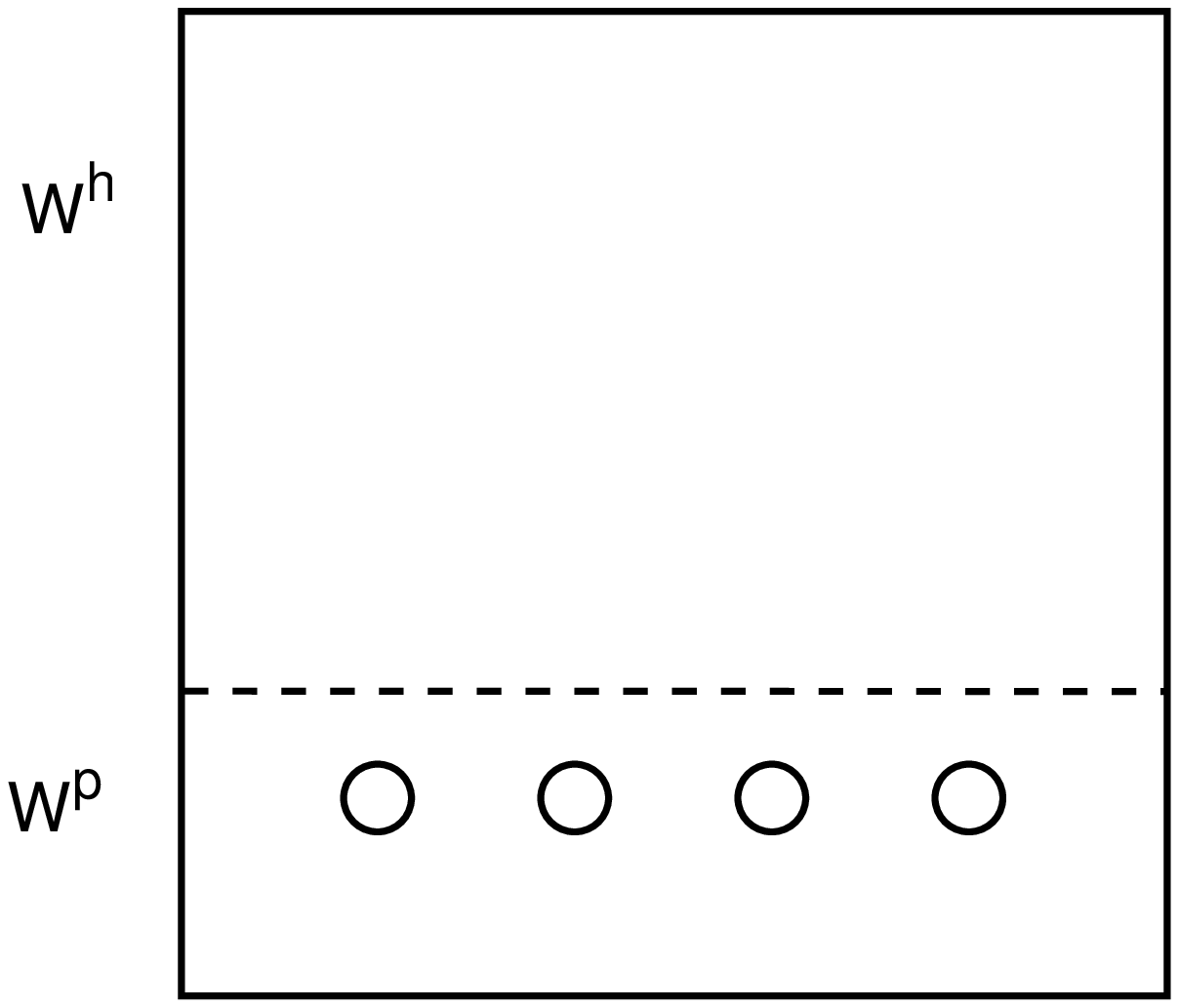}
			\caption{}
			\label{ugkp3}
		\end{subfigure}
		\begin{subfigure}[b]{0.24\textwidth}
			\includegraphics[width=\textwidth]{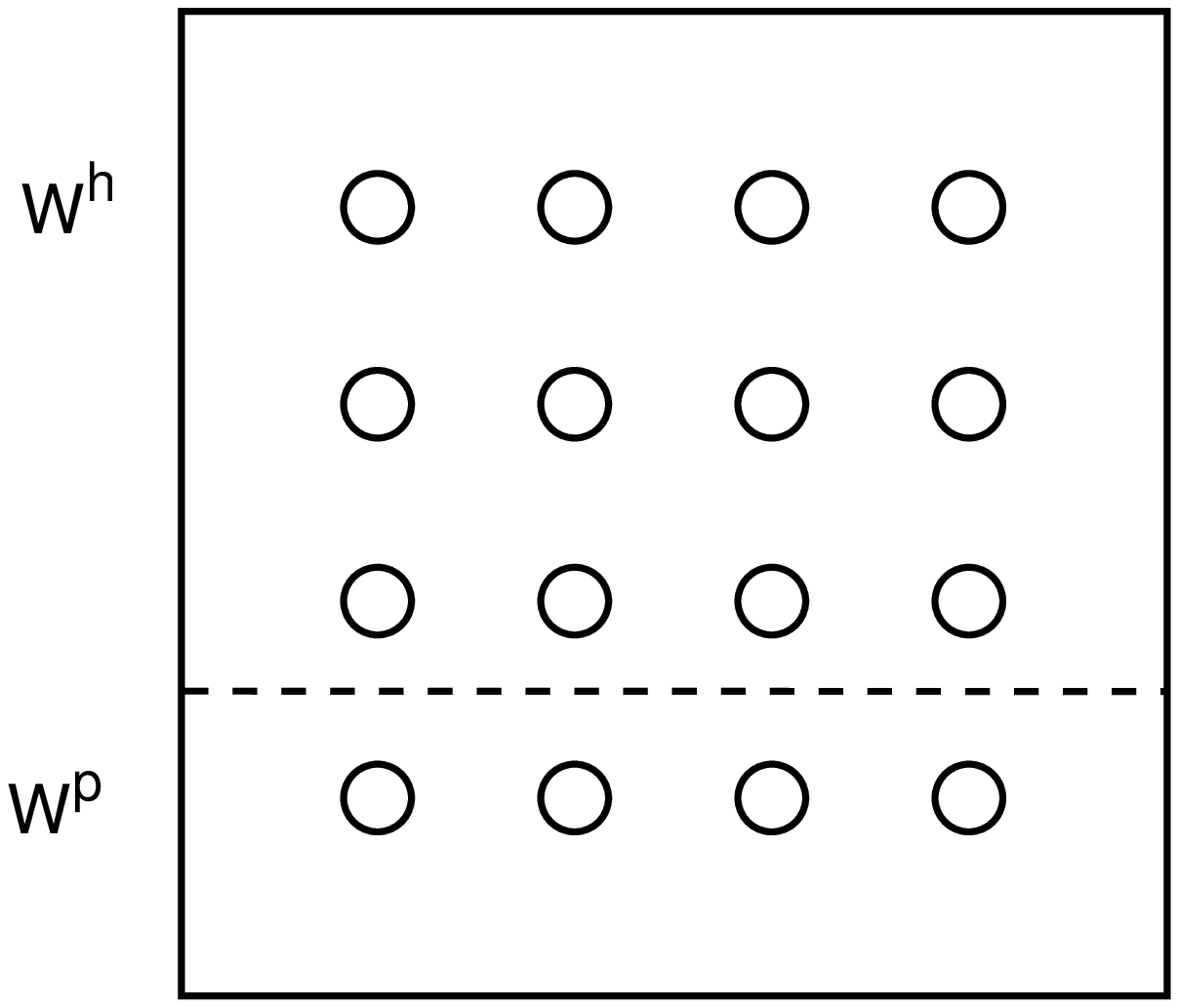}
			\caption{}
			\label{ugkp4}
		\end{subfigure}
		\caption{Diagram to illustrate the composition of the particles during time evolution in the UGKP method. (a) Initial field; (b) classification of the collisionless particles (white circle) and collisional particles (solid circle) according to the free transport time $t_f$; (c) update solution at macroscopic level;(d) update solution at microscopic level.}
	\label{ugkp}
	\end{figure}

	\subsection{Unified gas-kinetic wave-particle method}\label{method2}
	The UGKWP method improves UGKP method mainly in the following two aspects:
	\begin{itemize}
		\item The free transport terms in numerical flux contributed by the collisional 	hydro-particles are evaluated analytically;
		\item Only collisionless hydro-particles are sampled.
	\end{itemize}
	Firstly, since the distribution of the hydro-quantities $\vec{W}^{h}$ is known as $M^+$, the flux contributed by the free transport of collisional hydro-particles can be evaluated analytically,
	\begin{equation*}
		\sum_s |l_s| \vec{F}_{fr,s}=\sum_s |l_s| \vec{F}_{fr,s}^h+\vec{W}_{fr,i}^p,
	\end{equation*}
	where $\vec{F}_{fr,s}^h$ is the free transport flux contributed by the hydro-quantities \cite{liu2020unified},
	\begin{equation}\label{Ff1}
		\vec{F}_{fr,s}^h=\int  \left[C_4M^+(\vec{x}_s,t^n,\vec{v},\vec{\xi}) + C_5 \vec{v} \cdot \nabla_{\vec{x}} M_t(\vec{x}_s,t^n,\vec{v},\vec{\xi})\right]  \vec{v}\cdot\vec{n}_s \vec{\psi} d\Xi .
	\end{equation}
	where $M^+$ is given in Eq.\eqref{2nd-particle}, $C_4 = \frac{\tau}{\Delta t} \left(1 - e^{-\Delta t / \tau}\right) -  e^{-\Delta t / \tau}$, and $C_5 = \tau  e^{-\Delta t / \tau} - \frac{\tau^2}{\Delta t}(1 -  e^{-\Delta t / \tau}) +\frac12\Delta t e^{-\Delta t / \tau} $.
	
	Secondly, since the numerical flux contributed by the streaming of collisional particles can be evaluated by $\vec{F}_{fr,s}^h$ analytically, only the collisionless hydro-particle \index{collisionless hydro-particles} will be sampled.
	Based on the cumulative distribution function of the first collision time Eq.\eqref{tc-distribution},
	the collisionless hydro-particles are sampled from a $M^+(\vec{W}^{n+1}_i)$ with the total mass of $e^{-\Delta t/\tau_i}|\Omega_i| \rho^h_i$. Then, the net free streaming flux contributed by the streaming of all left collisionless and collisional particles can be calculated by Eq.(\ref{particleevo}),
	\begin{equation*}
		\vec{W}_{fr,i}^p=\frac{1}{\Delta t}\left(  \sum_{k\in P_{\partial \Omega_i^{+}}} \vec{W}_{P_k} - \sum_{k\in P_{\partial \Omega_i^{-}}} \vec{W}_{P_k}\right).
	\end{equation*}
	Therefore, the evolution of macroscopic flow variables in Eq.(\ref{updateW}) now becomes
	\begin{equation}\label{updateWP}
		\vec{W}_i^*=\vec{W}_i^n+\frac{\Delta t}{|\Omega_i|} \left(\sum_{l_s\in\partial \Omega_i}|l_s| \vec{F}_{eq,s}+\sum_{l_s\in\partial \Omega_i}|l_s| \vec{F}_{fr,s}^h + \vec{W}_{fr,i}^p\right),
	\end{equation}
	from which  $\rho ^{n+1}, (\rho \vec{U} )^{n+1}$ and $(\rho E)^{n+1}$ can be obtained. Following the same calculations in Eq.(\ref{source1}) and Eq.(\ref{source2}), $(\rho E_{rot})^{n+1}$ can be obtained as well. The composition of the particles during time evolution in the
	UGKWP method is illustrated in Fig.\ref{ugkwp} \cite{liu2020unified}.
	The algorithm of UGKWP method for diatomic gases can be summarized as following:
	\begin{enumerate}
		\item
		Sample the particle quantities $(m_k,\vec{x}_k,\vec{v}_k, e_k, \omega_k, \kappa_k)$ by Eq.\eqref{sample1} and Eq.\eqref{sample2} for each newly added collisionless hydro-particle $P_k$ from the hydro-quantities $e^{-\Delta t / \tau}\vec{W}^h$. These particles are all defined as collisionless particles which have $t_f = \Delta t$. For the first step, $W^h = W^{n = 0}$ as shown in Fig. (\ref{ugkwp1}).
		
		\item
		Sample free streaming time $t_{f,k}$ by Eq. (\ref{freetime}) for particles $P_k$ from $W^p$. These particles are classified into collisionless particles (white circles in Fig. \ref{ugkwp2}) and collisional particles (solid circles in Fig. (\ref{ugkwp2})). Then, stream all the particles by Eq.\eqref{stream}. For the first step, $W^p = 0$.
		
		\item
		Calculate the net free streaming flow $\vec{W}_{fr}$ by Eq.\eqref{particleevo}, and evaluate the equilibrium flux $\vec{F}^{eq}$ and free transport flux $\vec{F}^h_{fr,s}$ by  Eq.\eqref{eqflux} and Eq.\eqref{Ff1}, respectively.
		
		\item
		Update total flow variables $\vec{W}$ by Eq.\eqref{updateWP} and Eq.\eqref{source2}. Calculate the macro-quantities of collisionless particles $\vec{W}^p$ by collecting the macro-quantities of collisionless particles, from which  the hydro-quantities $\vec{W}^h$ are obtained by $\vec{W}^h=\vec{W}-\vec{W}^p$. [Shown in Fig. \ref{ugkwp3}]
		
		\item
		Keep collisionless particles and remove collisional particles. Then, go to step 1. Fig. \ref{ugkwp4} is the initial state of next time step in Fig. \ref{ugkwp1}.
	\end{enumerate}
	\begin{figure}
		\centering
		\begin{subfigure}[b]{0.24\textwidth}
			\includegraphics[width=\textwidth]{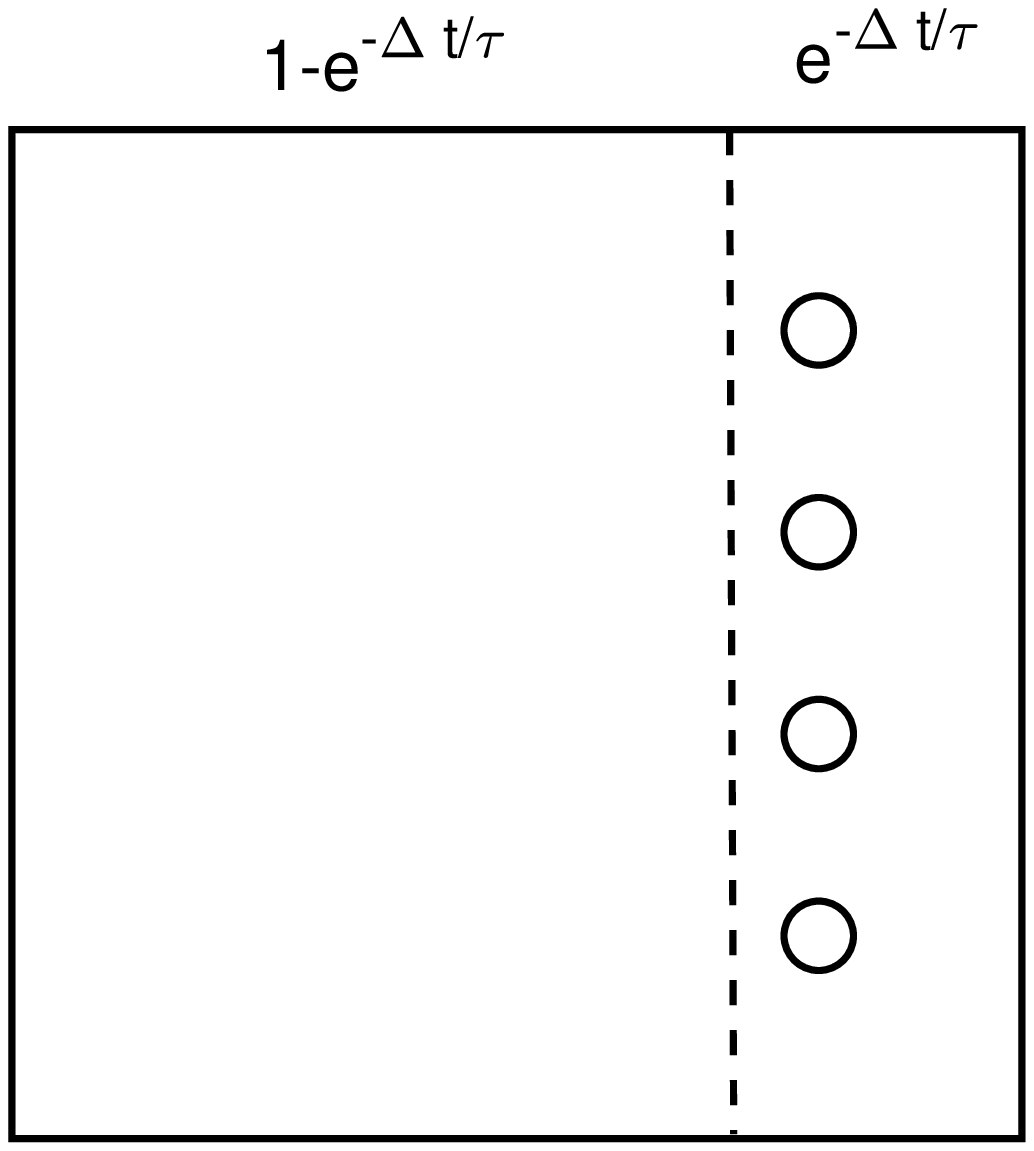}
			\caption{}
			\label{ugkwp1}
		\end{subfigure}
		\begin{subfigure}[b]{0.24\textwidth}
			\includegraphics[width=\textwidth]{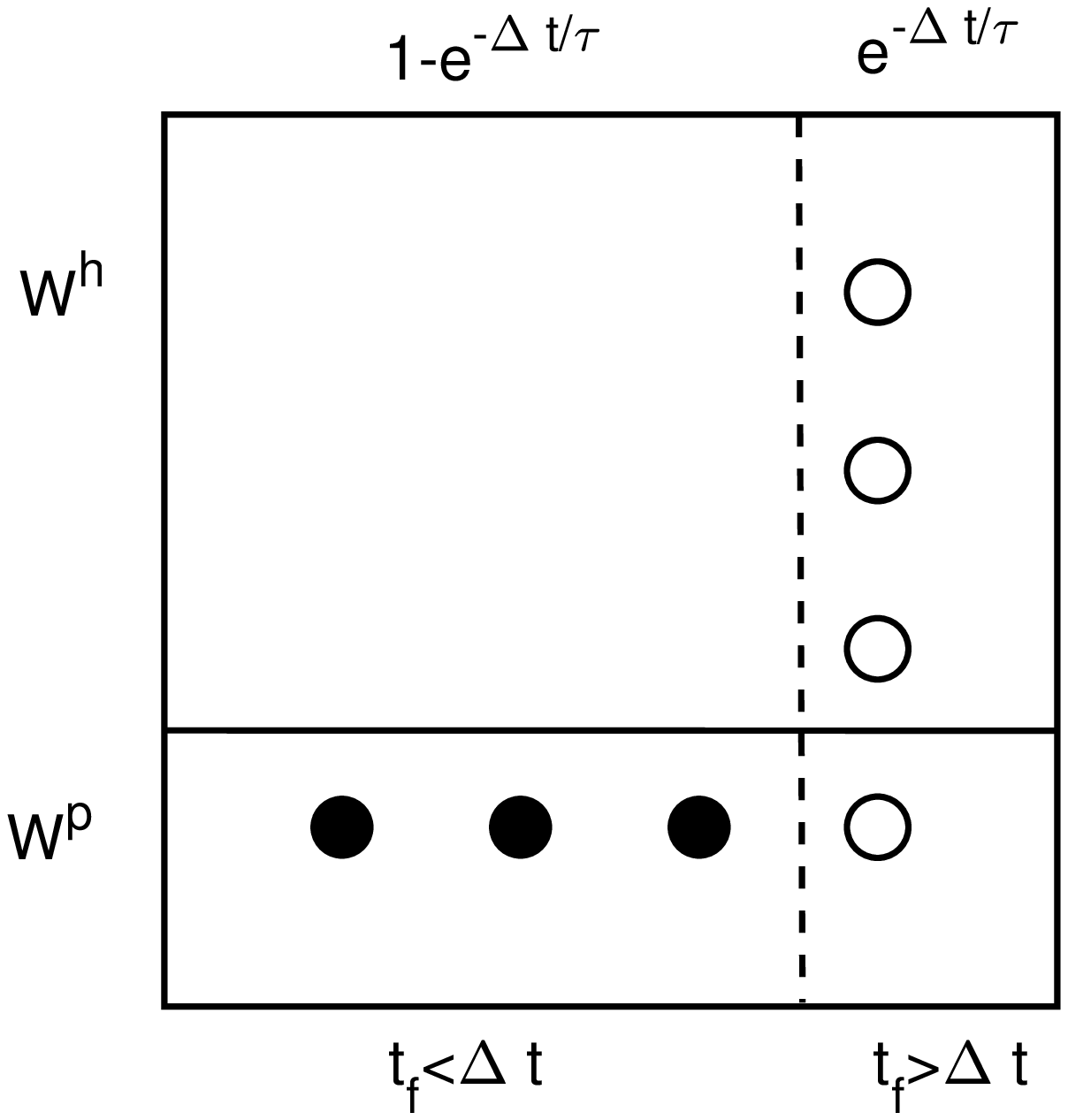}
			\caption{}
			\label{ugkwp2}
		\end{subfigure}
		\begin{subfigure}[b]{0.24\textwidth}
			\includegraphics[width=\textwidth]{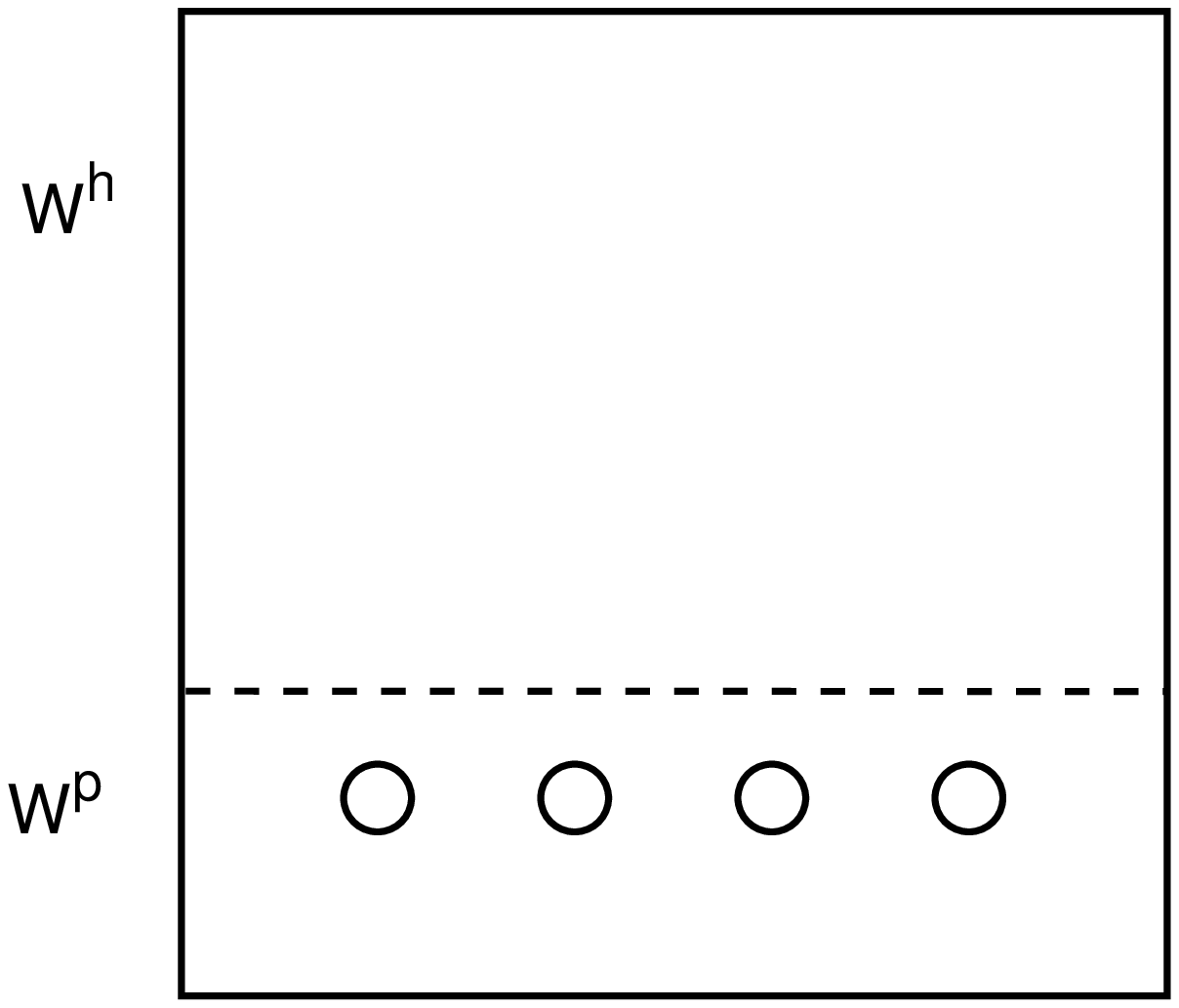}
			\caption{}
			\label{ugkwp3}
		\end{subfigure}
		\begin{subfigure}[b]{0.24\textwidth}
			\includegraphics[width=\textwidth]{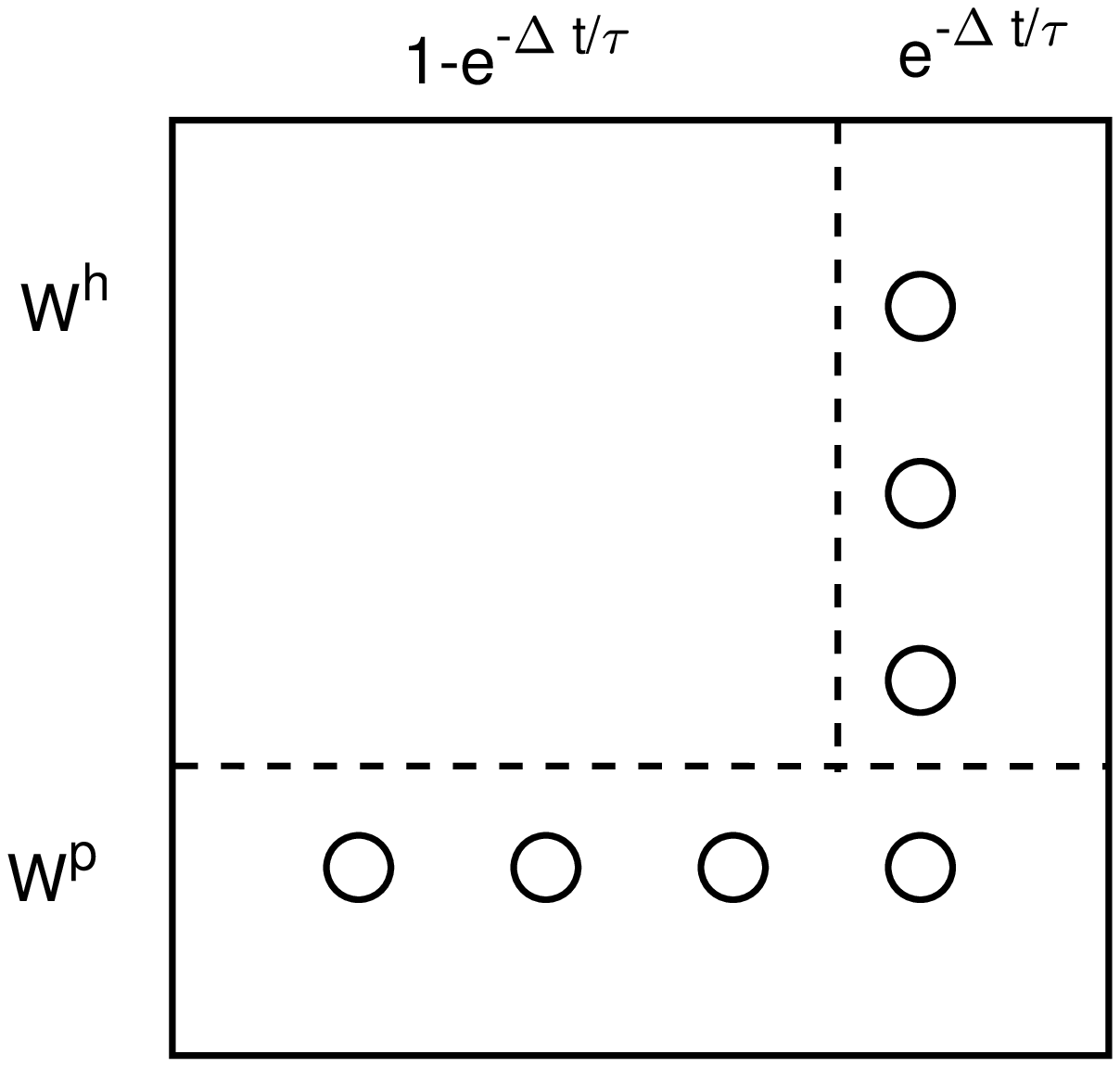}
			\caption{}
			\label{ugkwp4}
		\end{subfigure}
		\caption{Diagram to illustrate the composition of the particles during time evolution in the UGKWP method. (a) Initial field; (b) classification of the collisionless and collisional particles for $\vec{W}^p$; (c) update on macroscopic level; (d) update on the microscopic level.}
		\label{ugkwp}
	\end{figure}

	\section{Analysis and discussion}\label{discussion}

	\subsection{Asymptotic behavior in the continuum regime}
		
	In this section, we are going to analyze the asymptotic behavior of the UGKWP method with diatomic relaxation in continuum regime.
	For simplicity, the following analysis is based on two-dimensional case. Following the Chapman-Enskog procedure, one can show that the macro description of the Rykov model can be written as \cite{rykov1978macroscopic},
	\begin{equation}\label{macro}
		\begin{aligned}
		&\frac{\partial \rho}{\partial t} + \frac{\partial (\rho U)}{\partial x} +\frac{\partial (\rho V)}{\partial y} = 0 ,\\
	 	&\frac{\partial (\rho U)}{\partial t} + \frac{\partial (\rho U^2 + p_t)}{\partial x} +\frac{\partial (\rho UV)}{\partial y} =  \frac{\partial \tau_{xx}}{\partial x} + \frac{\partial \tau_{yx}}{\partial y} ,\\
		&\frac{\partial (\rho V)}{\partial t}+ \frac{\partial \rho UV}{\partial x} +\frac{\partial (\rho V^2 + p_t)}{\partial y} = \frac{\partial \tau_{xy}}{\partial x} + \frac{\partial \tau_{yy}}{\partial y} ,\\
		&\frac{\partial (\rho E)}{\partial t} + \frac{\partial (\rho E U + p_tU)}{\partial x} +\frac{\partial  (\rho E V + p_tV)}{\partial y} = \frac{\partial (U \tau_{xx} + V \tau_{xy} + q_x)}{\partial x} + \frac{\partial (U \tau_{yx} + V \tau_{xx} + q_y)}{\partial y} ,\\
		&\frac{\partial (\rho E_r)}{\partial t} + \frac{\partial (\rho E_r U)}{\partial x} +\frac{\partial  (\rho E_r V)}{\partial y} = \frac{\partial q_{rx}}{\partial x} + \frac{\partial q_{ry}}{\partial y} + \frac{\rho E_r^{eq} - \rho E_r}{Z_{rot}\tau}. \\
		\end{aligned}
	\end{equation}
	Here the viscous and heat conduction terms are
	\begin{equation}\label{macrostress}
		\begin{aligned}
		&\tau_{xx} = \tau p_t \left[ 2 \frac{\partial U}{\partial x} - \frac{2}{3} \left(\frac{\partial U}{\partial x} +\frac{\partial V}{\partial y}   \right) \right] ,\\	
		&\tau_{yy} = \tau p_t \left[ 2 \frac{\partial V}{\partial y} - \frac{2}{3} \left(\frac{\partial U}{\partial x} +\frac{\partial V}{\partial y}   \right) \right] ,\\
		&\tau_{xy} = \tau_{yx} = \tau p_t  \left(\frac{\partial U}{\partial y} +\frac{\partial V}{\partial x}   \right) \\
		&(q_x, q_y)^T = \vec{q}_r +  \vec{q}_t ,\\
		&(q_{rx},q_{ry})^T = \vec{q}_r ,
		\end{aligned}
	\end{equation}
	where $\vec{q}_r$ and $\vec{q}_t$ are defined in Eq.\eqref{heatflux}. The pressure $p_t = \rho R T_t$ is only related to the translational temperature.
	\begin{proposition}[Asymptotic-preserving property]
	Consider a well resolved flow region with $M_t^l=M_t^r$ and $\nabla_{\vec{x}}^l M_t=\nabla_{\vec{x}}^r M_t$, for fixed time $\Delta t$, and small $\tau$, the scheme is asymptotically equivalent, up to $O(\tau^2)$, a first order scheme for the system \eqref{macro} and \eqref{macrostress}.
	\end{proposition}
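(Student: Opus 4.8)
The plan is to show that, in the continuum limit $\tau\to 0$ with $\Delta t$ fixed, the stochastic (particle) component of the UGKWP update \eqref{updateWP} switches off and the remaining deterministic fluxes reproduce the Navier--Stokes system \eqref{macro}--\eqref{macrostress} up to $O(\tau)$, with error $O(\tau^2)$. First I would dispose of the particle term: the collisionless hydro-particles are sampled with total mass $e^{-\Delta t/\tau}\rho_i^h|\Omega_i|$, so for fixed $\Delta t$ this mass, and hence $\vec W_{fr,i}^p$, is exponentially small, i.e. $O(\tau^\infty)\subset O(\tau^2)$. Discarding it leaves the purely analytic update $\vec W_i^*=\vec W_i^n+\tfrac{\Delta t}{|\Omega_i|}\sum_{l_s}|l_s|(\vec F_{eq,s}+\vec F_{fr,s}^h)$ together with the source update \eqref{source2}. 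Next I would expand the time-integration coefficients, again dropping the exponentially small $e^{-\Delta t/\tau}$ contributions, to obtain
\begin{equation*}
C_1=1-\frac{\tau}{\Delta t},\quad C_2=-\tau+\frac{2\tau^2}{\Delta t},\quad C_3=\frac{\Delta t}{2}-\tau+\frac{\tau^2}{\Delta t},\quad C_4=\frac{\tau}{\Delta t},\quad C_5=-\frac{\tau^2}{\Delta t},
\end{equation*}
all valid up to $O(\tau^\infty)$. The key algebraic observation is $C_1+C_4=1-e^{-\Delta t/\tau}\to 1$, so the two non-derivative pieces combine to recover the full equilibrium (Euler) flux at leading order.

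With the well-resolved hypotheses $M_t^l=M_t^r$ and $\nabla_{\vec x}^l M_t=\nabla_{\vec x}^r M_t$, the Heaviside split in \eqref{g-expansion1} collapses ($a^l=a^r=a$) and the interface reconstruction becomes single-valued. Combining $\vec F_{eq,s}$ from \eqref{eqflux} with $\vec F_{fr,s}^h$ from \eqref{Ff1} and substituting the expansions above, I would collect terms to express the effective interface distribution as the Chapman--Enskog form
\begin{equation*}
f_s=M_t+\frac{M_{eq}-M_t}{Z_{rot}}-\tau\bigl(a\bar u+b\bar v+c\bar w+A\bigr)M_t+O(\tau^2),
\end{equation*}
where the $(M_{eq}-M_t)/Z_{rot}$ piece is itself $O(\tau)$ by \eqref{TdifferenceOrder} and the $-\tau(\cdots)M_t$ term is exactly the first-order non-equilibrium part of the Rykov model. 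Taking $\int\vec v\cdot\vec n_s\,\vec\psi\,(\cdot)\,d\Xi$, the leading term yields the inviscid flux of \eqref{macro}; the $O(\tau)$ term, after using the Chapman--Enskog constraint that fixes the time derivative $A$, produces the viscous stress \eqref{macrostress} and the translational and rotational heat fluxes. Here I would verify that the rescaling $\tilde a_5,\tilde a_6$ introduced after \eqref{heatflux} reproduces the $Z_{rot}$-dependent transport coefficients of \eqref{heatflux}, so that the moments match \eqref{macrostress} term by term.

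Finally I would confirm that the semi-implicit rotational-energy update \eqref{source2}, expanded for small $\tau$, is a consistent first-order discretization of the relaxation source $(\rho E_r^{eq}-\rho E_r)/(Z_{rot}\tau)$ appearing in the last line of \eqref{macro}, and that it enforces $T_t-T_r=O(\tau)$ in agreement with \eqref{TdifferenceOrder}, making the linearization around $T_{eq}$ used above self-consistent. Assembling the finite-volume update with these fluxes and this source, and discarding the $O(\tau^2)$ remainder together with the exponentially small particle term, then gives a first-order scheme for \eqref{macro}--\eqref{macrostress}. The \textbf{main obstacle} lies in the moment-matching of the previous paragraph: one must verify, component by component, that the $O(\tau)$ contributions reproduce precisely the stress tensor and both heat-flux laws with the correct $Z_{rot},\sigma,\omega_0,\omega_1$ factors --- in particular that neglecting $M_q$ and treating $(M_{eq}-M_t)/Z_{rot}$ as $O(\tau)$ (justified by $|T_{eq}-T_t|\sim O(\tau)$) and compensating through the temperature-gradient rescaling yields exactly \eqref{heatflux}. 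Controlling the interplay between this rescaling and the semi-implicit source so that nothing leaks at $O(\tau)$ is where the bookkeeping is heaviest.
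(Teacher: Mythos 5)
Your proposal follows essentially the same route as the paper's proof: discard the particle flux $\vec{W}_{fr,i}^p$ as exponentially small, expand $M^+$ to first order in $\tau$, combine $\vec{F}_{eq,s}$ and $\vec{F}_{fr,s}^h$ so that the time-integration coefficients collapse to $1$, $-\tau$, and $\tfrac12\Delta t-\tau$, and identify the resulting interface distribution with the Chapman--Enskog form whose moments give the fluxes of \eqref{macro}--\eqref{macrostress}. The extra bookkeeping you flag (the $\tilde a_5,\tilde a_6$ rescaling reproducing \eqref{heatflux} and the consistency of the semi-implicit source \eqref{source2}) is handled elsewhere in the paper rather than inside the proof, but it does not change the argument.
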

	\begin{proof}
	For cell $i$, the total mass of the sampled collisionless hydro-particles is
	$m^h=e^{-\Delta t/\tau_i} |\Omega_i| \rho_i^h$, and the total mass of the collisionless and collisional particles $m^p$ is proportional to $m^h$,
	i.e. $m^p\sim O(e^{-\Delta t/\tau_i})$.
	Therefore, the numerical flux contribution by collisionless and collisional particle streaming is $\vec{W}_{fr,i}^p\sim O(e^{-\Delta t/\tau_i})$.

	In the free transport flux of hydro-quantities $\vec{F}_{fr,s}^h$ given by Eq.\eqref{Ff1},
	the hydrodynamic distribution function $M^+(\vec{x}_0,t,\vec{v},\vec{\xi})$ become
	\begin{equation}\label{glimit}
		M^+(x_0, y_0 , t, u, v , \vec{\xi})=M^*-\tau
		\left(\partial_t M_t + u \partial _x M_t + v \partial _y M_t \right)+O(\tau^2) .\\
	\end{equation}
	Substituting Eq.\eqref{glimit} into Eq.\eqref{Ff1}, the total flux $\vec{F}_{an,s}$ of the macroscopic variables becomes
	\begin{equation}\label{fluxlimit}
		\begin{aligned}
			\vec{F}_{an,s} =& \vec{F}_{eq,s} + \vec{F}_{fr,s}^h\\
			=&\int u \bigg\{ (C_1 + C_4)M^* + (C_2 - \tau C_4 + C_5)(u \partial _x M_t + v 	\partial _y M_t) \\
			&+ (C_3 - \tau C_4)\partial_t M_t \bigg\}\vec{\psi}dudvd\vec{\xi}+O(\tau^2) \\
			=& \int u \bigg\{ M^* - \tau (u \partial _x M_t + v \partial _y M_t + \partial_t M_t ) + \frac12 \Delta t\partial_t M_t \bigg\}\vec{\psi}dudvd\vec{\xi} +O(\tau^2).
		\end{aligned}
	\end{equation}
	If  $O(\tau^2)$ terms are neglected, Eq.\eqref{fluxlimit} becomes
	\begin{equation}\label{numF}
	\begin{aligned}
		\vec{F}_{an,s} &=  \int u \bigg\{ M^* - \tau (u \partial _x M_t + v \partial _y M_t + \partial_t M_t ) + \frac12 \Delta t\partial_t M_t \bigg\}\vec{\psi}dudvd\vec{\xi}  \\
		&= \left(\begin{array}{c}
		\rho U \\
		\rho U^2 + p_t + \tau_{xx} \\
		 \rho UV +\tau_{xy}\\
		\rho E + p_t U + U\tau_{xx} + V\tau_{xy} + q_x \\
		\rho E_r + \rho E_r U + q_{rx}
		\end{array}\right)
		+  \frac12\Delta t\left(\begin{array}{c}
		\frac{\partial \rho U}{\partial t}\\
		\frac{\partial \rho U^2}{\partial t} + \frac{\partial p_t}{\partial t}\\
		\frac{\partial \rho UV}{\partial t}\\
		\frac{\partial \rho E}{\partial t} +\frac{\partial p_t U}{\partial t}   \\
		\frac{\partial \rho E_r}{\partial t} + \frac{\partial \rho E_r U}{\partial t}
		\end{array}\right)
	\end{aligned}
	\end{equation}
	It can be observed that the numerical flux is consistent with the flux in system \eqref{macro} and \eqref{macrostress}. Therefore, in the continuum regime, the UGKWP method converges to Eq.\eqref{macro}, which is a first order scheme for the system \eqref{macro} and \eqref{macrostress}.
	\end{proof}

	For the limiting Euler system with the absence of viscous and heat conduction terms, we can have the following proposition.
	\begin{proposition}
	Consider a well resolved flow region with $M_t^l=M_t^r$ and $\nabla_{\vec{x}}^l M_t=\nabla_{\vec{x}}^r M_t$, for fixed time $\Delta t$, in the limit $\tau \to 0$, the scheme becomes a second order scheme for the limiting Euler system with the absence of viscous and heat conduction terms in  system \eqref{macro} and \eqref{macrostress}.
	\end{proposition}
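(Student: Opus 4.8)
The plan is to start directly from the flux representation already obtained in the proof of the previous proposition, namely Eq.~\eqref{numF}, and to take the strict limit $\tau\to 0$ with $\Delta t$ held fixed, rather than merely discarding $O(\tau^2)$ terms. First I would record the behaviour of the time-integration coefficients. Since $e^{-\Delta t/\tau}$ decays faster than any power of $\tau$ as $\tau\to 0$, one has $C_1\to 1$, $C_2\to 0$, $C_3\to\tfrac12\Delta t$, and both transport-type coefficients $C_4\to 0$, $C_5\to 0$. Consequently the particle-related contributions collapse: the combined coefficient of the spatial gradients, $C_2-\tau C_4+C_5$ appearing in \eqref{fluxlimit}, vanishes, while the total mass of sampled collisionless and collisional particles is $m^p\sim O(e^{-\Delta t/\tau})\to 0$ exponentially, so $\vec{W}_{fr,i}^p\to 0$ and the method becomes purely wave-based, exactly as in the previous proposition.

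Next I would identify the surviving flux. With the coefficient limits above, Eq.~\eqref{numF} reduces to
\begin{equation*}
\vec{F}_{an,s}=\int u\left[M^*+\tfrac12\Delta t\,\partial_t M_t\right]\vec{\psi}\,d\Xi .
\end{equation*}
It then remains to evaluate $M^*$ and $M_t$ in the limit. From the leading-order relation \eqref{TdifferenceOrder}, $T_t-T_r=O(\tau)\to 0$, so the translational, rotational, and equilibrium temperatures coincide and $M_t\to M_{eq}$; moreover the heat-flux corrections $M_q$, $M_t^{+}$, $M_{eq}^{+}$ are all $O(\tau)$ through \eqref{heatflux} and drop out, giving $M^*\to M_{eq}$. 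Taking moments against $\vec{\psi}$ then reproduces precisely the inviscid fluxes in \eqref{macro}: the stresses $\tau_{xx},\tau_{yy},\tau_{xy}$ and the heat fluxes $q_x,q_y,q_{rx},q_{ry}$ in \eqref{macrostress}, all proportional to $\tau p_t$, vanish, and the rotational-energy source relaxes instantaneously so that $\rho E_r$ is slaved to $\rho E_r^{eq}$. Hence the interface flux becomes the Euler flux evaluated at $t^n$ plus the correction $\tfrac12\Delta t\,\partial_t$ of that same flux.

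Finally I would argue the order of accuracy. The surviving term $\tfrac12\Delta t\,\partial_t\vec{F}$ is exactly the first-order Taylor correction that converts the flux at $t^n$ into the flux at the midpoint $t^n+\tfrac12\Delta t$, which yields second-order temporal accuracy; the required time derivatives are supplied consistently by the conservative Chapman--Enskog constraint fixing $A$ in the expansion \eqref{g-expansion1}. For the spatial order, the van~Leer reconstruction of the macroscopic variables is second order, and the hypothesis $M_t^l=M_t^r$, $\nabla_{\vec{x}}^l M_t=\nabla_{\vec{x}}^r M_t$ guarantees a smooth, non-limited interface so that no accuracy is lost there. Combining the two contributions gives a scheme that is second order in both space and time for the limiting Euler system. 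I expect the delicate step to be justifying that the $\tau\to 0$ limit may be taken term by term inside the flux — in particular that the $O(\tau^2)$ remainder in \eqref{fluxlimit} is uniform and genuinely negligible against the retained $O(\Delta t)$ time-derivative term — together with confirming that the midpoint time-derivative structure does not silently degrade to first order once the reconstructed, rather than exact, time derivatives are inserted.
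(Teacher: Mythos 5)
Your proposal is correct and follows essentially the same route as the paper: it takes the $\tau\to 0$ limit directly in the flux expression of Eq.~\eqref{numF}, observes that the viscous and heat-conduction contributions (all proportional to $\tau p_t$) and the particle flux vanish, and identifies the surviving $\tfrac12\Delta t\,\partial_t$ term together with the semi-implicit source update as giving second-order accuracy for the limiting Euler system. The paper's own proof is just a terser version of this argument, stating the limiting flux and invoking the semi-implicit source treatment without spelling out the coefficient limits.
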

	\begin{proof}
	As $\tau \to 0$, following directly from Eq.\eqref{numF}, we have
 	\begin{equation*}
 	\begin{aligned}
 	\vec{F}_{an,s} = \left(\begin{array}{c}
 	\rho U \\
 	\rho U^2 + p_t \\
 	\rho UV \\
	 \rho E + p_t U\\
	 \rho E_r + \rho E_r U
 	\end{array}\right)
 	+  \frac12\Delta t\left(\begin{array}{c}
 	\frac{\partial \rho U}{\partial t}\\
 	\frac{\partial \rho U^2}{\partial t} + \frac{\partial p_t}{\partial t}\\
 	\frac{\partial \rho UV}{\partial t}\\
 	\frac{\partial \rho E}{\partial t} +\frac{\partial p_t U}{\partial t}   \\
 	\frac{\partial \rho E_r}{\partial t} + \frac{\partial \rho E_r U}{\partial t}
 	\end{array}\right) .
 	\end{aligned}
 	\end{equation*}
 	Combine with the semi-implicit update of source term, it can be observed that this is exactly a second order scheme for the limiting Euler system.
	\end{proof}

	In the limit of total equilibrium state with $Z_{rot} = 1$, both the translational temperature and the rotational temperature converges  to the equilibrium temperature as $\tau \to 0$. From Eq.\eqref{TdifferenceOrder}, the pressure $p_t$ can be rewritten as,
	\begin{equation}\label{EOS}
		p_t = p + p_t - p = p - \frac{4}{15}Z_{rot}\tau p\left( \frac{\partial U}{\partial x} +\frac{\partial V}{\partial y} \right).
	\end{equation}
	When $Z_{rot} = 1$, the second term on the right hand side of Eq. (\ref{EOS}) is exactly the bulk viscosity for rotational degrees of freedom in NS equations.
	
	In the continuum regime with $\Delta t \gg \tau$, for a fixed particle mass $m_p$, the number of sampled collisionless hydro-particles in cell $i$ is $e^{-\Delta t/\tau_i} |\Omega_i| \rho_i^h/m_p$. And the total simulation particle number $N_p$ in such regime decreases exponentially, $N_p\sim O(e^{-\Delta t/\tau})$. Therefore, the computational cost of UGKWP in continuum regime becomes comparable to hydrodynamic NS solvers,
such as recovering GKS for the NS solution \cite{xu2001}.

	\section{Numerical tests}\label{numericaltest}

	\subsection{Rotational relaxation in a homogeneous gas}
	For a diatomic homogeneous gas with different initial translational  $T_t$ and rotational $T_r$ temperature, the system will finally evolve to an equilibrium state with $T_{eq} = T_t = T_r$. The relaxation rate is related to the rotational collision frequency. Since there is no free transport phenomenon in homogeneous case, the governing equation becomes,
	\begin{equation}\label{homogeneous}
		\frac{\partial f}{\partial t} = \frac{\tilde{M}_t-f}{\tau} +\frac{\tilde{M}_{eq}-\tilde{M}_t}{Z_{rot}\tau}.
	\end{equation}
	Multiplying Eq.(\ref{homogeneous}) with $\xi ^2$ and integrating over the velocity and rotational energy space, the time evolution equation of rotational energy becomes
	\begin{equation*}
	\frac{\partial T_r}{\partial t} = \frac{T_{eq} - T_r}{Z_{rot}\tau},
	\end{equation*}
	from which the analytical solution can be obtained,
	\begin{equation}
		T_r(t) = T_{eq} - \left( T_{eq} - T_r(0) \right)e^{-\frac{t}{Z_{rot}\tau}}.
	\end{equation}
	In the computation, the variable hard sphere (VHS) model with $\omega = 1$ (Maxwell molecule) and  $\omega = 0.72$ (Nitrogen molecule) are used and the relaxation time is approximated as $\tau = \mu / p_t$. Fig. \ref{relaxation} shows the UGKWP solutions at $Z_{rot} = 3$ and $5$ for Maxwell molecule and Nitrogen molecule along with the analytical solutions. The computational results show that the UGKWP solutions for Maxwell molecule match better with the analytical solutions while the solutions for Nitrogen molecule relax a little bit faster. The mean collision time (m.c.t) here is calculated by the equilibrium temperature $T_{eq}$, which is used to normalize $t$.
		\begin{figure}
		\centering
		\includegraphics[width=0.48\textwidth]{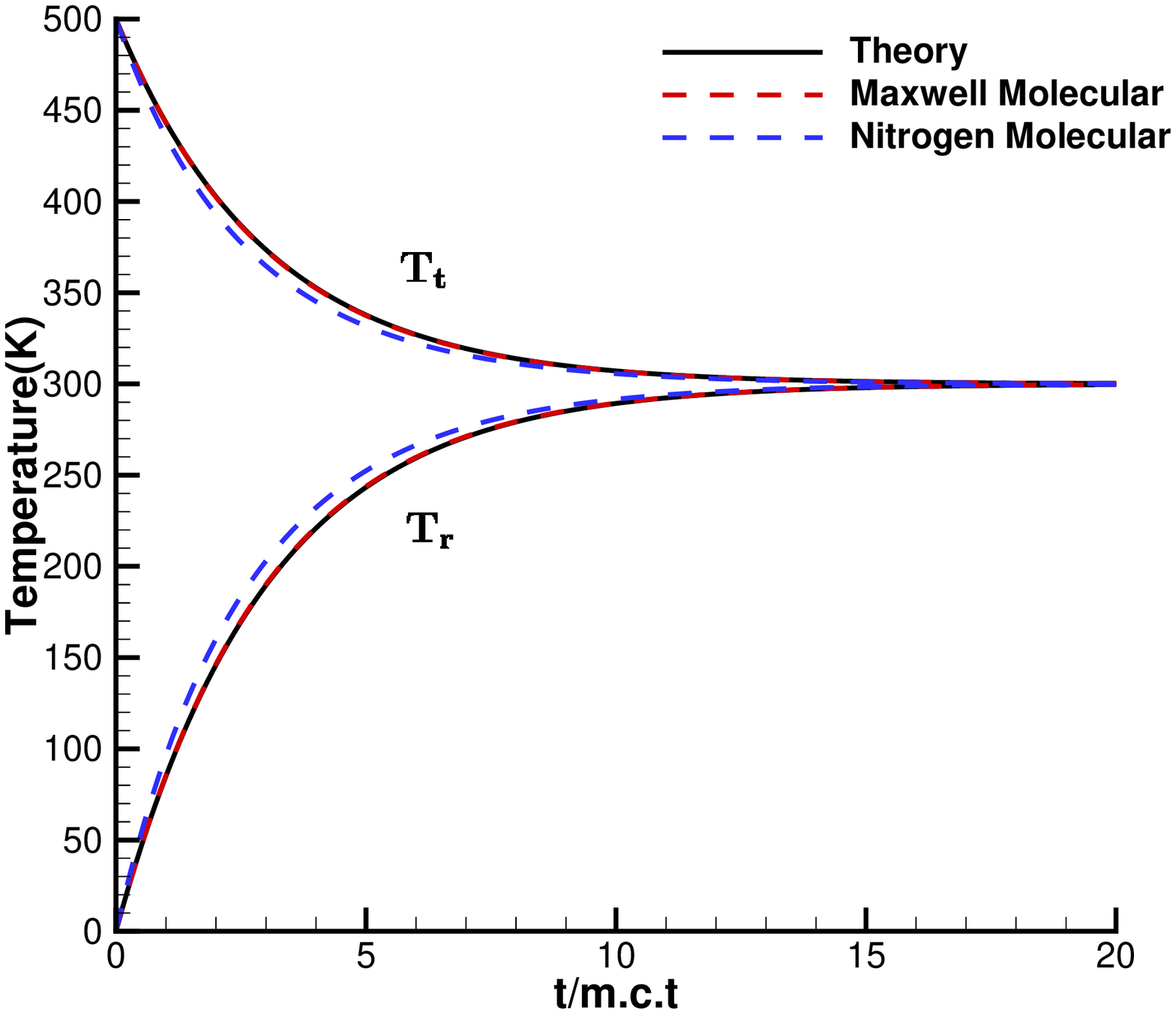}
		\includegraphics[width=0.48\textwidth]{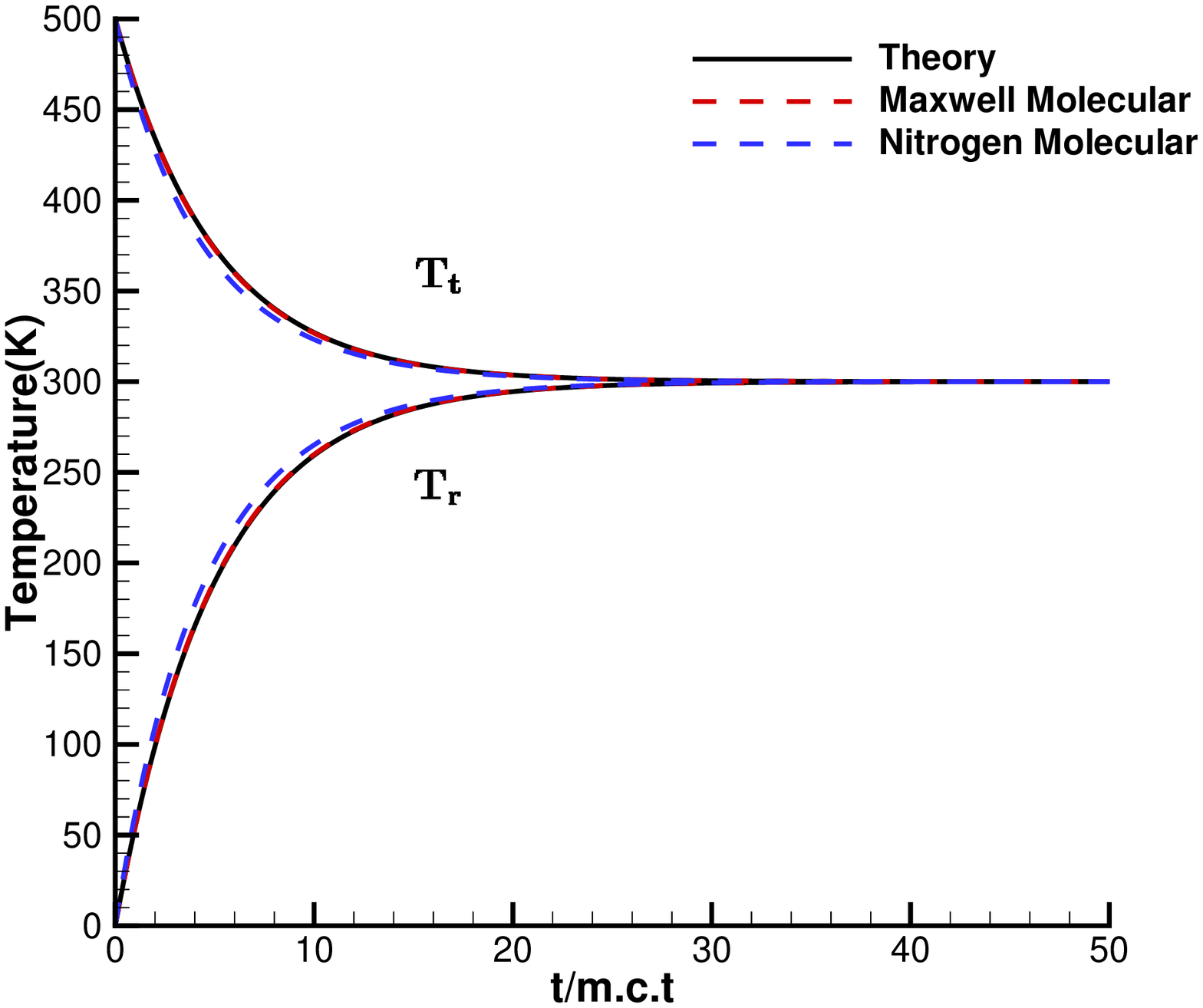}
		\caption{Rotational relaxation in a homogeneous gas. Left: $Z_{rot} = 3$. Right: $Z_{rot} = 5$}
		\label{relaxation}
	\end{figure}
	
	\subsection{1D shock tube problem}
	In 1D shock tube problem, two limiting cases of $Z_{rot} = 1$ and $Z_{rot} \to \infty$ are tested to validate the rotational relaxation. Theoretically, the solution for  $Z_{rot} = 1$ is a limiting solution for the diatomic gas with $T_t = T_r = T_{eq}$ and $Z_{rot} = \infty$ is a limiting solution for the monatomic gas. The computational domain is $[0,1]$ and initial condition is
	\begin{align*}
	(\rho, u, T_t,T_r)=\left\{\begin{array}{ll}
	(1.0,0,2.0,2.0) & x \leq 0.5 , \\
	(0.125,0,1.6,1.6) & x>0.5 .
	\end{array}\right.
	\end{align*}
	The viscous coefficient is given as
	\begin{equation}\label{vhs-vs1}
	\mu=\mu_{ref}\left(\frac{T}{T_0}\right)^{\omega},
	\end{equation}
	with the temperature dependency index $\omega=0.72$, and the reference viscosity
	\begin{equation}\label{vhs-vs2}
	\mu_{ref}=\frac{15\sqrt{\pi}}{2(5-2\omega)(7-2\omega)}\mathrm{Kn}.
	\end{equation}
	The Knudsen number is $\mathrm{Kn} = 0.0001$, and the cell size is fixed as $\Delta x = 0.01$. These two test cases are calculated up to $t = 0.12$. As shown in Fig. \ref{Sod}, the solutions of UGKS and UGKWP agree well in these two limiting cases.
	\begin{figure}
		\centering
		\includegraphics[width=0.48\textwidth]{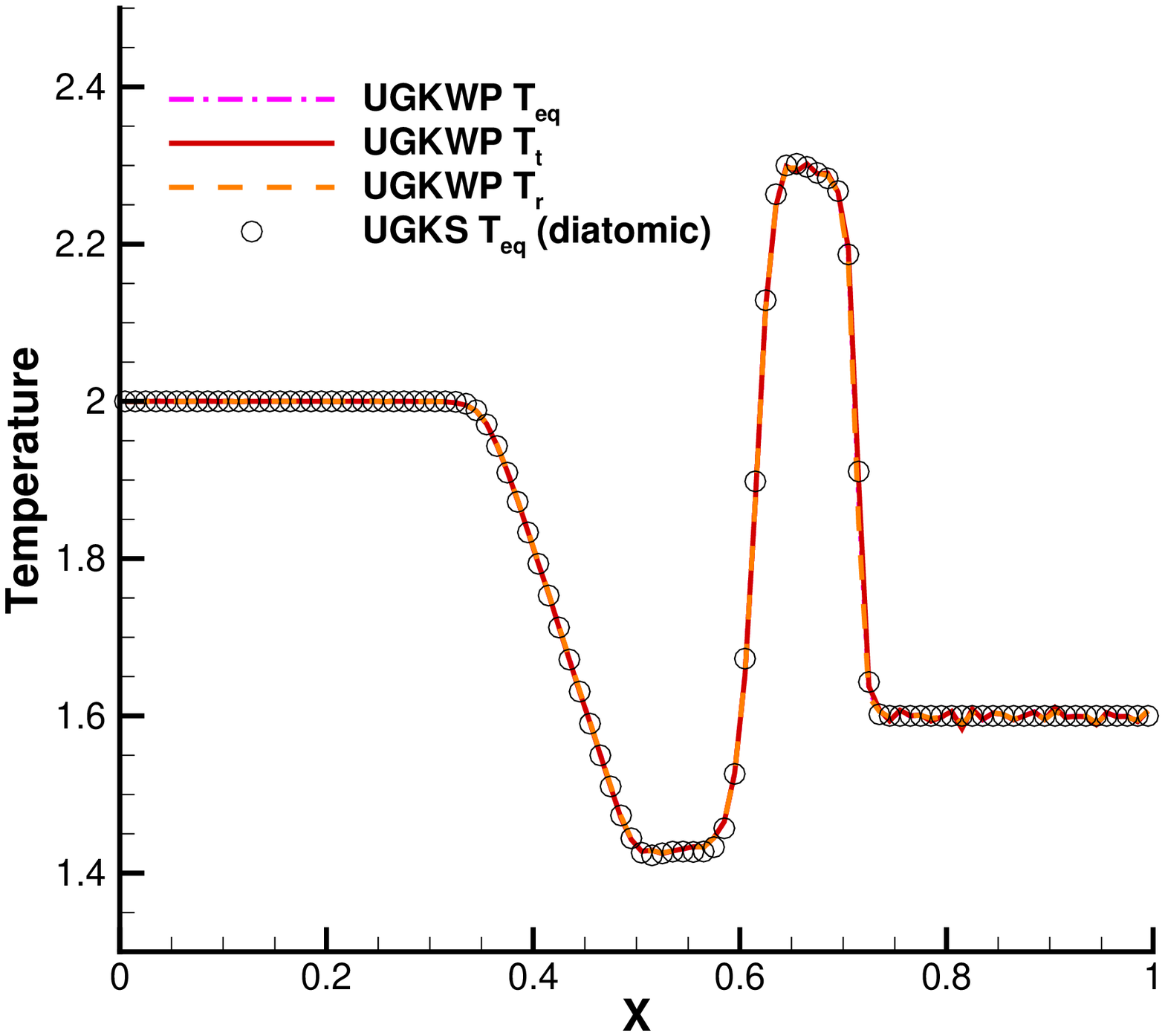}{a}
		\includegraphics[width=0.48\textwidth]{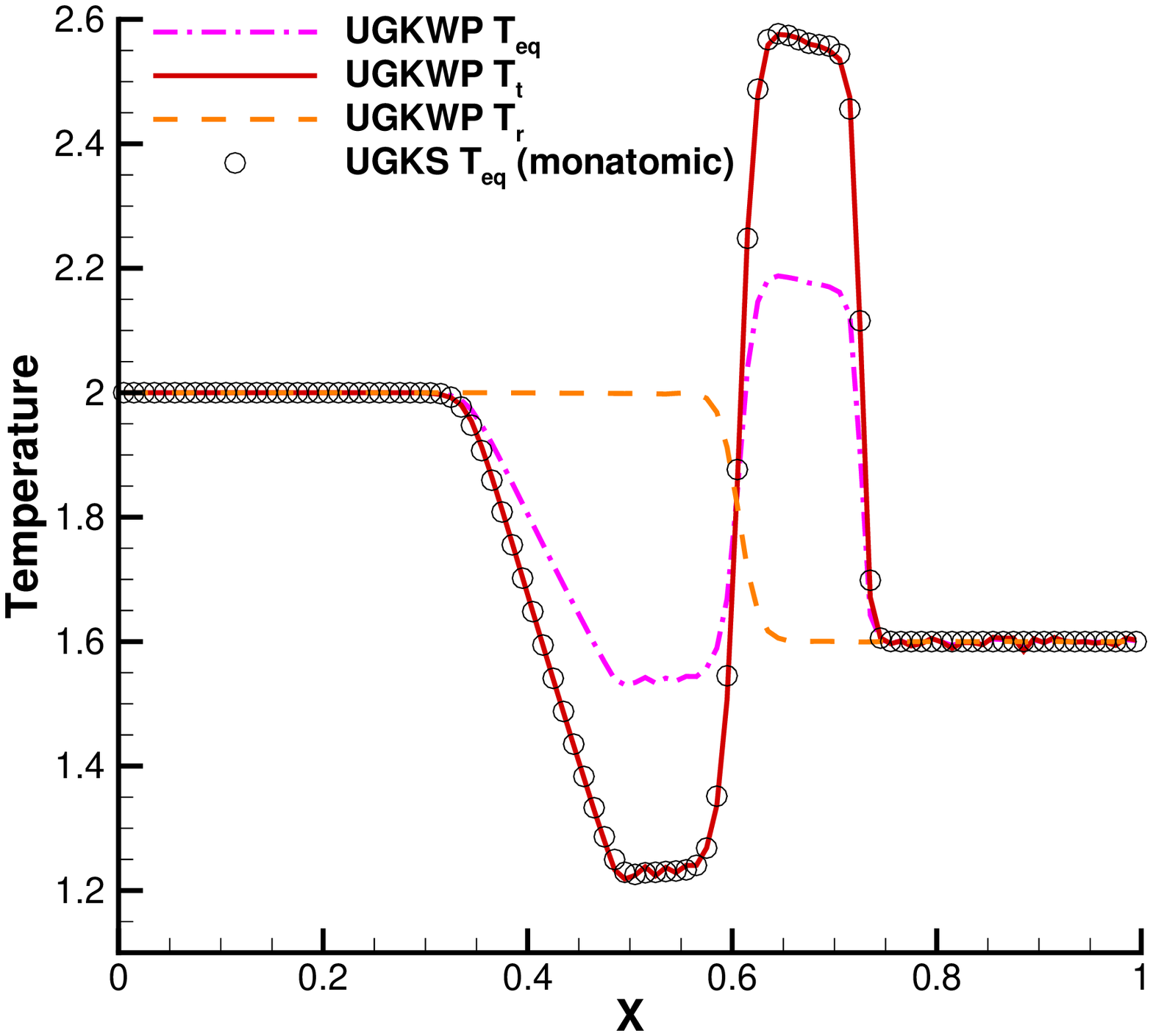}{b}
		\caption{The UGKWP results at different $Z_{rot}$. (a)$Z_{rot} = 1$; (b)$Z_{rot} = 100000$.}
		\label{Sod}
	\end{figure}
	
	\subsection{Normal shock}
	To demonstrate the accuracy of UGKWP method in capturing the highly non-equilibrium flow, the one dimensional shock wave is studied. For the nitrogen gas, the viscous coefficient follows Eq.\eqref{vhs-vs1} and Eq.\eqref{vhs-vs2} with the temperature dependency index $\omega=0.72$. In this calculation, the reference length is the upstream mean free path, and the computational domain is [-25,25] with 100 cells.
	The upstream ($x\le0$) and downstream ($x>0$) is connected by the Rankine-Hugoniot condition. The collision rotation number used in the UGKWP is $Z_{rot} = 2.4$. In order to reduce the statistical noise, $5\times10^{3}$ simulation particles are used in each cell. The normalized temperature and density from UGKWP and DSMC \cite{liu2014unified} at $\mathrm{M} = 1.53, 4.0, 5.0 ,7.0$ are plotted in Fig. \ref{shockdsmc}. As analyzed before, since the Rykov model reduces to Shakhov model at large $Z_{rot}$, the early rising of the temperature occurs at high Mach number.

	\begin{figure}
		\centering
		\includegraphics[width=0.48\textwidth]{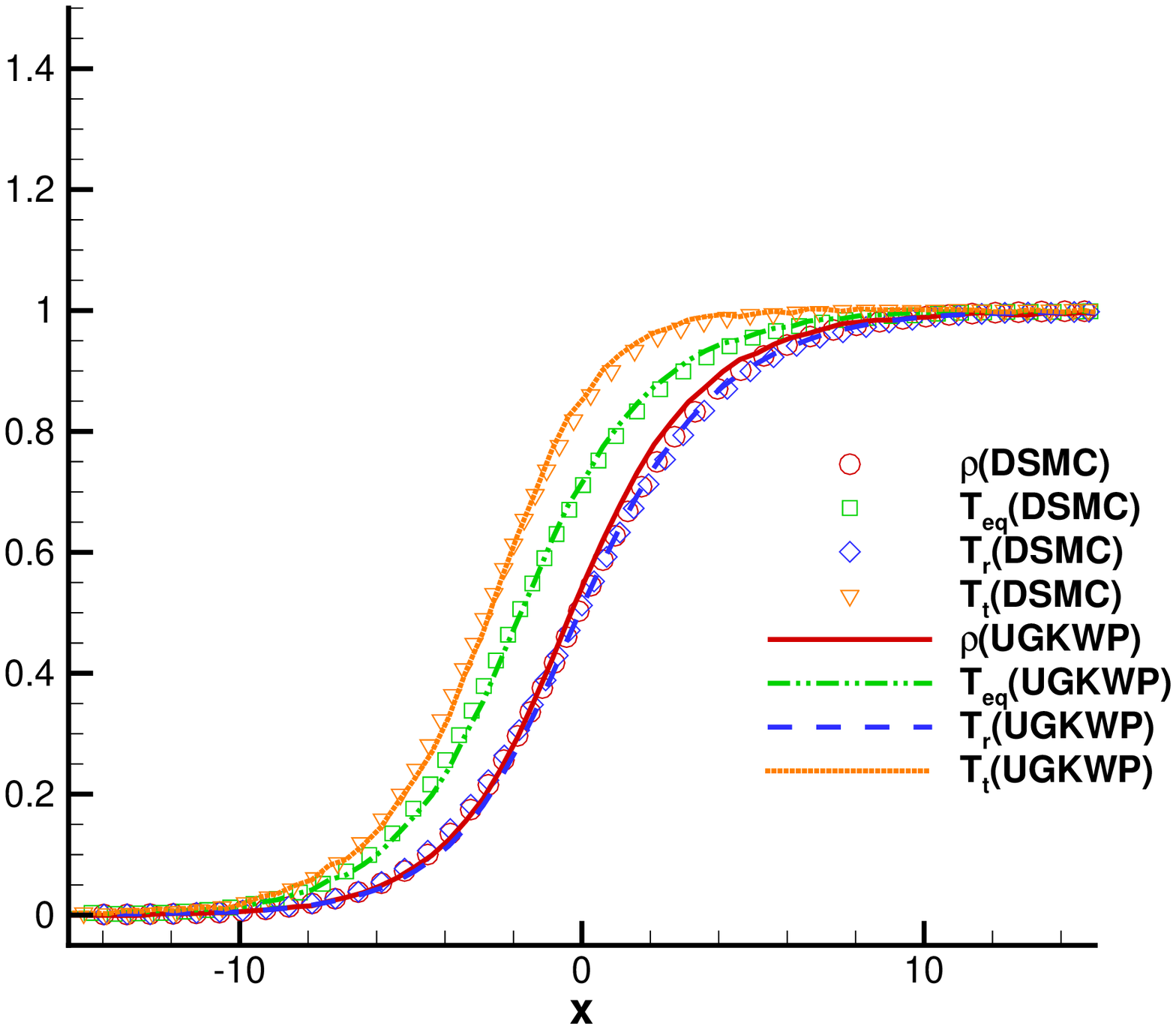}{a}
		\includegraphics[width=0.48\textwidth]{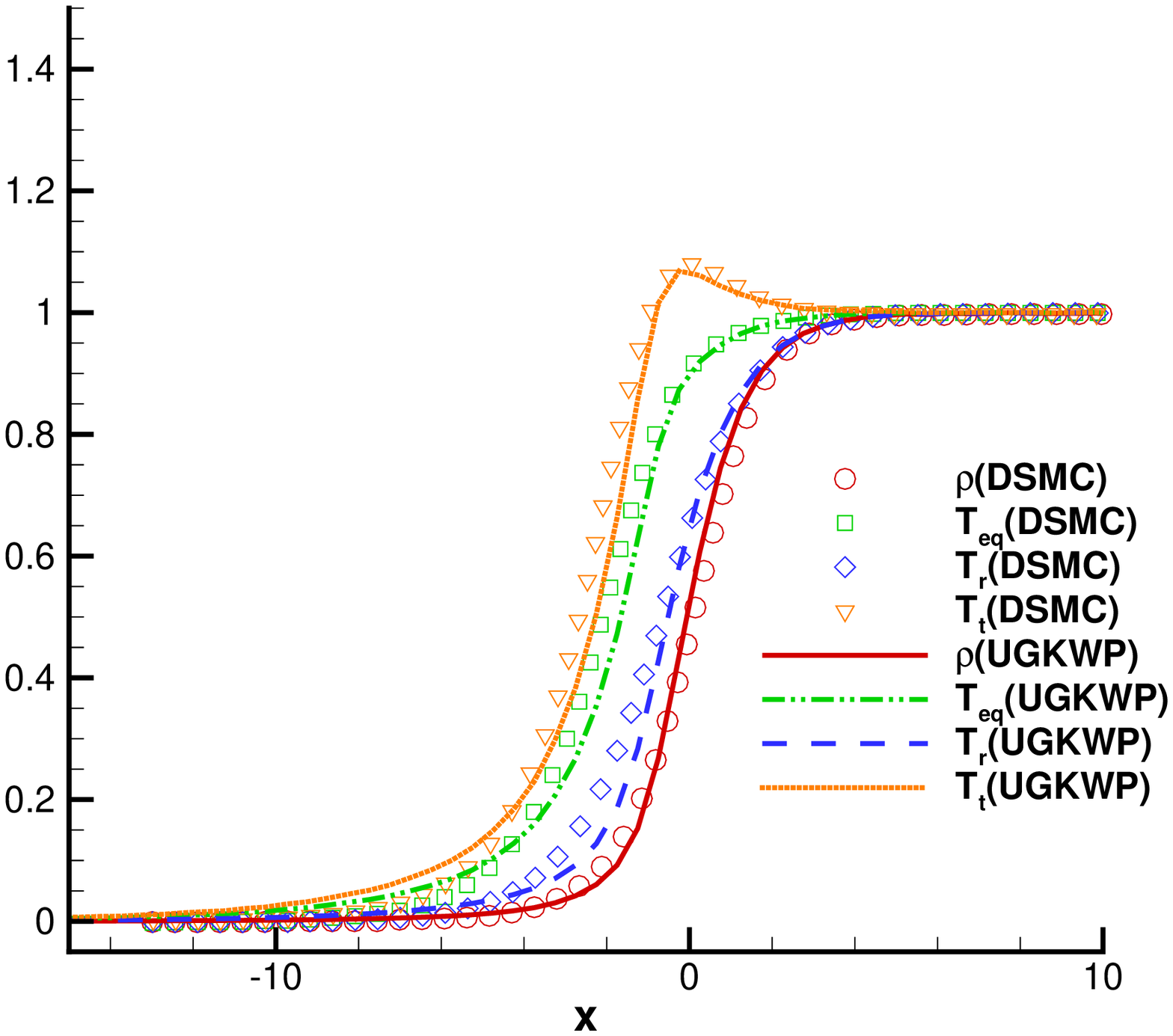}{b}
		\includegraphics[width=0.48\textwidth]{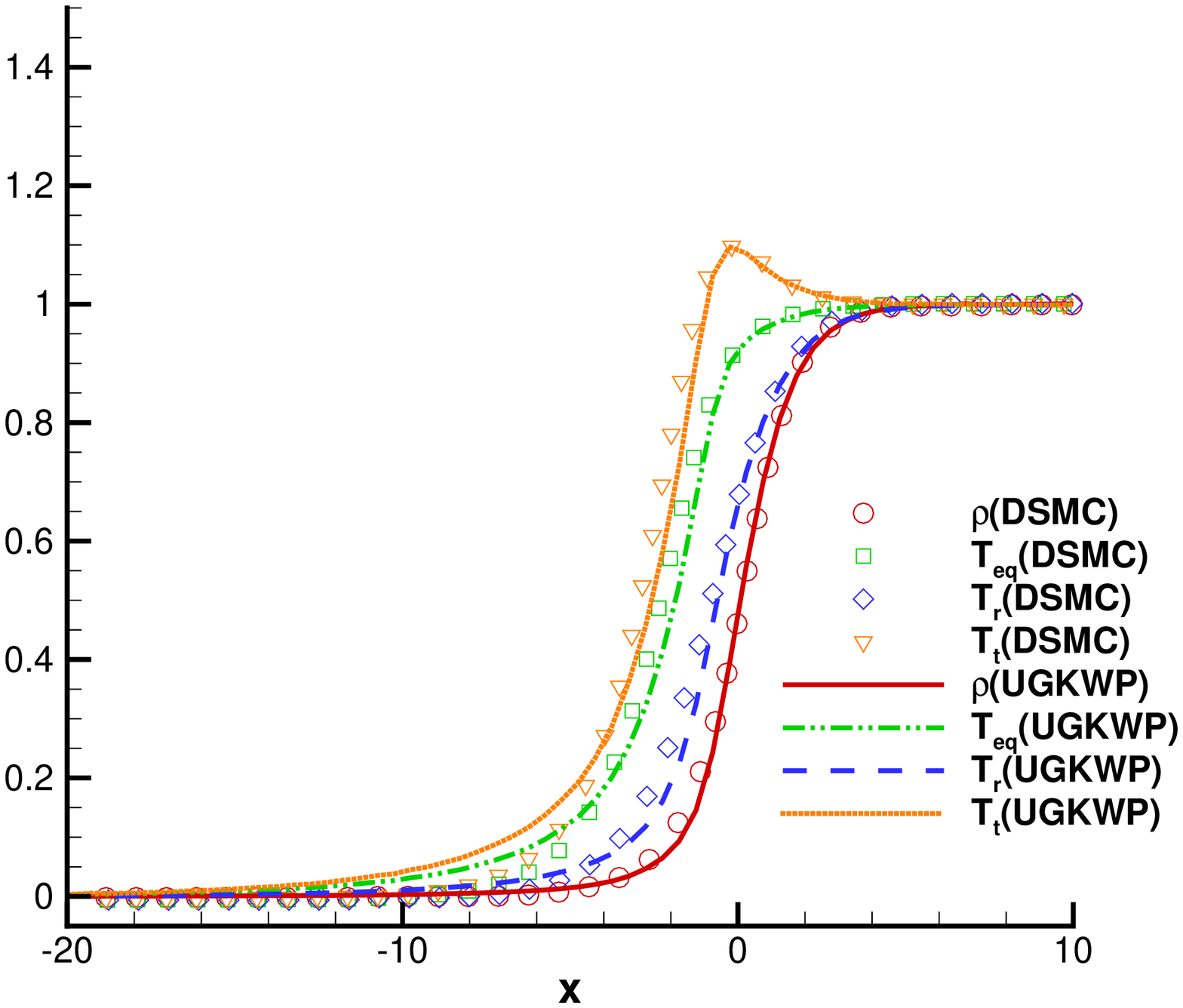}{c}
		\includegraphics[width=0.48\textwidth]{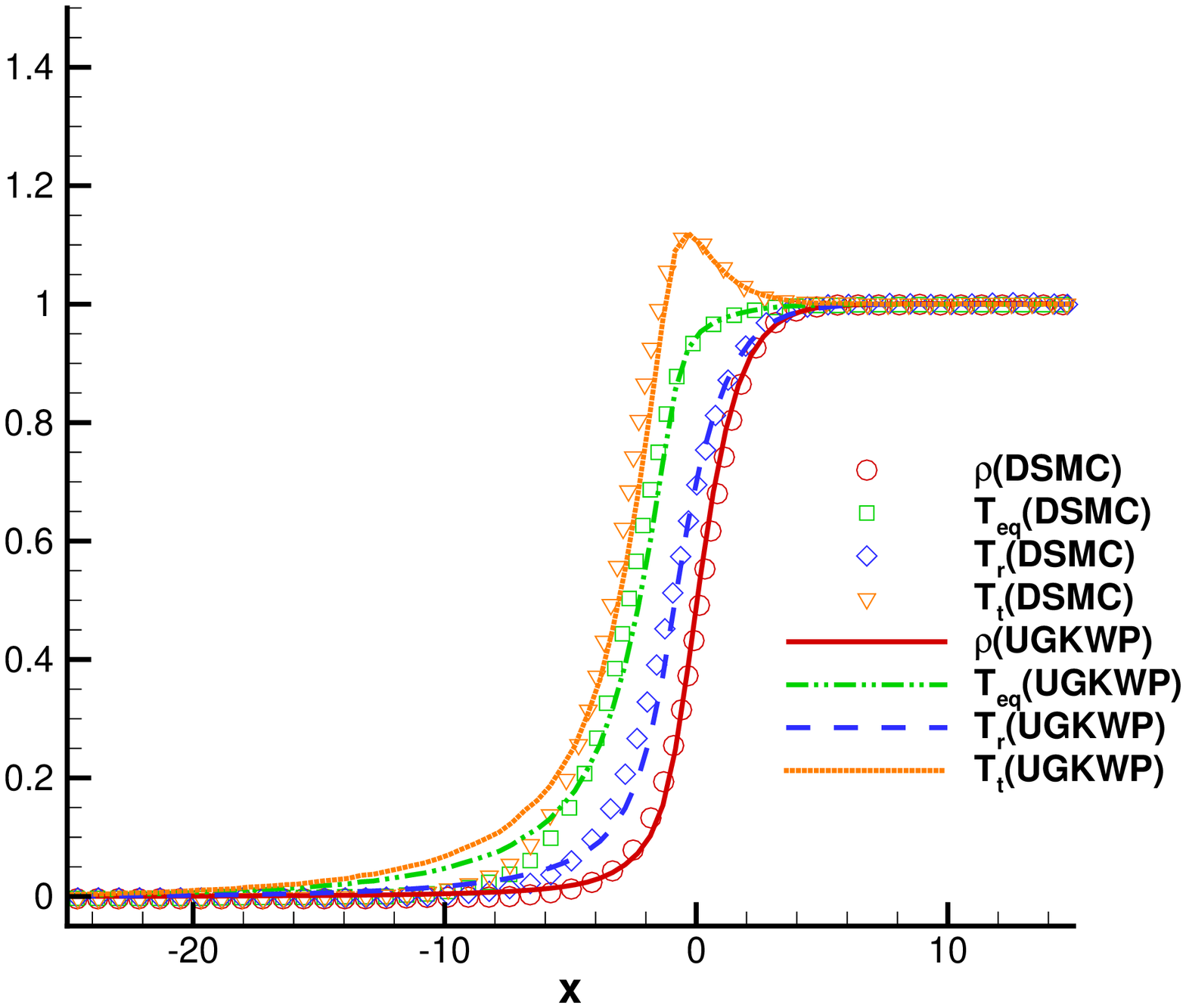}{d}
		\caption{Comparison of UGKWP and DSMC results of nitrogen shock wave at different Mach number in nitrogen. (a)$\mathrm{M} = 1.53$; (b)$\mathrm{M} = 4.0$; (c)$\mathrm{M} = 5.0$; (d)$\mathrm{M} = 7.0$}
		\label{shockdsmc}
	\end{figure}
		 Next we compare UGKWP results with experiment data for nitrogen shock waves with upstream Mach number $\mathrm{M}=7$ and $\mathrm{M}=12.9$. The rotaional collision number is given by
	\begin{equation}\label{Zrot}
	Z_{rot} = \frac{Z_{rot}^{\infty}}{1 + (\pi^2/2)\sqrt{T^*/T_t} + (\pi^2/4 + \pi)(T^*/T_t)},
	\end{equation}
	where $Z_{rot}^{\infty} = 18.0$ and $T^* = 91.5K$ are used in the computation. Fig. \ref{shock} shows the good agreement \cite{robben1966experimental}.

	\begin{figure}
		\centering
		\includegraphics[width=0.48\textwidth]{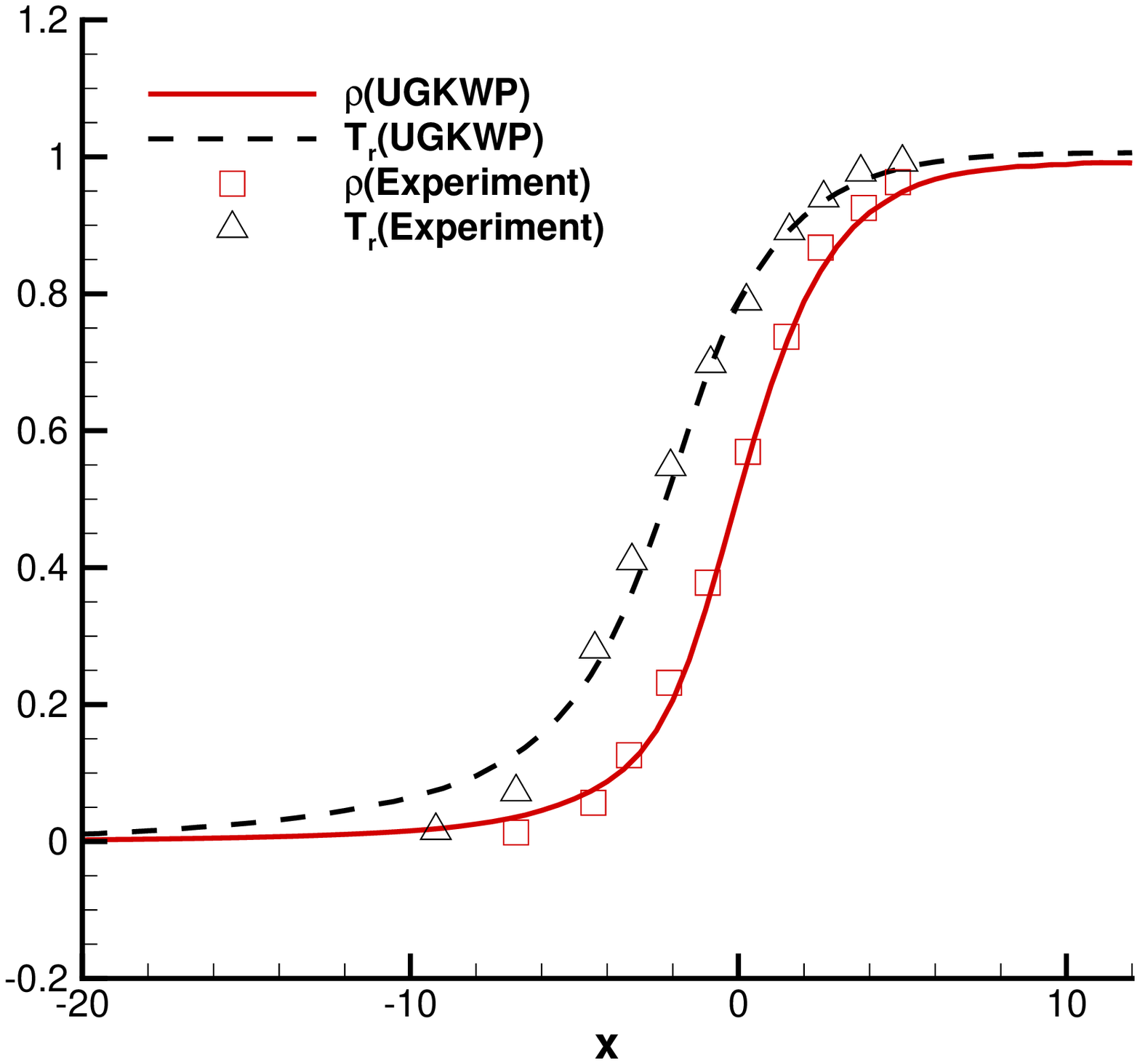}{a}
		\includegraphics[width=0.48\textwidth]{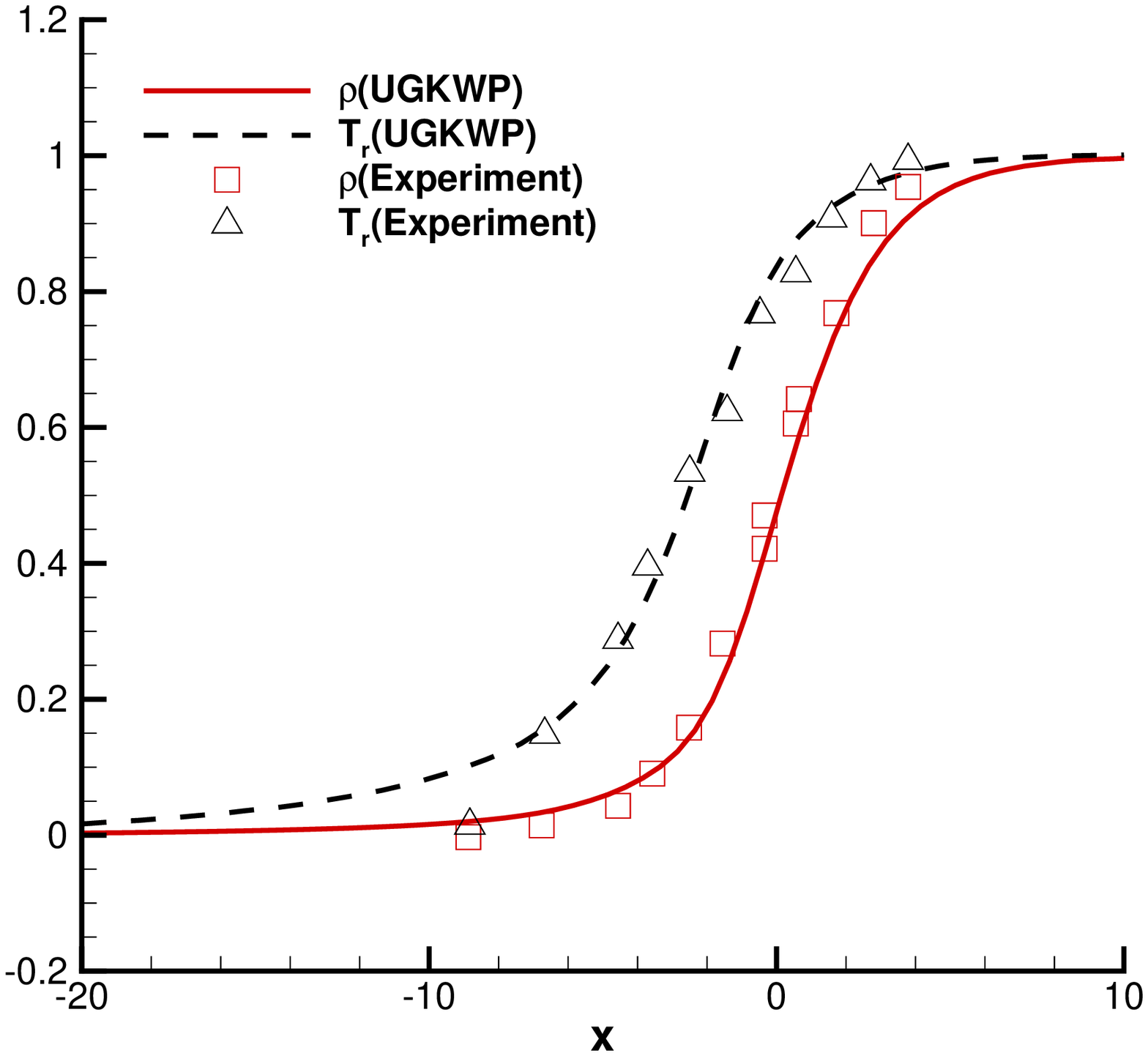}{b}
		\caption{Normalized density and rotational temperature profile of normal shock wave at $\mathrm{M}=7$ (left) and $\mathrm{M}=12.9$ (right). The experiment results are shown in symbol, and the UGKWP solution are shown in line. $x$ is normalized by the mean free path based on the sonic temperature.}
		\label{shock}
	\end{figure}

	\subsection{Planar Fourier flow}
	In this case, we consider the flow driven by the temperature gradient. Consider the nitrogen gas between two parallel plates with a distance $L$. The temperatures at the bottom and top are fixed with values $T_0 = 4/3$ and $T_1 = 2/3$. We set up the simulation as a $1$D problem in the $x$-direction. The computational domain is [0,1] with 20 cells and each cell has a maximum number of 150 particles. The initial density and Mach number of the gas inside the channel are $1$ and $0$. Diffuse boundary conditions are adopted at both plates.
	Fig. \ref{planar} shows the density and translational temperature computed by UGKWP, which are in good agreement with the DSMC results\cite{wu2015kinetic}.
	\begin{figure}
		\centering
		\includegraphics[width=0.48\textwidth]{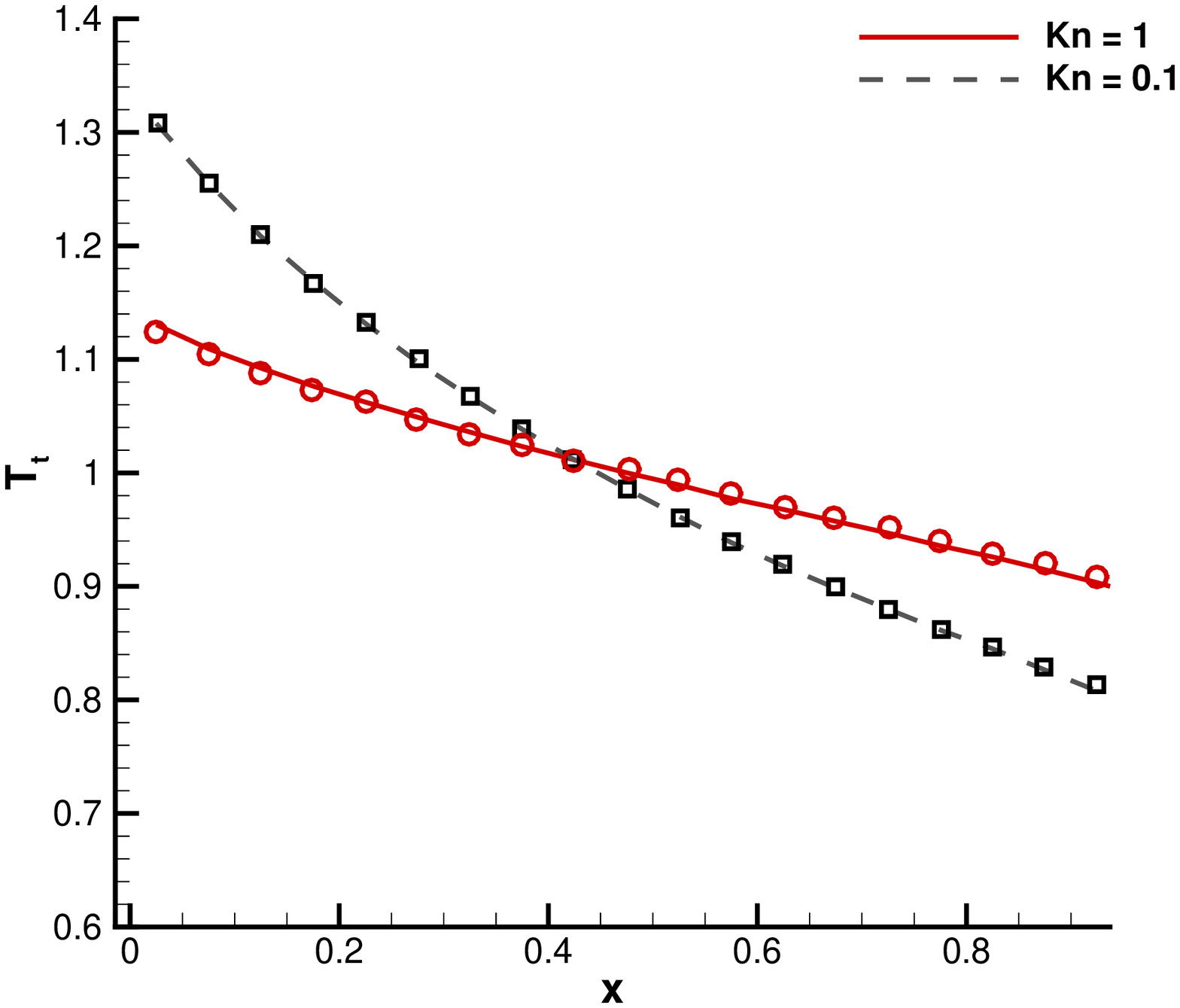}{a}
		\includegraphics[width=0.48\textwidth]{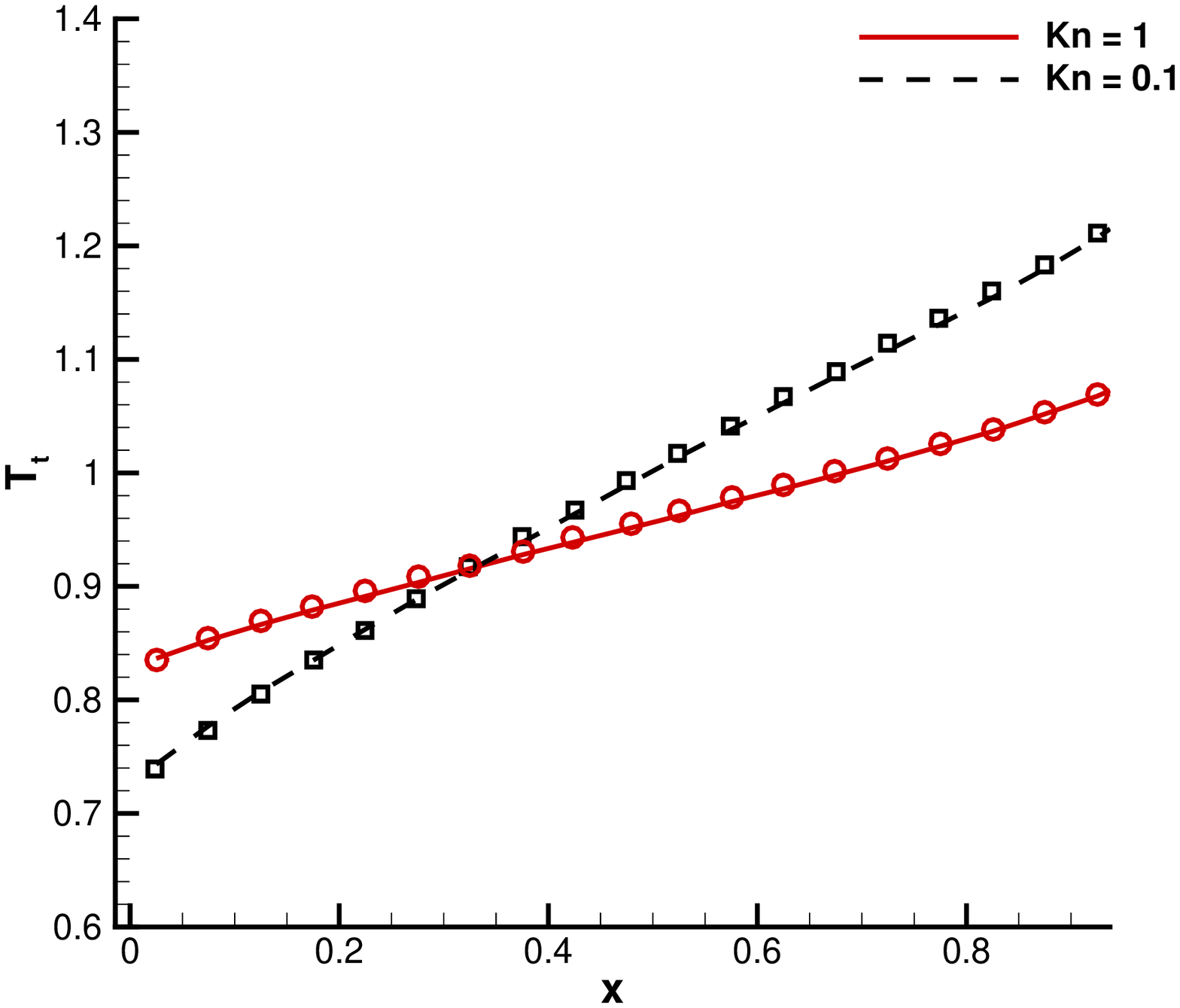}{b}
		\caption{(a)Density and (b)translational temperature profile of planar Fourier flow at $Kn = 0.1$ (squre) and $Kn = 1$ (circle). The DSMC results are shown in symbol, and the UGKWP solution are shown in line.}
		\label{planar}
	\end{figure}
	\subsection{Flow around a blunt circular cylinder}
	Next we calculate the hypersonic nitrogen gas flow passing over a blunt circular cylinder at Mach number $\mathrm{M} = 5.0$ and Knudsen number $\mathrm{Kn}=0.1$. The cylinder has a radius $R = 0.01m$  and the computational domain is divided with $64 \times 150$ cells.
	For nitrogen gas, the molecular number density $n$ is $n=1.2944\times 10^{21}$ /m$^3$. The viscosity coefficient at upstream is $\mu = 1.65788 \times 10^{-5}$ Ns/m$^2$. The cylinder has a surface with constant temperature $T_w = 273$K, and diffusive boundary condition is adopted here. The rotational collision number $Z_{rot}$ is calculated by Eq.(\ref{Zrot}).
 	The dimensionless quantities are used with respect to the reference length as the cylinder radius $L_{ref}=R$, the reference velocity $U_{ref}=\sqrt{2RT_\infty}$, the reference time $t_{ref}=L_{ref}/U_{ref}$, the reference density $\rho_{ref}= \rho_\infty$, and the reference temperature $T_{ref}=T_\infty$.
	The distribution of density, velocity, temperature, and rotational temperature are shown in Fig. \ref{blunt2d}. Fig. \ref{blunt} shows the comparisions between UGKWP results and DSMC solutions \cite{liu2014unified}. Reasonable agreement have been achieved.

	\begin{figure}
		\centering
		\includegraphics[width=0.48\textwidth]{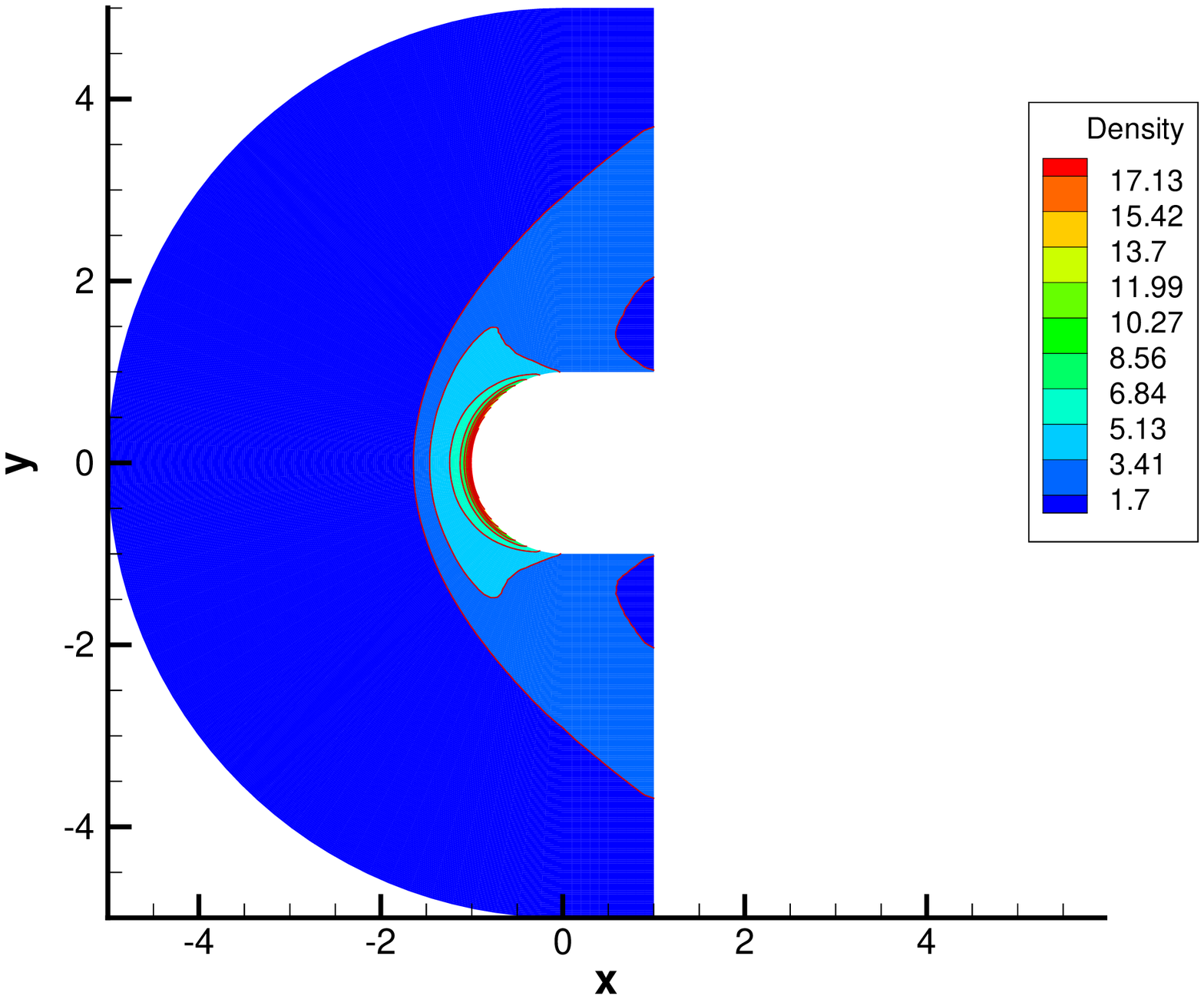}{a}
		\includegraphics[width=0.48\textwidth]{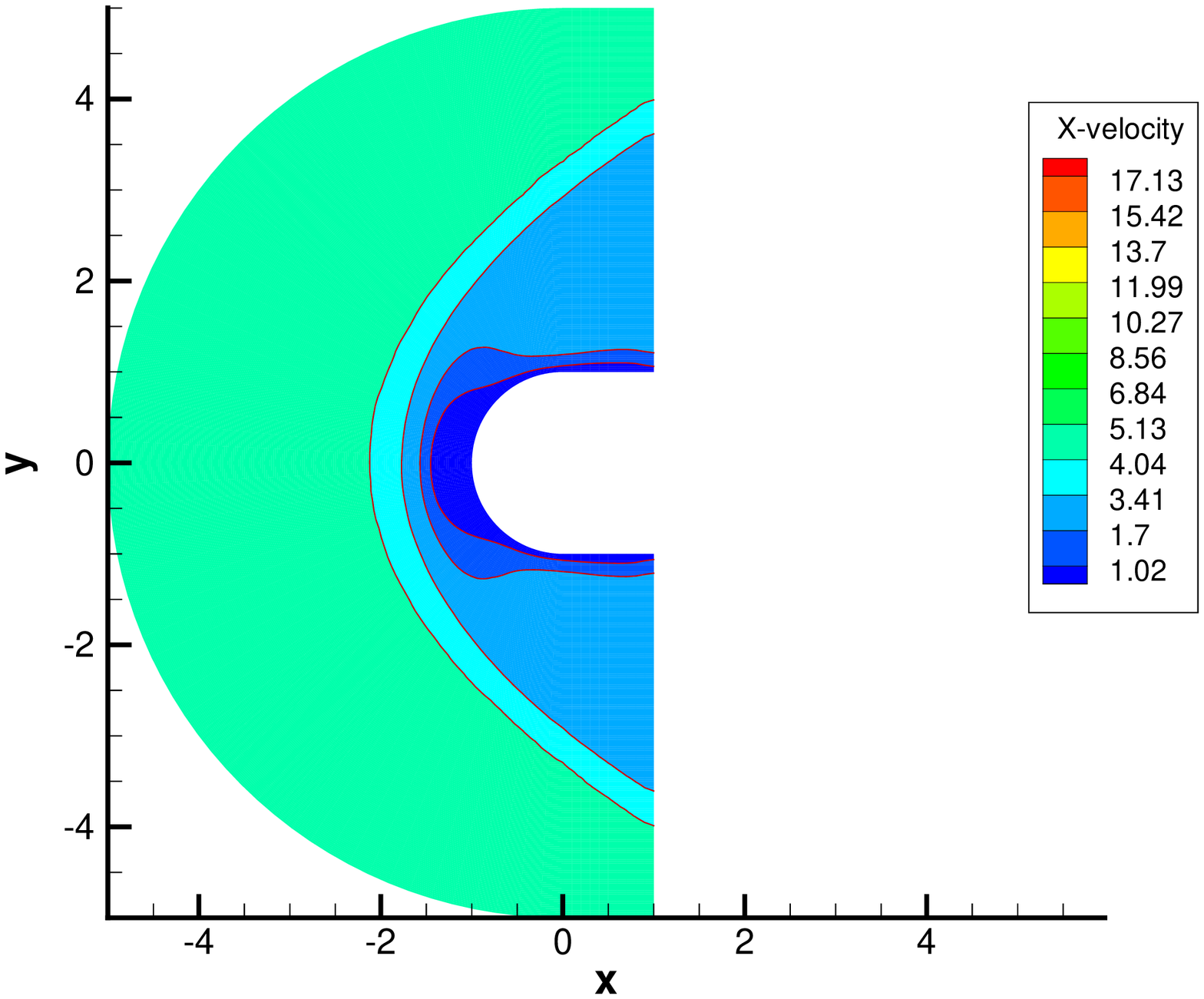}{d}
		\includegraphics[width=0.48\textwidth]{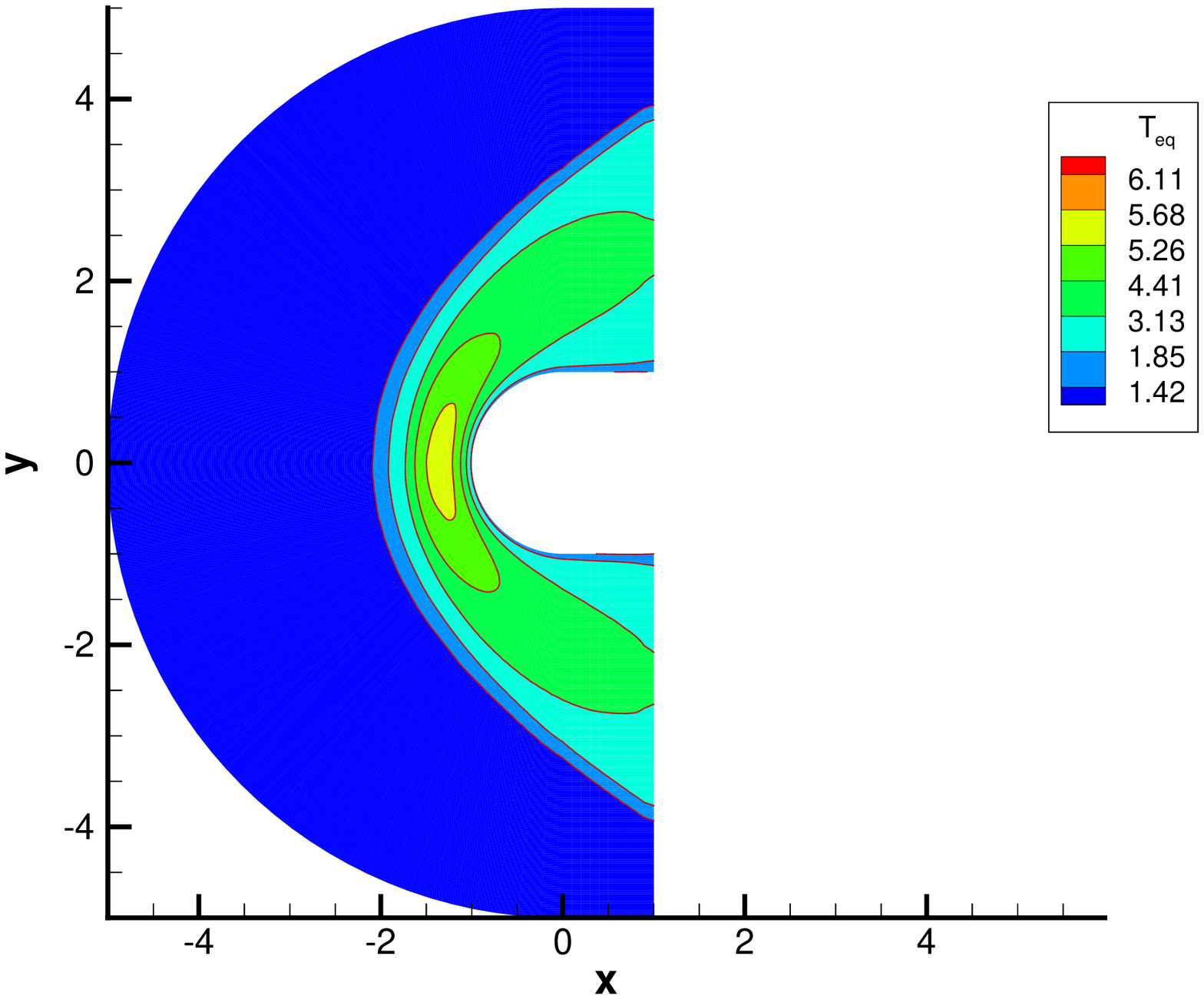}{b}
		\includegraphics[width=0.48\textwidth]{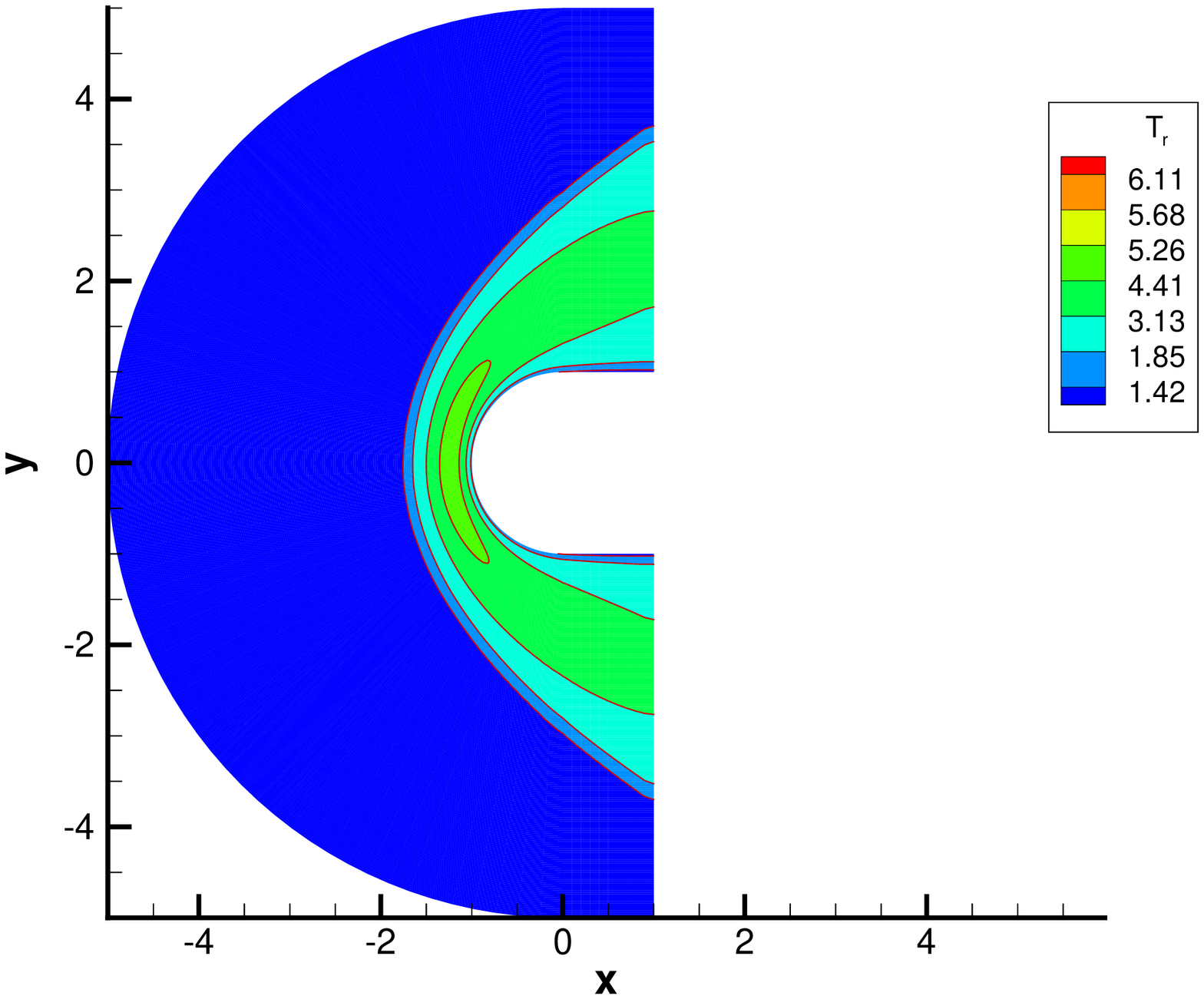}{c}
		\caption{(a)Density and (b)x direction velocity (c)temperature (d) rotational temperature contour for $Kn = 0.1$ and $\mathrm{M} = 5.0$. }
		\label{blunt2d}
	\end{figure}

		\begin{figure}
		\centering
		\includegraphics[width=0.48\textwidth]{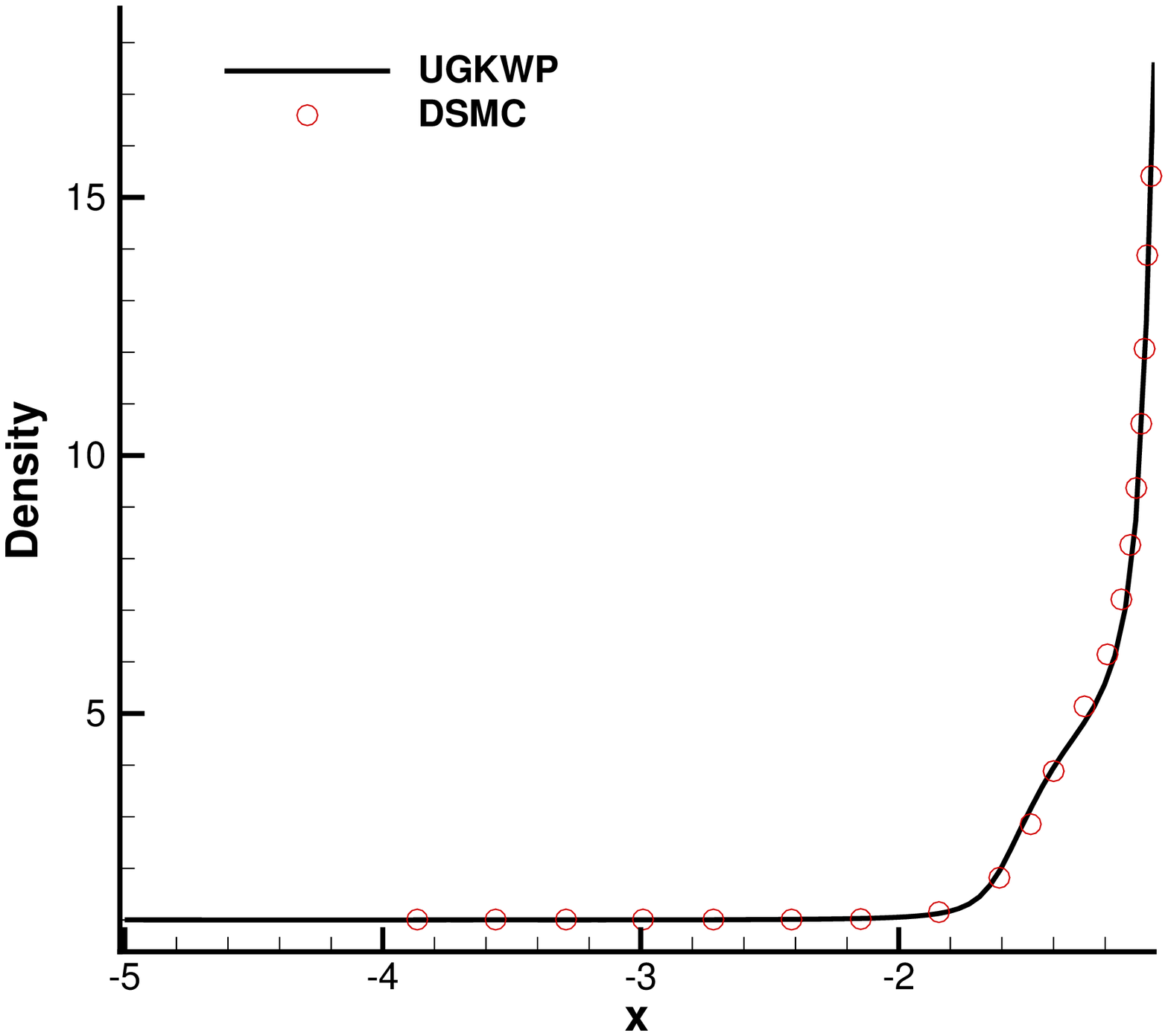}{a}
		\includegraphics[width=0.48\textwidth]{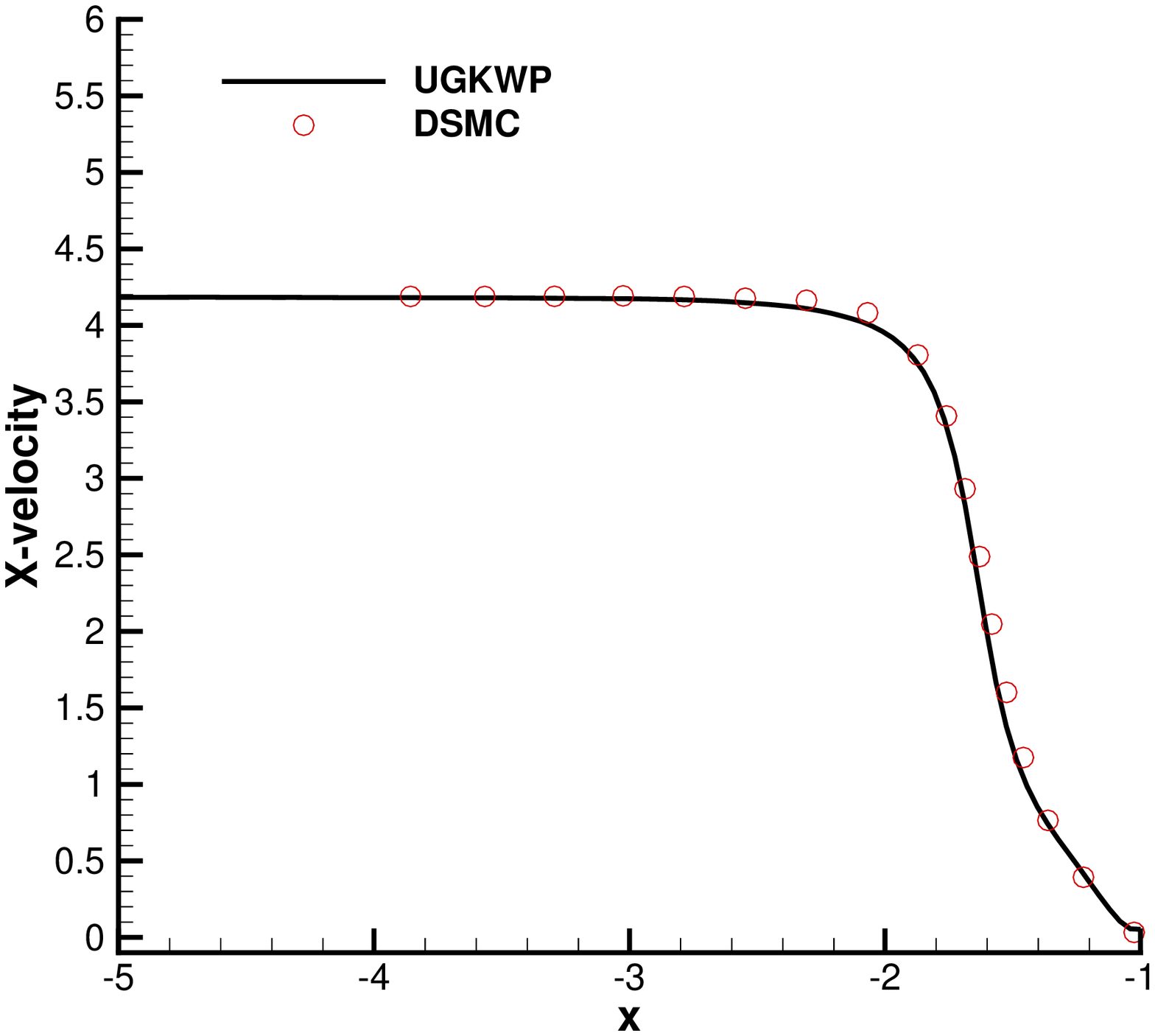}{d}
		\includegraphics[width=0.48\textwidth]{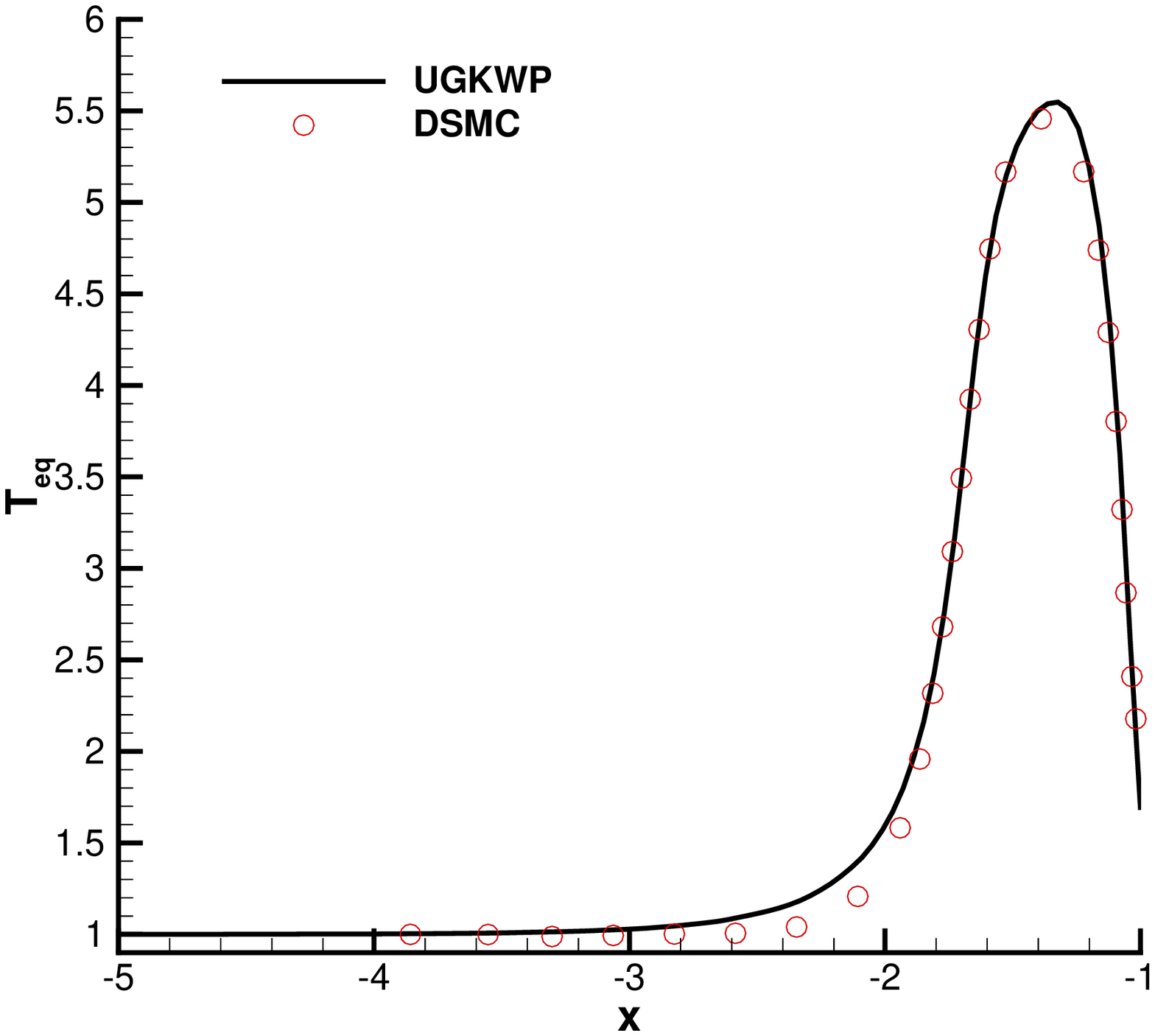}{b}
		\includegraphics[width=0.48\textwidth]{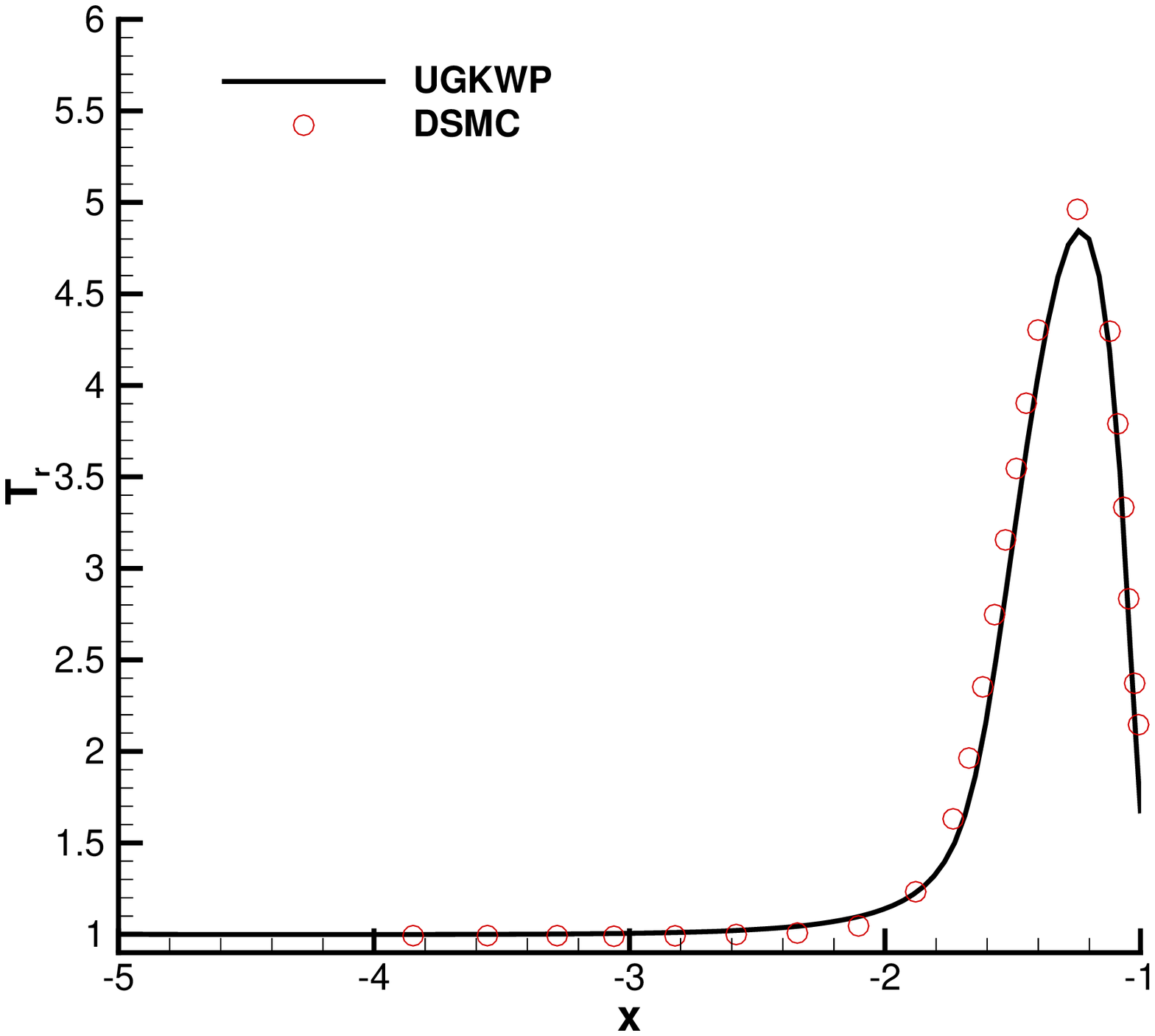}{c}
		\caption{(a) Density, (b) x direction velocity, (c) temperature, (d) rotational temperature profile along stagnation line for $\mathrm{M}=5.0$ and $\mathrm{Kn}=0.1$. The DSMC results are shown in symbol, and the UGKWP solutions are shown in line.}
		\label{blunt}
	\end{figure}
	\subsection{Flow passing a flat plate}
	Following the experiment conducted by Tsuboi and Matsumoto \cite{zhu2019unified}, we simulate the hypersonic rarefied gas flow over a flat plate using UGKWP for nitrogen gas. The case is run 34, where the nozzle exit Mach number is $\mathrm{M} = 4.89$, the nozzle exit pressure is $P_e = 2.12\mathrm{Pa}$, the stagnation pressure is $P_0 = 983\mathrm{Pa}$ and the nozzle exit temperature is $T_e = 116\mathrm{K}$. The stagnation temperature is $T_0 = 670\mathrm{K}$, which is used as a reference temperature to determine the viscosity coefficient,
	\begin{equation*}
	\mu = \mu_{ref} \left( \frac{T_t}{T_0}\right)^{\omega} .
	\end{equation*}
	The reference viscosity is defined as
	\begin{equation*}
	\mu_{ref} = \frac{5\sqrt{2 \pi R T_{ref}}}{16} \frac{\rho_{ref}}{l_{mfp}},
	\end{equation*}
	where $\rho_{ref} = 6.15 \times 10 ^{-5}$kg m $^{-3}$ is the reference density, $l_{mfp} = 0.78mm$ is the mean free path and $T_{ref} = 116$K is the reference temperature. The flat plate has a constant wall temperature of $290$K and the diffusive boundary condition is adopted at the plate. In this case, the relaxation collision number $Z_{rot}$ is set to be $3.5$.
	
	In this study, $59 \times 39$ grid points are used above the plate and $44 \times 25$ grid points are used below the plate, which has the same configuration as that used in UGKS \cite{liu2014unified}. The contours of the density, equilibrium temperature, rotational temperature and translational temperature are shown in Fig.\ref{flatplate2d}. The temperature distribution along the vertical line above the flat plate at $x = 5$mm and $x = 20$mm are shown in Fig. \ref{flatplatecomp}, which show good agreement with the experiment measurements.
	\begin{figure}
		\centering
		\includegraphics[width=0.48\textwidth]{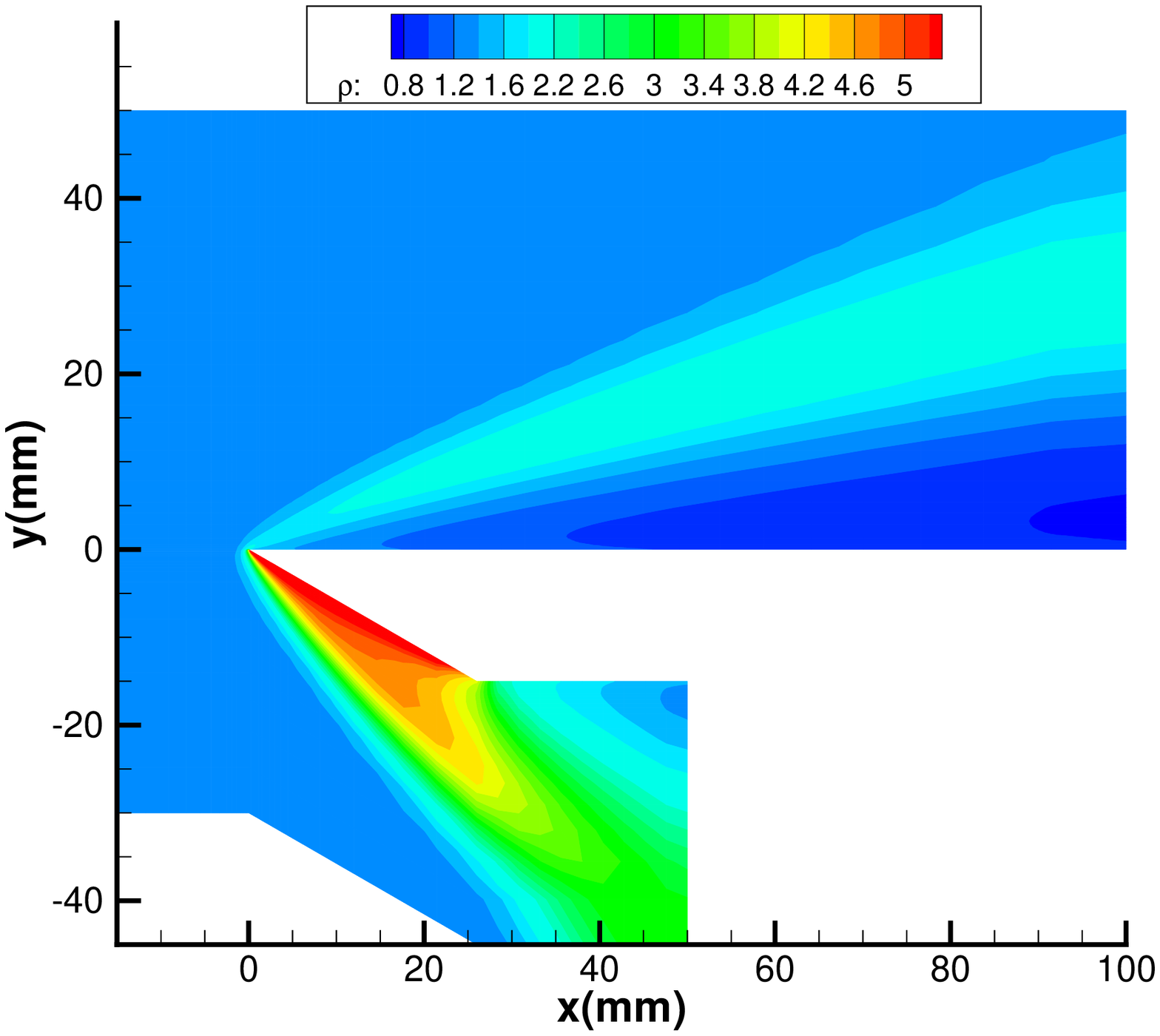}{a}
		\includegraphics[width=0.48\textwidth]{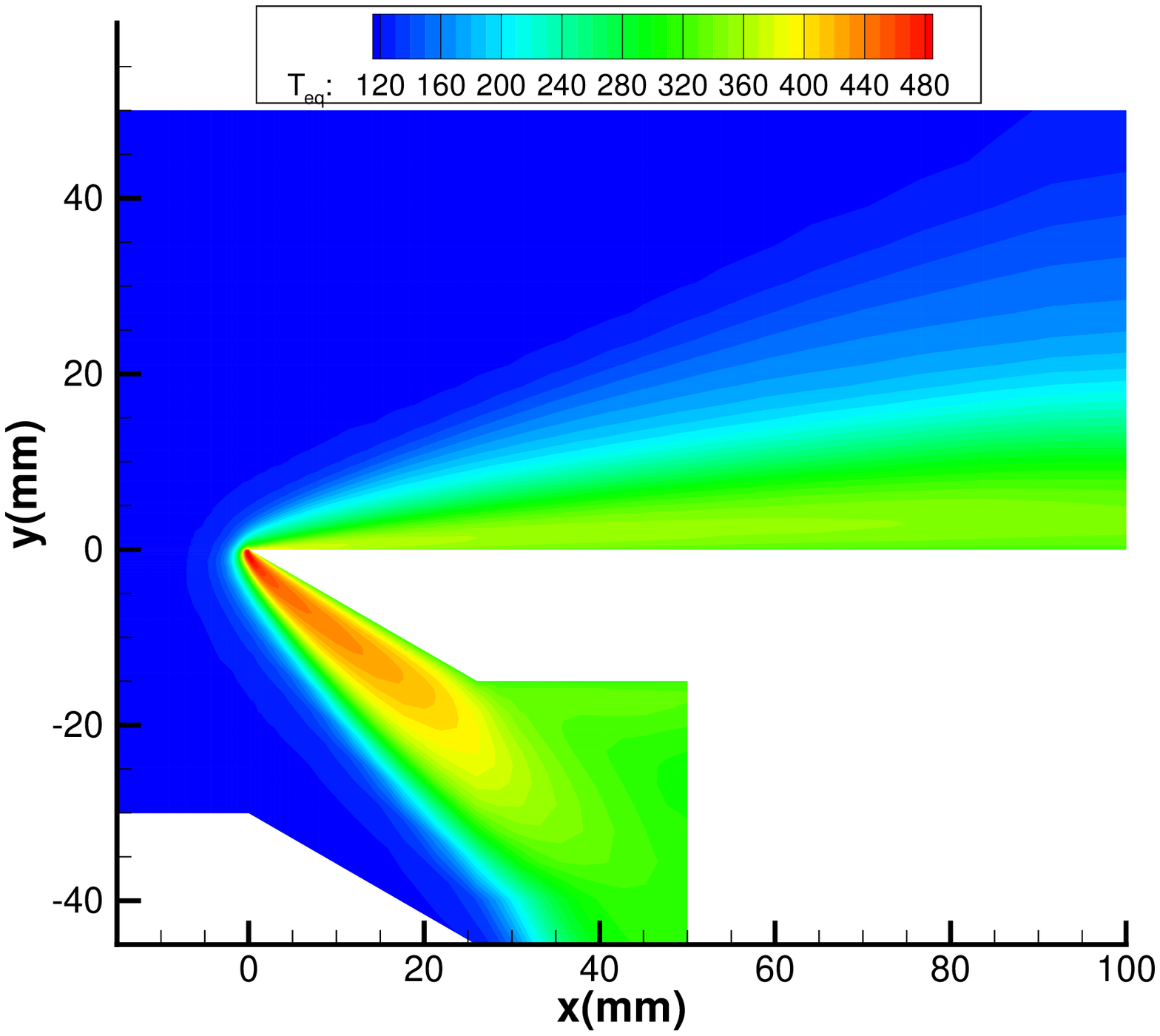}{b}
		\includegraphics[width=0.48\textwidth]{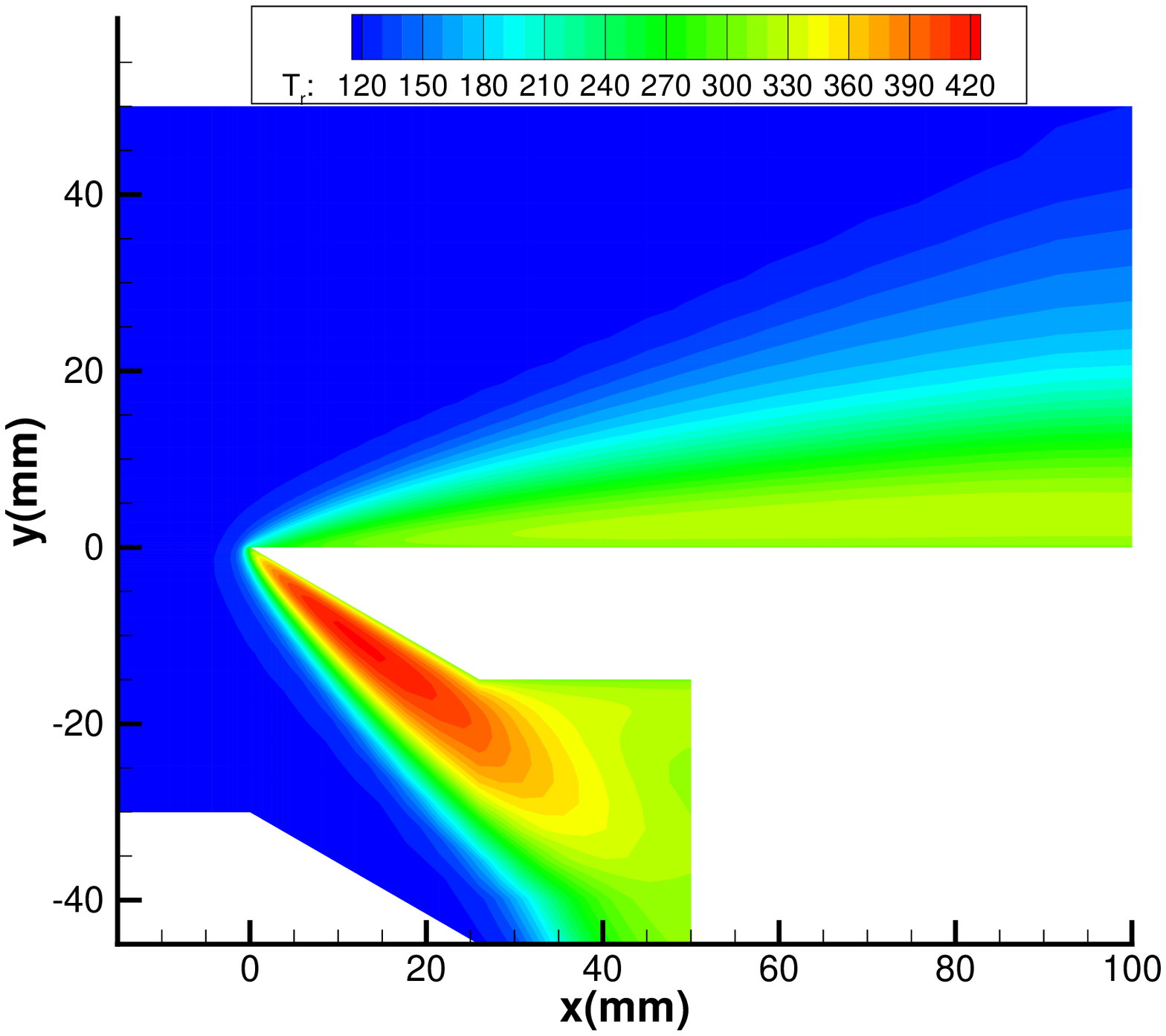}{c}
		\includegraphics[width=0.48\textwidth]{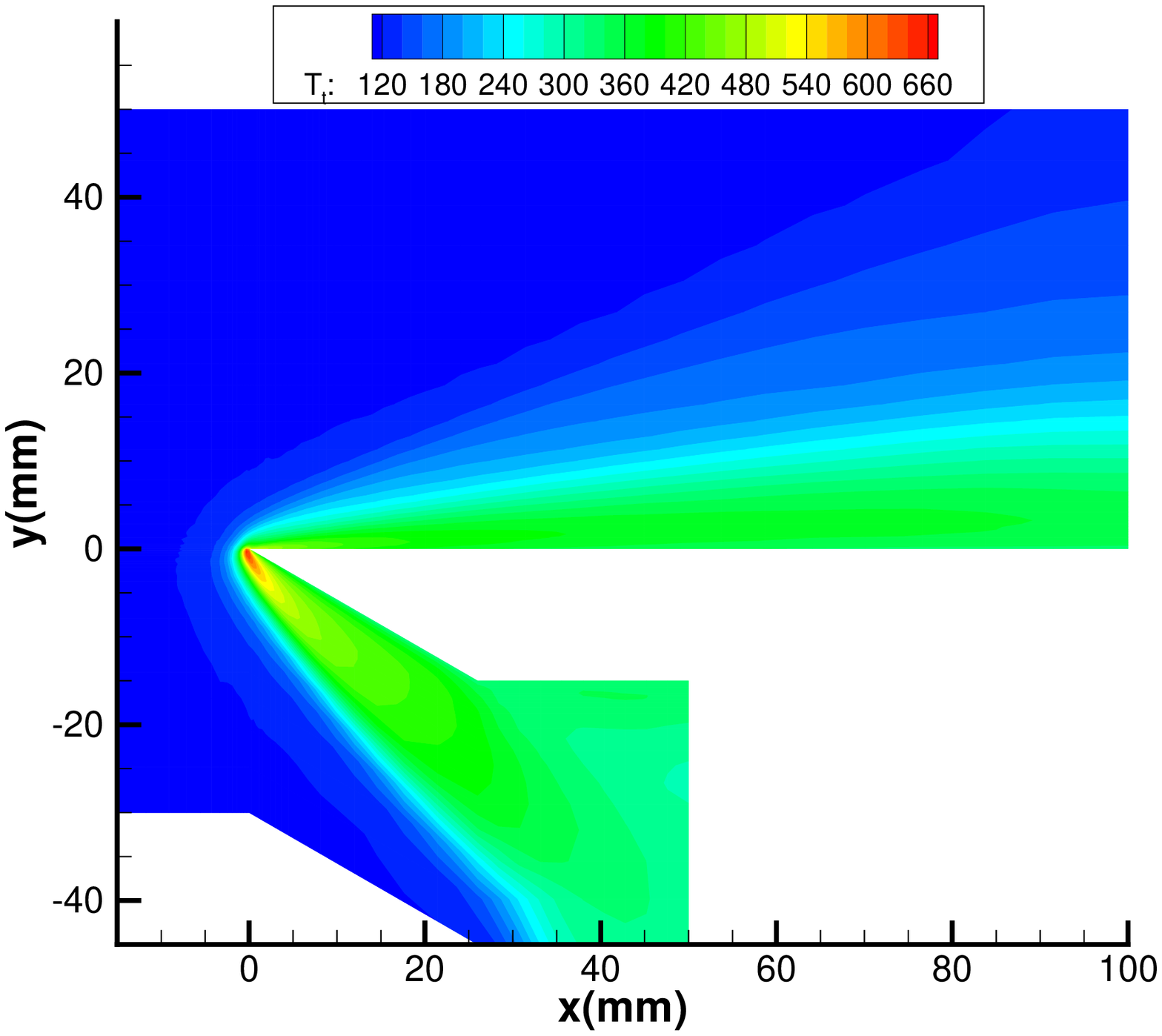}{d}
		\caption{(a)Density and (b)temperature (c)rotational temperature (d) translational temperature contour for the hypersonic flow passing a flat plate}
		\label{flatplate2d}
	\end{figure}
	\begin{figure}
		\centering
		\includegraphics[width=0.48\textwidth]{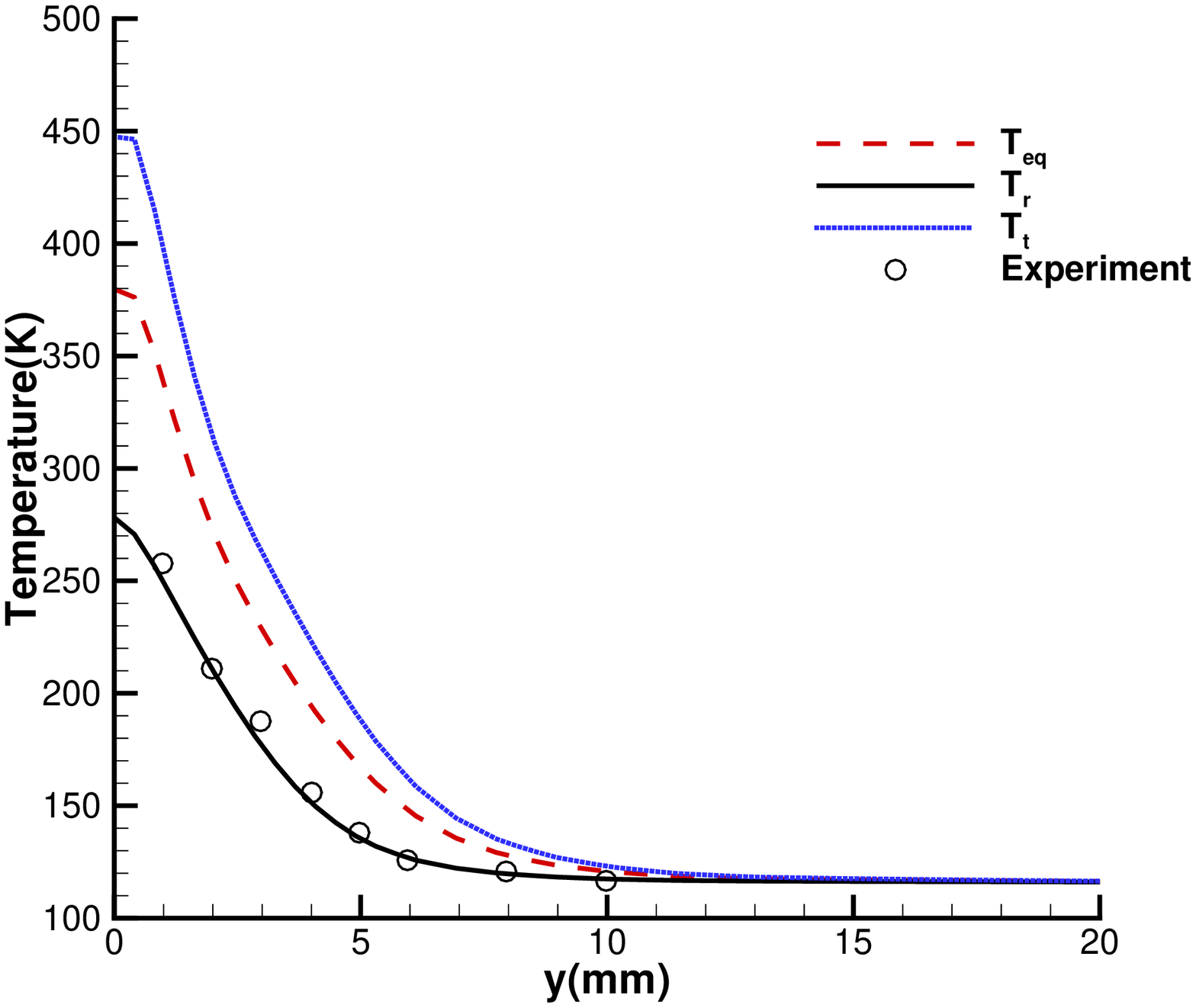}{a}
		\includegraphics[width=0.48\textwidth]{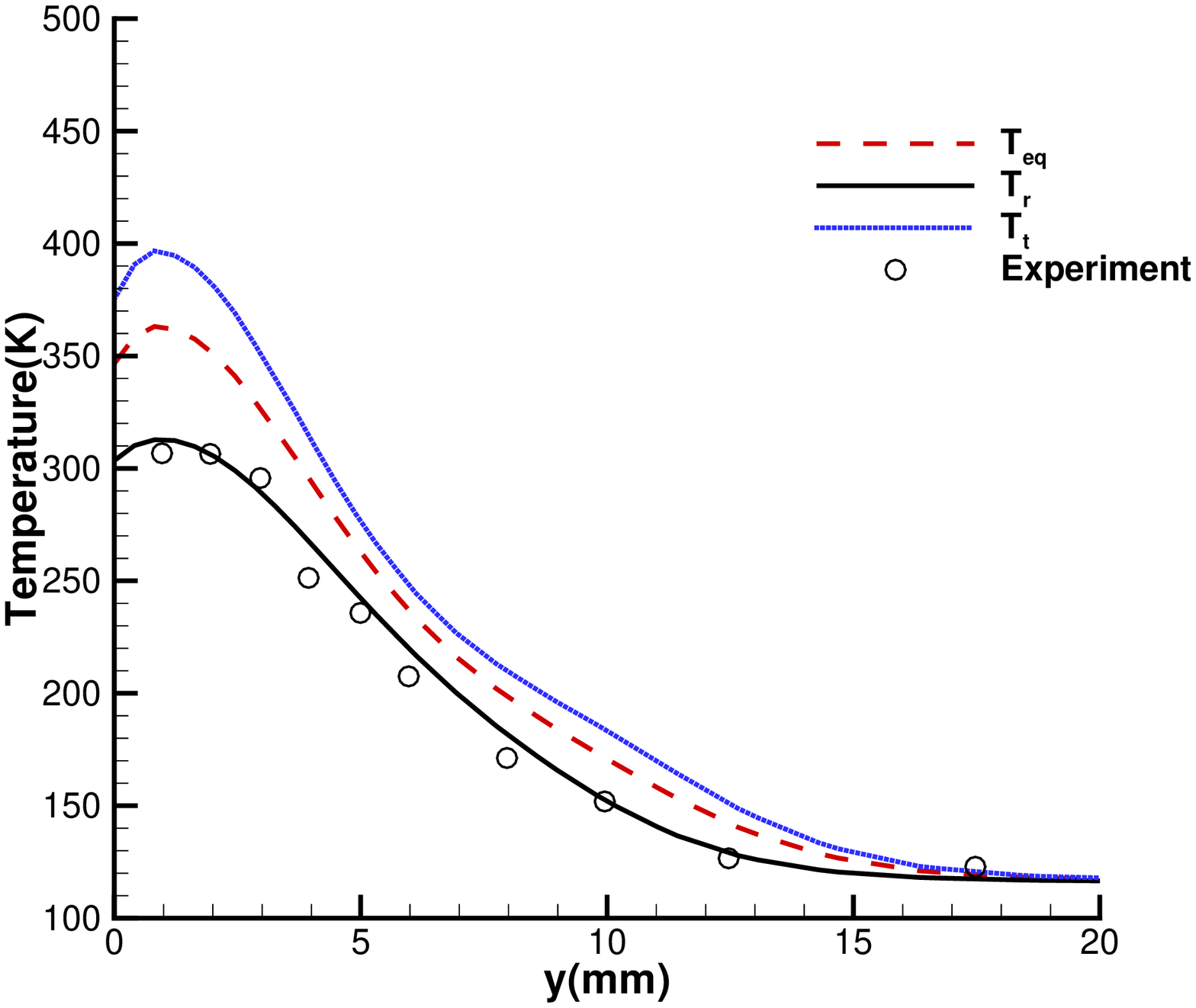}{b}
		\caption{Temperature profiles along vertical lines at (a) x = 5mm and (b) x = 20 mm. The experiment results are shown in symbol, and the UGKWP solutions are shown in line}
		\label{flatplatecomp}
	\end{figure}

	\subsection{Flow passing a sphere}
	The three dimensional case is about $\mathrm{M} = 4.25$ nitrogen gas flow passing through a sphere at $\mathrm{Kn}=0.031$ and $\mathrm{Kn}=0.121$ in the transition regime. The radius of sphere is $10^{-3}$m and the surface mesh of the sphere is divided into $6$ blocks with $16\times16$ mesh points in each block with a minimum surface spacing $6.255\time 10^{-5}$m. Diffusive wall boundary condition with a constant temperature $T_w = 302$ K is imposed on the surface. The computational domain is composed of $29700$ hexahedra with growth rate $1.1$ and smallest cell height $5 \times 10^{-5}$ m. The inflow is diatomic nitrogen gas with molecular mass
	$m = 4.65 \times 10^{-26} $ kg and diameter $d = 4.17 \times 10^{-10}$ m. The upstream flow temperature is set to be $T_{\infty} = 65$ K. The reference viscosity is given
	by the variable hard sphere (VHS) model with $\omega = 0.74$.
	For the case of $\mathrm{Kn}=0.031$, the time-averaging starts from $2500$ steps and continues for $13000$ steps with an initial field computed
	by 1000 steps GKS. The total computation takes $13500$ time steps, and runs $3.1$h on a workstation with (Dual CPU) Intel
	Xeon Platinum $8168$ at $2.70$ GHz with $48$ cores, which is around $85$ times faster than implicit-UGKS \cite{jiang2019implicit}. The distribution of density, velocity, temperature, and rotational temperature are shown in Fig. \ref{sphere}.
	For the case of $\mathrm{Kn}=0.121$, the time-averaging starts from $2500$ steps and continues for $17000$ steps with an initial field computed
	by 1000 steps GKS. The total computational time steps are $19500$ and the computation runs $32.1$h on the same work station with $48$ cores, which is around $10$ times faster than implicit-UGKS. The distribution of density, velocity, temperature, and rotational temperature are shown in Fig. \ref{sphere2}.
	Fig. \ref{sphereerror} shows the relative error with drag coefficient (Air) given by experiment\cite{wendtJF}.
	Apart from the above two cases, we also compute a hypersonic  case with $\mathrm{M} = 10$ and $\mathrm{Kn}=0.01$.
The distribution of density, velocity, temperature, and rotational temperature are shown in Fig. \ref{sphere3}.
In comparison with the previous supersonic Mach number $4.25$ cases, the computation for Mach $10$ case needs only $1.46$ hour with a personal $48$ cores workstation.
	The comparison of the drag coefficients between UGKWP results and the experiment measurements  are shown in Table. \ref{compDragError}.
	\begin{figure}
		\centering
		\includegraphics[width=0.48\textwidth]{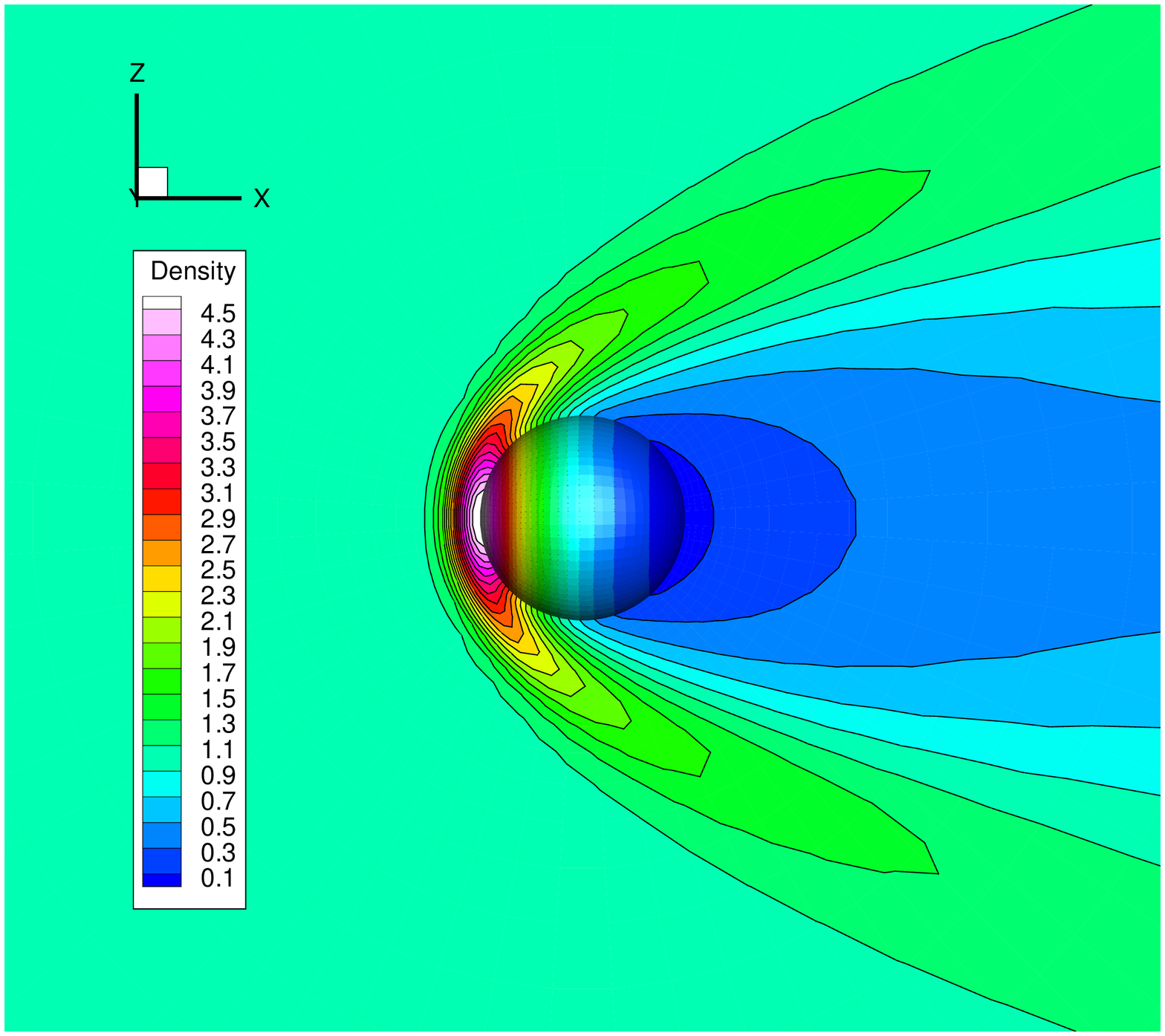}{a}
		\includegraphics[width=0.48\textwidth]{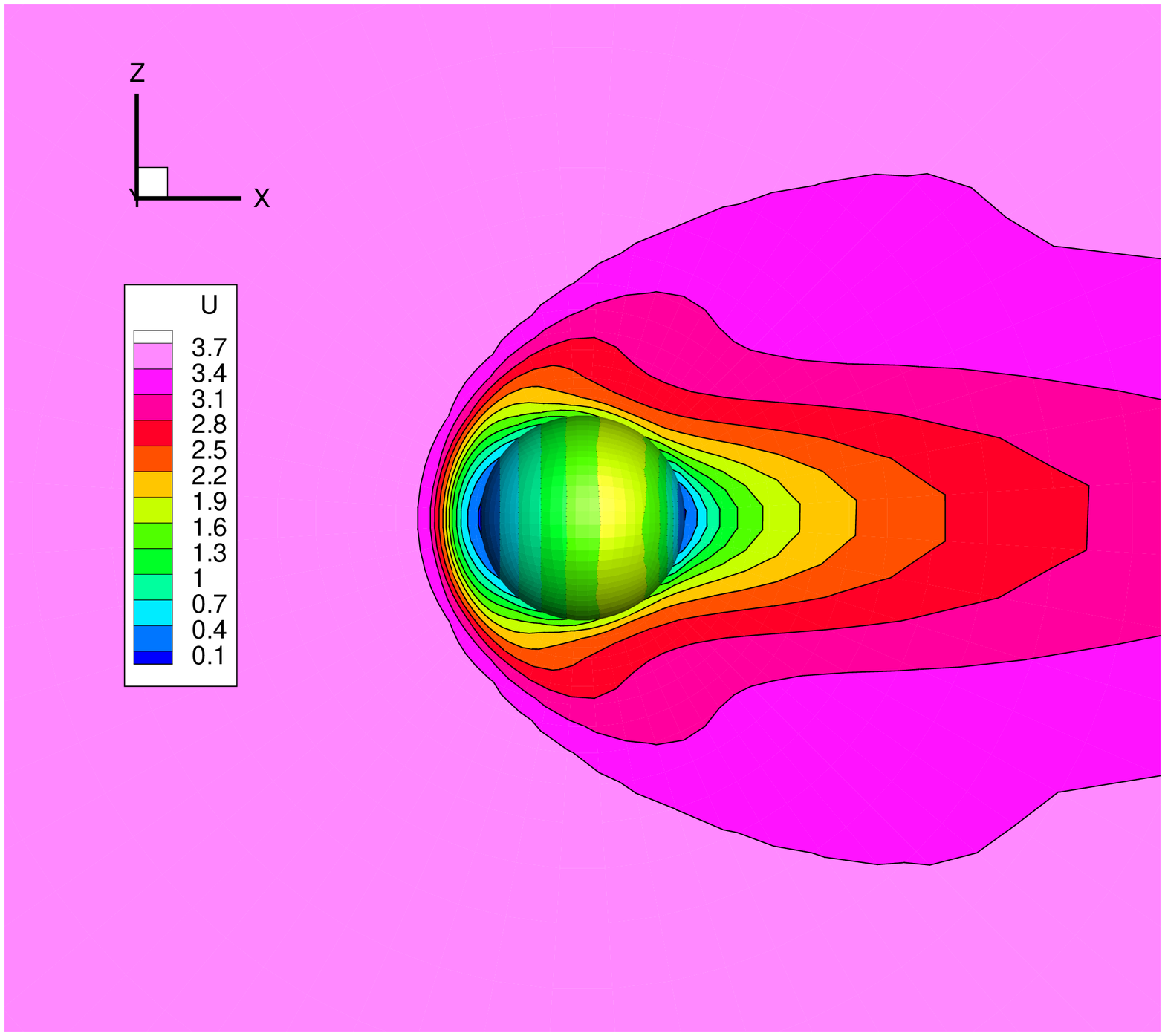}{b}
		\includegraphics[width=0.48\textwidth]{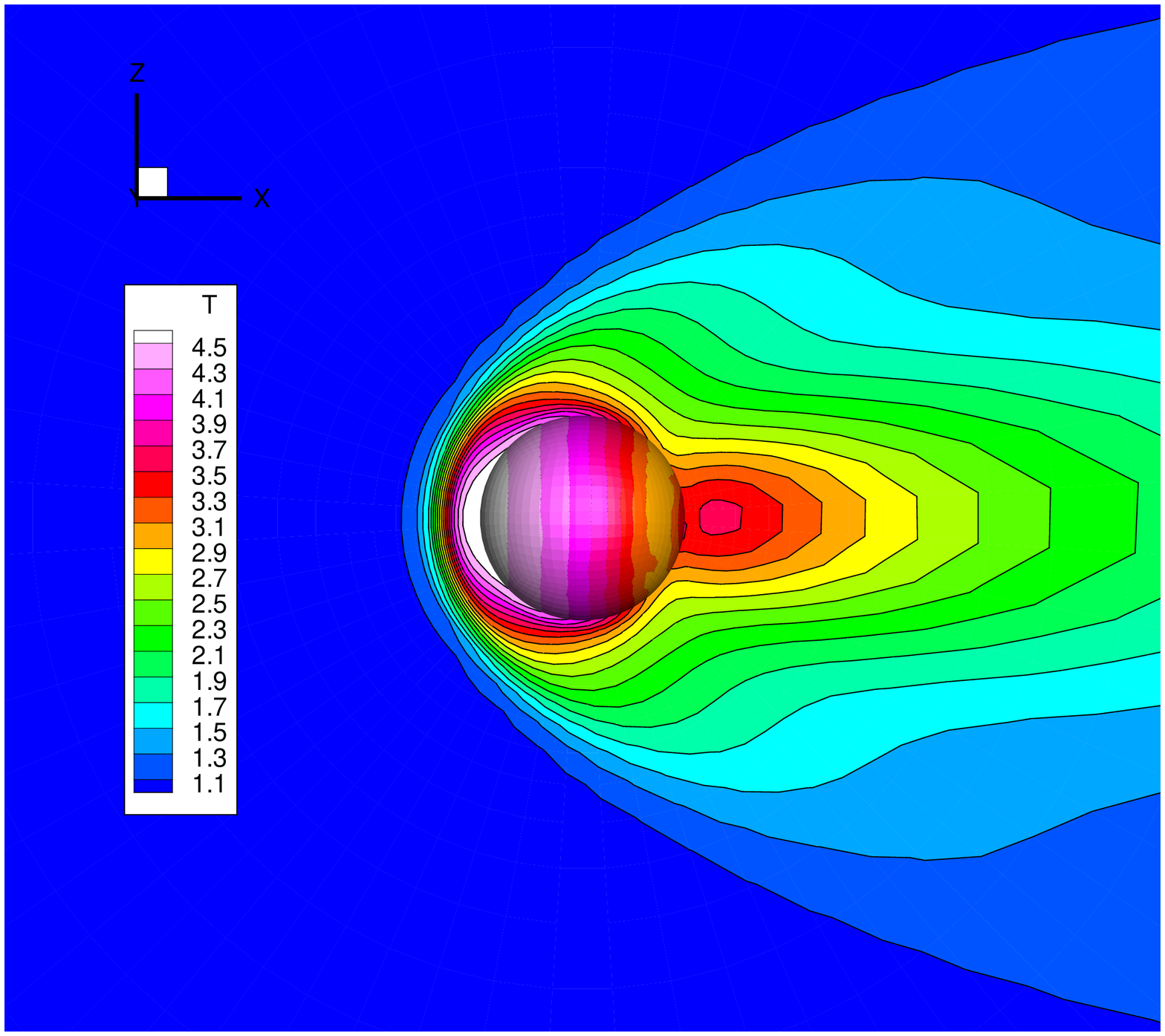}{c}
		\includegraphics[width=0.48\textwidth]{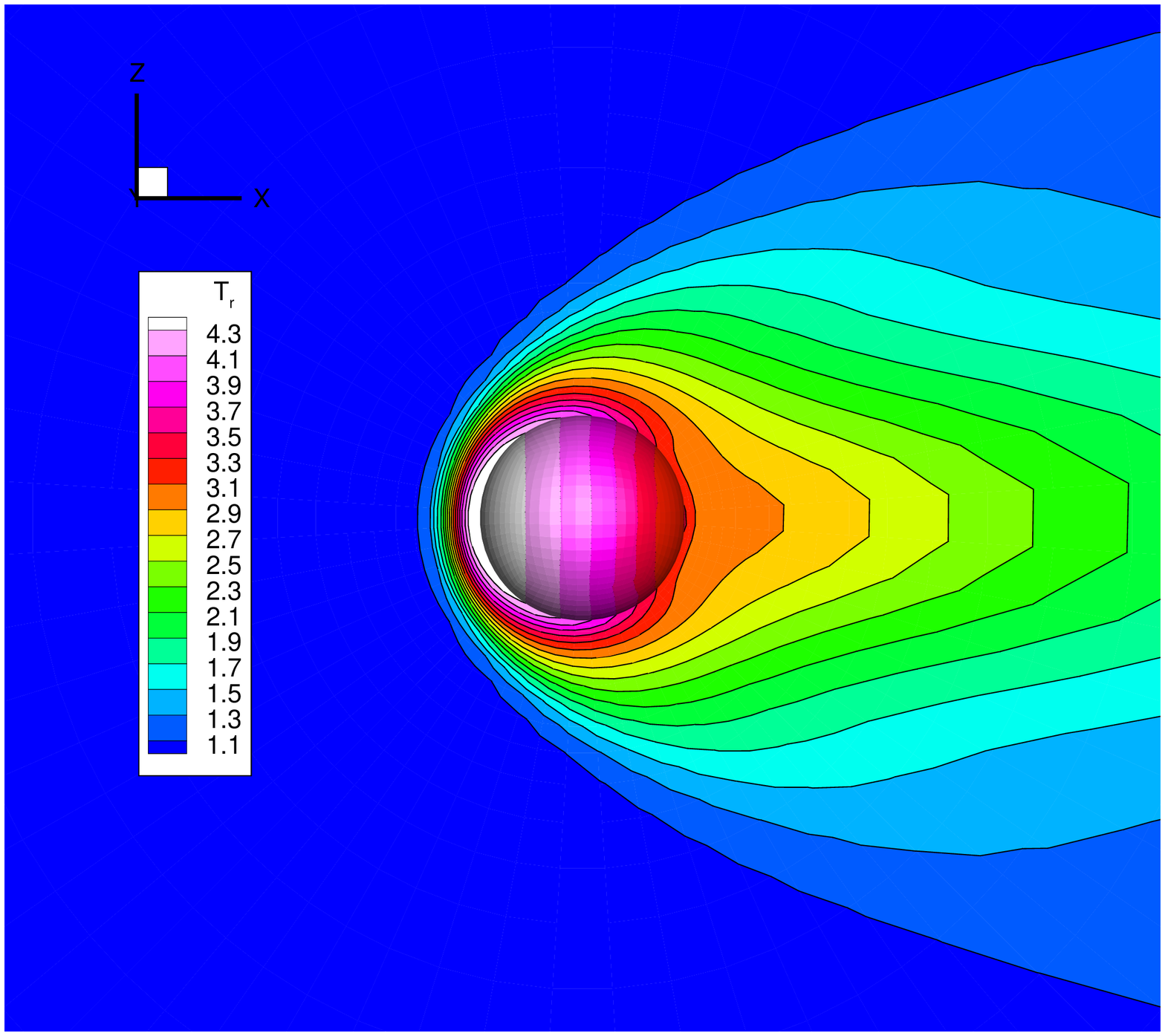}{d}
		\caption{(a)Density and (b)x direction velocity (c)temperature (d) rotational temperature contour for $Kn = 0.031$ and $\mathrm{M} = 4.25$. }
		\label{sphere}
	\end{figure}
	\begin{figure}
		\centering
		\includegraphics[width=0.48\textwidth]{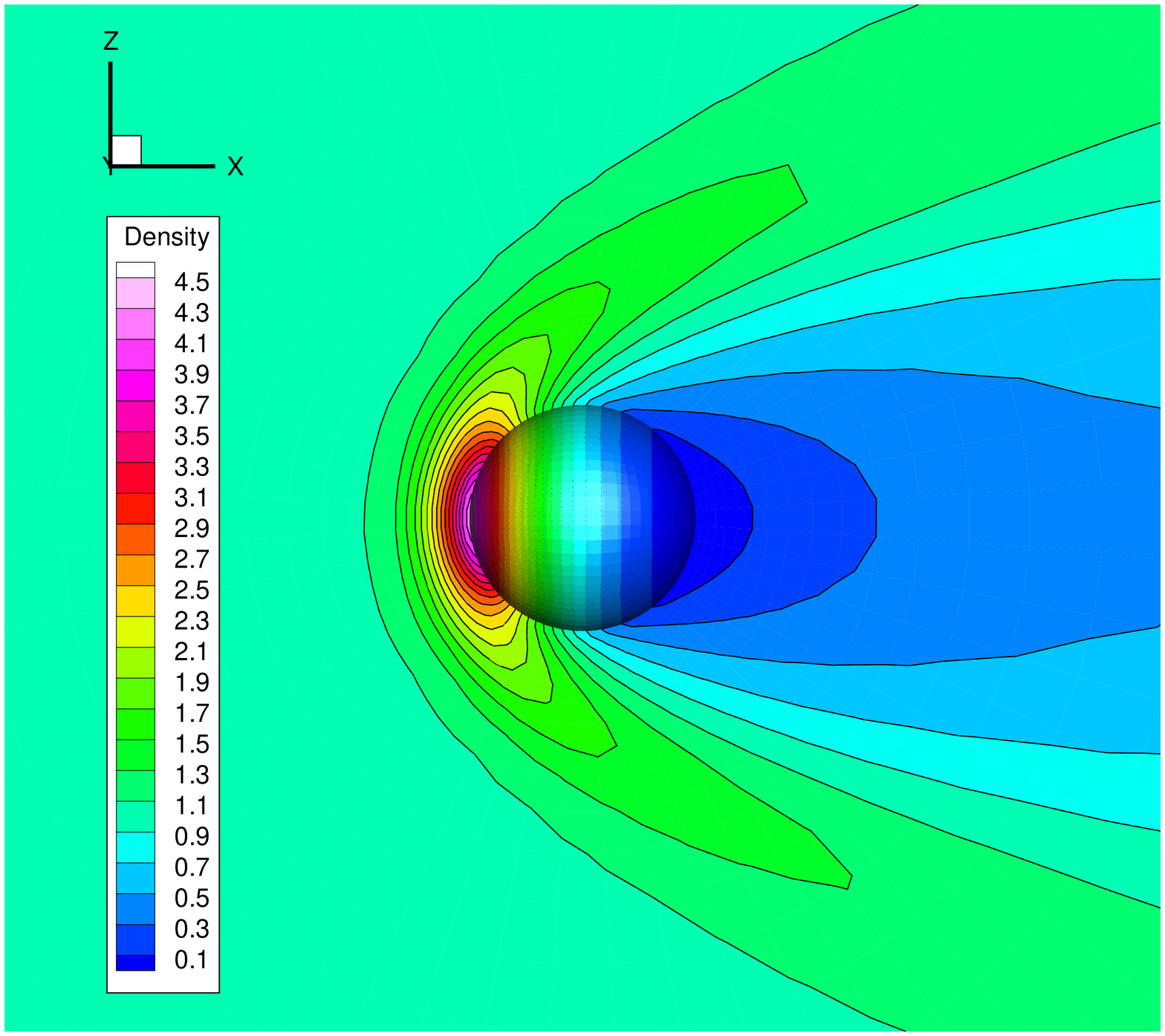}{a}
		\includegraphics[width=0.48\textwidth]{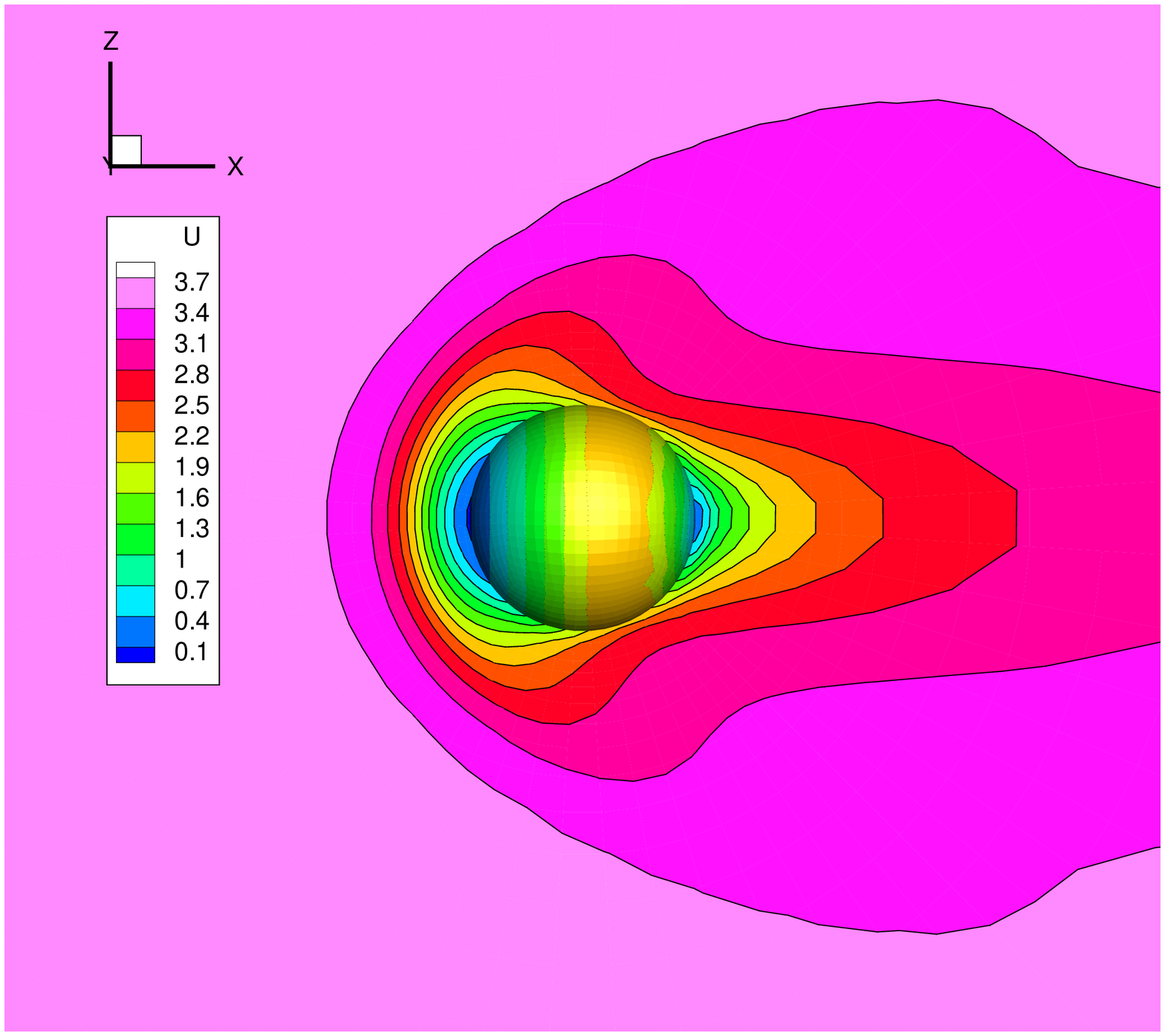}{b}
		\includegraphics[width=0.48\textwidth]{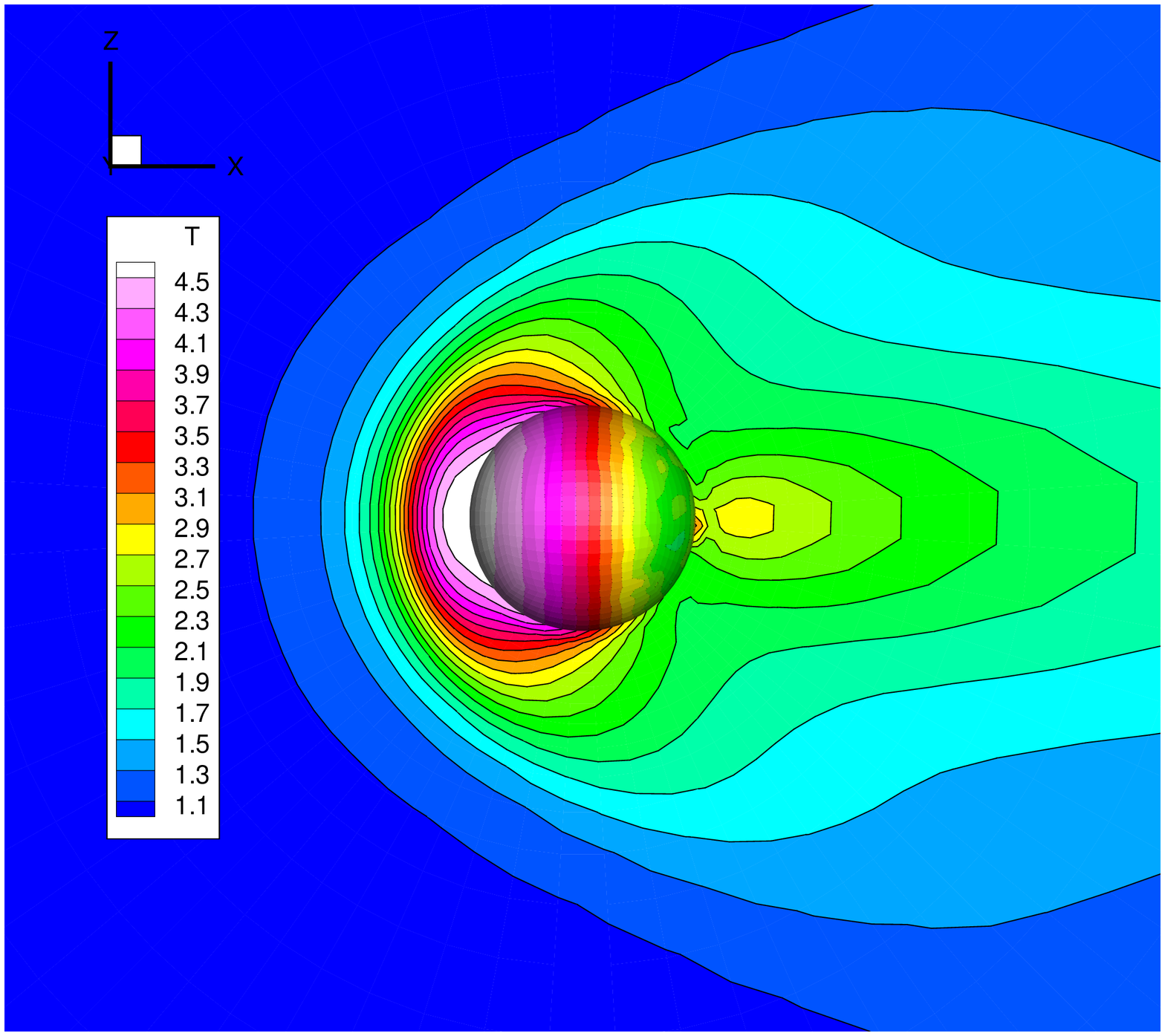}{c}
		\includegraphics[width=0.48\textwidth]{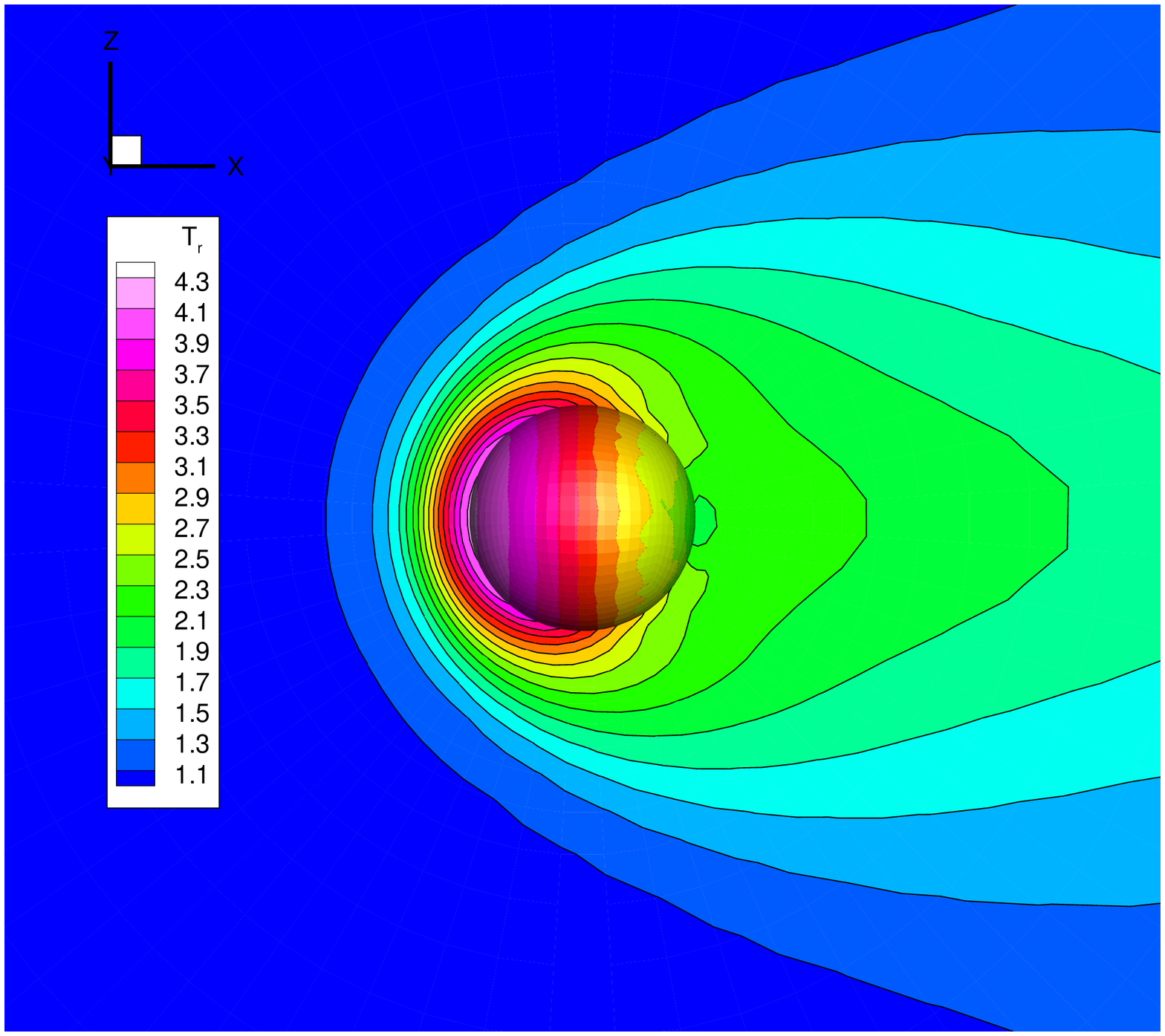}{d}
		\caption{(a)Density and (b)x direction velocity (c)temperature (d) rotational temperature contour for $Kn = 0.121$ and $\mathrm{M} = 4.25$. }
		\label{sphere2}
	\end{figure}

	\begin{figure}
		\centering
		\includegraphics[width=0.48\textwidth]{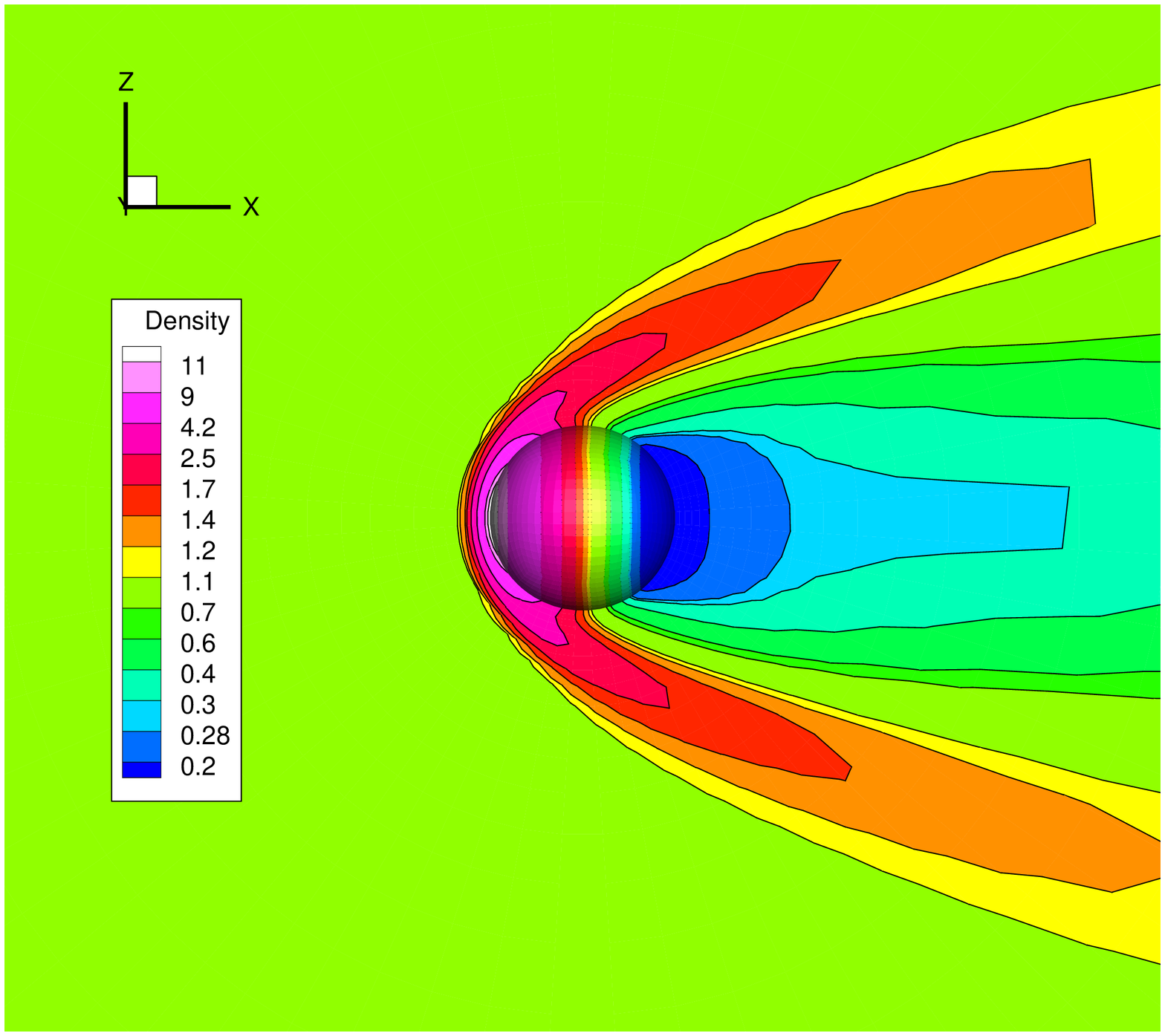}{a}
		\includegraphics[width=0.48\textwidth]{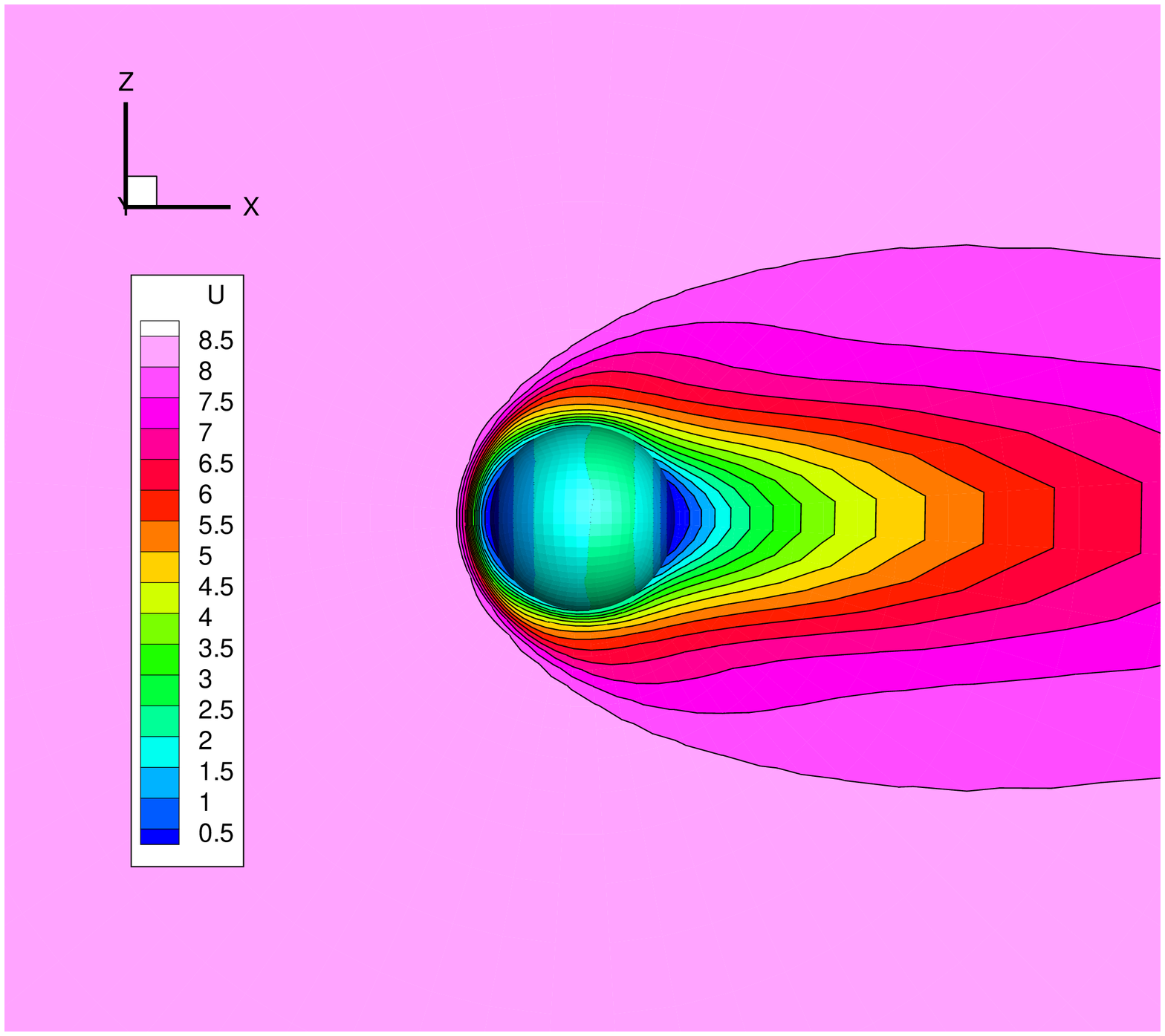}{b}
		\includegraphics[width=0.48\textwidth]{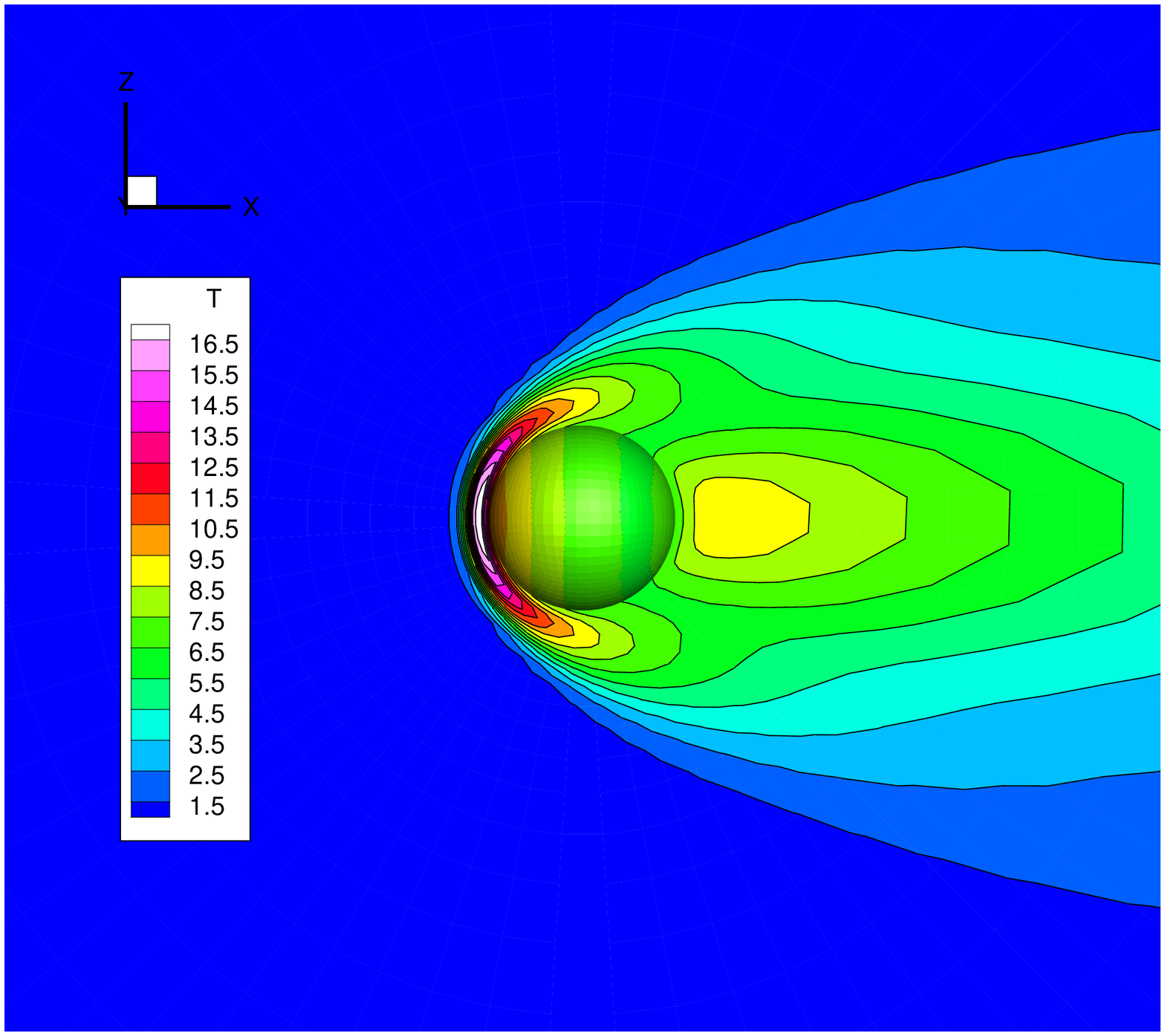}{c}
		\includegraphics[width=0.48\textwidth]{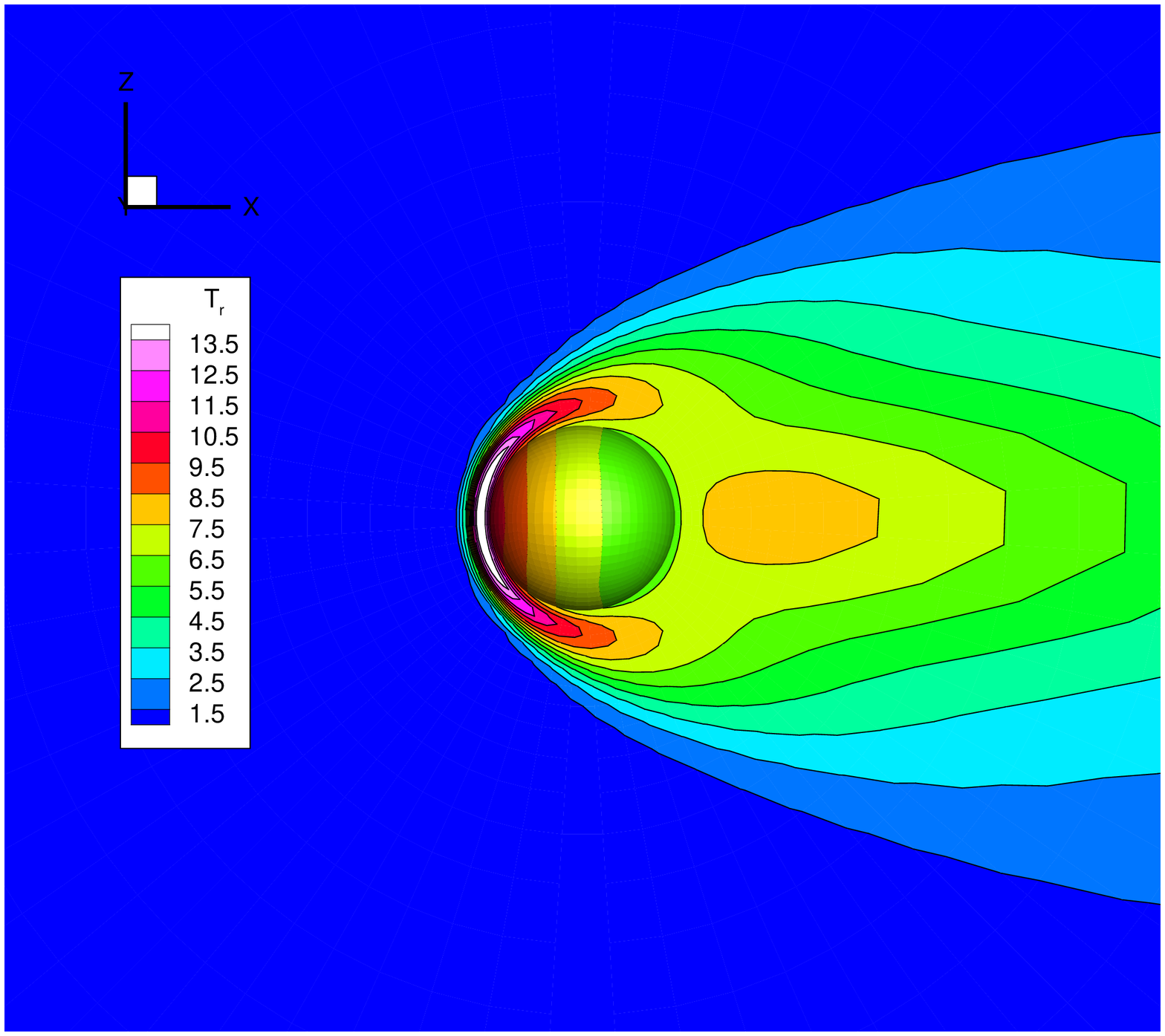}{d}
		\caption{(a)Density and (b)x direction velocity (c)temperature (d) rotational temperature contour for $Kn = 0.01$ and $\mathrm{M} = 10$. }
		\label{sphere3}
	\end{figure}

	\begin{table}[]
		\centering
		\begin{tabular}{|c|c|c|c|c|c|c|}
		\hline
		\multirow{2}{*}{$\mathrm{M}_\infty$} &
		\multirow{2}{*}{$\mathrm{Kn}_\infty$} &
		\multirow{2}{*}{Experiment(Air)} &
		\multicolumn{2}{c|}{UGKWP(Nitrogen)} &
		\multicolumn{2}{c|}{NS solution} \\ \cline{4-7}
		&       &      & Drag coefficient   & Relative error       & Drag coefficient & Relative error \\ \hline
		4.25 & 0.121 & 1.69 & $1.636\pm 0.0005$  & $-3.21\% \pm 0.03\%$ & 2.826            & $67.23\%$      \\ \hline
		4.25 & 0.031 & 1.35 & $1.346 \pm 0.0007$ & $-0.25\% \pm 0.05\%$ & 1.743            & $29.09\%$      \\ \hline
		10   & 0.01  & -    & $1.215 \pm 0.0001$ & -                    & -                & -              \\ \hline
		\end{tabular}
		\caption{Comparison of the drag coefficients}
		\label{compDragError}
	\end{table}

	\begin{figure}
		\centering
		\includegraphics[width=0.48\textwidth]{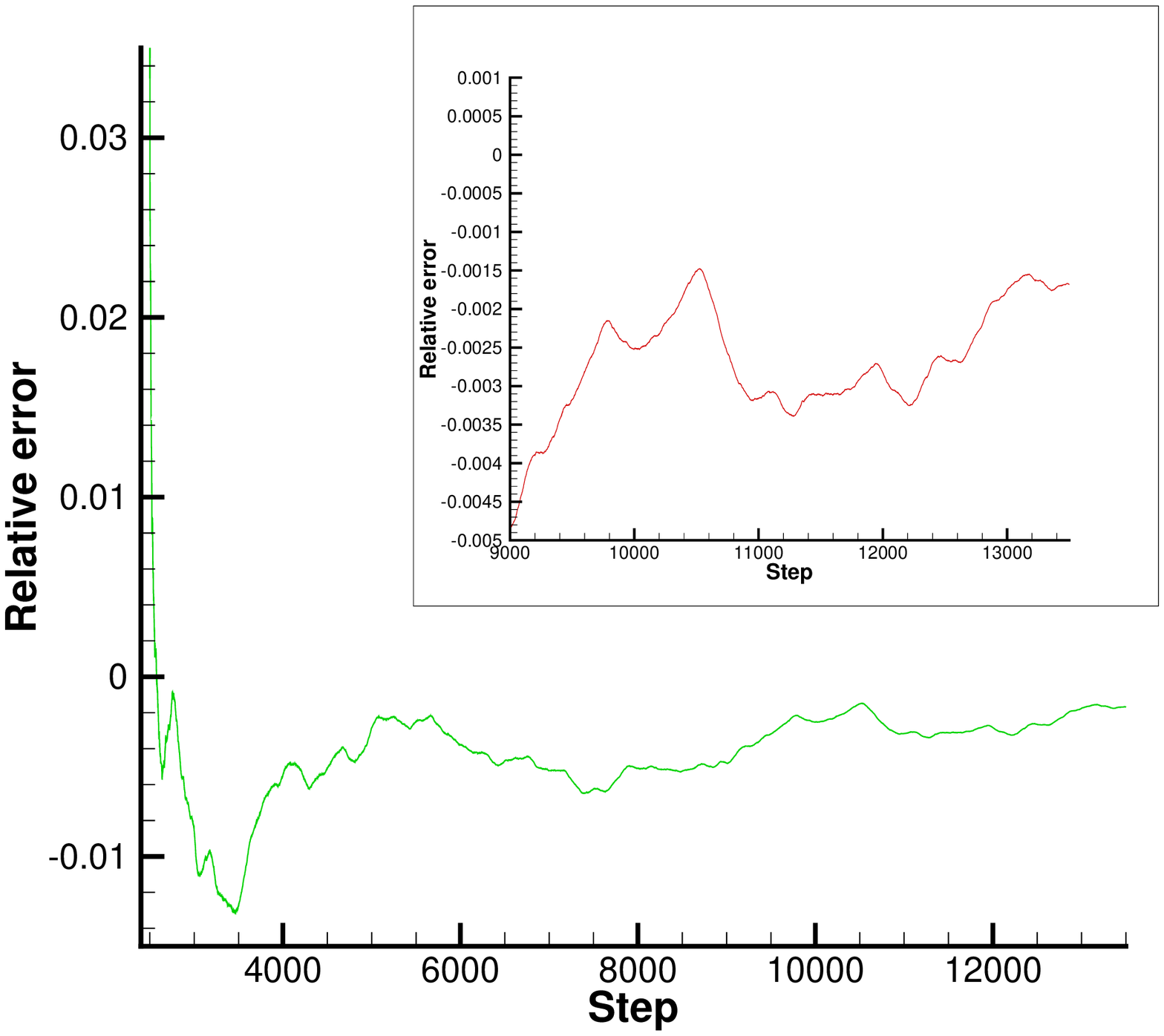}
		\includegraphics[width=0.48\textwidth]{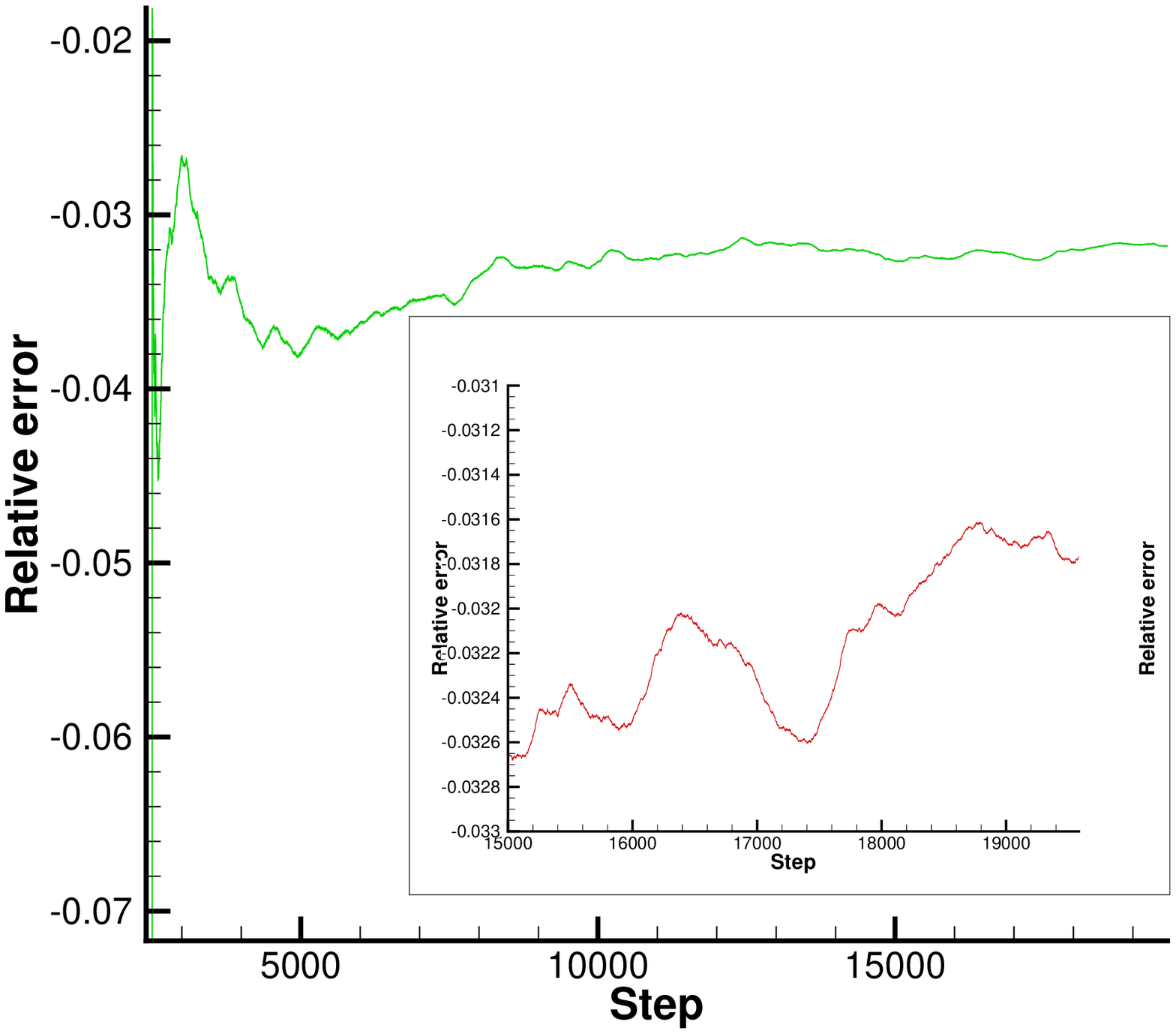}
		\caption{Relative error of the drag coefficient at $Kn = 0.031$ (left) and $Kn = 0.121$ (right).}
		\label{sphereerror}
	\end{figure}

	\section{Conclusion}\label{conclusion}
	In this paper, the unified gas-kinetic wave-particle method has been developed for diatomic gas, where the Rykov model is used for the molecular collision term with the exchange of translational and rotational energy.
Based on the direct modeling, the UGKWP constructs the discrete governing equations according to the cell's Kndsen number and computes gas dynamic solution in all flow regimes with a unified approach.
Instead of DVM approach in the UGKS, the gas distribution function in UGKWP is composed of the contribution from the particle and wave, where analytical solution can be obtained for the wave evolution. At the same time, the weights for distributing particle and wave are related to
$\exp(-\Delta t/\tau) \rho$ and   $(1-\exp(-\Delta t/\tau)) \rho$. As a result,  the UGKWP becomes a particle method in the highly rarefied regime $\Delta t \leq \tau$, and becomes a macroscopic NS solver in the continuum flow regime $\Delta t \gg \tau$. There is a smooth dynamic transition between different regimes in UGKWP.
Therefore, besides asymptotic property to the NS solver, the UGKWP has multiple efficiency preserving property for a multiscale flow problem,
such as hypersonic flow passing through a flying vehicle in near space with  $5$ to $6$ orders of magnitude difference in the local Knudsen number.
The calculation for a 3D problem at high speed and different Knudsen numbers can be conducted by UGKWP with a personal computer.
The UGKWP for diatomic gas has been validated  in many test cases.
Reasonable agreements have been obtained among UGKWP solutions, DSMC results, and experimental measurements.
In the future, the UGKWP will be further developed with the inclusion of vibrational mode and partially ionization \cite{liu2020plasma}.

	\section*{Acknowledgments}
	The current research is supported by National Numerical Windtunnel project and  National Science Foundation of China 11772281, 91852114.

	\bibliography{xxuaybib}
	\bibliographystyle{elsarticle-num}
	\biboptions{numbers,sort&compress}
	
	\clearpage

\end{document}